\def\th@plain{%
  \thm@notefont{}
  \itshape 
}
\def\th@definition{%
  \thm@notefont{}
  \normalfont 
}
\newtheorem{theorem}{\textsc{Theorem}}[section]
\newtheorem{proposition}[theorem]{\textsc{Proposition}} 
\newtheorem{lemma}[theorem]{\textsc{Lemma}}
\newtheorem*{definition*}{\textsc{Definition}}
\newtheorem*{example*}{\textsc{Example}}
\theoremstyle{remark}
\newtheorem*{remark*}{Remark}
\newcommand{\E}{\mathbb{E}}
\newcommand{\R}{\mathbb{R}}
\newcommand{\F}{\mathcal{F}}
\newcommand{\Cov}{\mathrm{Cov}}
\renewcommand{\d}{\partial}
\renewcommand{\P}{\mathbb{P}}
\def\blfootnote{\gdef\@thefnmark{}\@footnotetext}
\title{\bfseries{The Paradox of Just-in-Time Liquidity in Decentralized Exchanges: More Providers Can Lead to Less Liquidity} \blfootnote{The authors gratefully acknowledge the support of  Ripple through the University Blockchain  Research Initiative and of the Stellar Development Foundation.}}
\author{Agostino Capponi\footnote{\scriptsize Columbia University (Department of Industrial Enginering and Operations Research). Email: ac3827@columbia.edu.}, \  Ruizhe Jia\footnote{\scriptsize Columbia University (Department of Industrial Enginering and Operations Research). Email: rj2536@columbia.edu.}, \ and Brian Zi Qi Zhu\footnote{\scriptsize Columbia University (Department of Industrial Enginering and Operations Research). Email: bzz2101@columbia.edu.}}
\date{}
\begin{document}

\maketitle

\begin{abstract}
We study Just-in-time (JIT) liquidity provision in blockchain-based decentralized exchanges. A JIT liquidity provider (LP) monitors pending swap orders in public mempools of blockchains to sandwich orders of their choice with liquidity, depositing right before and withdrawing right after the order. Our game-theoretic model with asymmetrically informed agents reveals that a JIT LP's presence does not always enhance liquidity pool depth, as one might expect. While passive LPs face adverse selection by informed arbitrageurs, a JIT LP's ability to detect pending orders for toxic order flow prior to liquidity provision lets them avoid being adversely selected. JIT LPs thus only provide liquidity to uninformed orders and crowd out passive LPs when order volume is not sufficiently elastic to pool depth, possibly reducing overall market liquidity. We show that using a two-tiered fee structure which transfers a part of a JIT LP's fee revenue to passive LPs or allowing for JIT LPs to compete \`{a} la Cournot are potential solutions to mitigate the negative effects of JIT liquidity.

\end{abstract}

\vfill

\section{Introduction}\label{sec:intro}

Market liquidity is essential in financial markets, as it significantly affects the price impact of trades and transaction costs for investors. High market liquidity thus plays a key role in maintaining the stability and efficiency of trading, while low market liquidity results in decreased trading volumes and diminished gains from trade. Consequently, regulatory bodies like the U.S. Securities and Exchange Commission (SEC) and the Bank for International Settlements (BIS) have raised concerns about current market liquidity and the design of existing market structures.

Regulators are constantly in search of innovative solutions to enhance liquidity provision. Advancements in  blockchain technology have spurred significant developments in Decentralized Finance (DeFi), notably in decentralized exchanges (DEXs) that employ automated market makers (AMMs), positioning them at the forefront of financial innovation. One prominent example is the Bank for International Settlements' Project Mariana, which explores the incorporation of AMMs in foreign exchange, especially in tandem with central bank digital currencies (\cite{BIS}). Unlike conventional market structures, AMMs bring unique benefits, such as automated pricing and risk sharing among liquidity providers (\cite{Malinova_Park_2023}). This innovative market model shows great promise in transforming the landscape of liquidity provision within financial markets.

However, a key obstacle to the broader adoption and effectiveness of this innovation is the profitability challenge faced by liquidity providers. Analysis of Uniswap v3 liquidity positions shows that half of these positions are unprofitable (\cite{Loesch21}). Additionally, game-theoretical models indicate that liquidity providers are subject to infrastructure rent extraction, which weakens their incentive to supply liquidity (\cite{AMMAdoption21}). This problem may lead to insufficient market depth in AMMs, resulting in higher transaction costs and lower trading volumes. Recent data from September 2023 indicates a stagnation in trading volume and market depth in the liquidity pools of AMMs (\cite{2023_Report}). Addressing this issue of liquidity depth is vital for AMMs to effectively lower transaction costs.

In response to these challenges, a novel approach known as {\it Just-in-time (JIT) liquidity} has emerged (see \cite{AdamsUniswap} for details). JIT liquidity leverages the unique order execution mechanism of public blockchains, where transactions are processed in discrete blocks and visible in a public memory pool (mempool) before execution. This transparency enables certain liquidity providers (LPs) to preview pending market orders, respond with matching liquidity, earn fees from executing swaps, and then withdraw their liquidity, all within a single block.\footnote{This mechanism resembles that of a sandwich attack analyzed in \cite{CappMEV}, with the difference that the order is now sandwiched between a liquidity supply and a liquidity withdrawal action.} This ability to pre-screen and strategically select which orders to supply liquidity for significantly mitigates adverse selection risk. This feature, in sharp contrast to the lack of transparency in centralized exchanges where market orders are not visible beforehand, highlights the transformative potential of JIT liquidity.

With this new form of liquidity provision, liquidity providers in the DEX ecosystem can be categorized into two types. \textit{Passive} liquidity providers commit their tokens over multiple blocks, lacking choice over the market orders they face in those blocks, and are thereby exposed to risks from market volatility. \textit{Active} providers use the JIT liquidity mechanism, strategically depositing and withdrawing in response to pending market orders. This strategy grants them a {\it second-mover advantage}, allowing them to avoid toxic order flows and pickoff risks. 

Can Just-in-time liquidity enhance overall market liquidity?  At first glance, it would appear to alleviate costs in liquidity provision, particularly by allowing market makers to judiciously choose their participation based on visible order flow.  Moreover, the addition of active liquidity provision to existing passive provision would seemingly indicate a higher amount of market liquidity in the system. We instead argue that this perspective might be overly simplistic. Our analysis, utilizing a game-theoretical framework, suggests that the influence of high-frequency market makers on aggregate liquidity is intricately connected to how uninformed order flow responds to liquidity depth.  

We show that, for tokens with a high elasticity of uninformed order flow with respect to pool depth, active and passive liquidity providers are {\it strategic complements}. Active providers, while capturing a substantial share of transaction fees, indirectly support passive providers by boosting overall trading volume. Their entry reduces transaction costs, attracts more trader order flow, and benefits passive providers, encouraging their continued participation in DEXs. Consequently, this interaction leads to an increase in overall aggregate market liquidity.

In contrast, when the elasticity of uninformed order flow with respect to pool depth is low,   active and passive liquidity providers become {\it strategic substitutes}. In this case, active providers cannot induce a sufficient increase in overall trading volume that offsets the reduced share of fees of passive providers. This imbalance can lead to a decrease in passive providers' engagement in liquidity provision, thus undermining aggregate market liquidity. 

The potential exit of passive liquidity providers poses a significant risk, notably diminishing visible pre-trade liquidity in DEXs. This reduction in the transparency of liquidity, coupled with the unpredictability of additional contributions by active providers, can lead to heightened price impacts and price uncertainty. Such market conditions likely deter uninformed traders from trading, negatively affecting their participation in DEXs and consequently reducing overall gains from trade. 

To optimize the benefits of this novel liquidity provision mechanism in AMM without undermining market liquidity, a reevaluation of the market design is needed. We propose a two-tiered fee system implemented via smart contracts to balance incentives between active and passive liquidity providers. This system mirrors the rebates and reduced costs found in centralized exchanges for designated market makers. Active market makers, identified by their rapid token redemption, would share a portion of their fees with passive providers, who hold tokens longer. This mechanism aims to mitigate the crowding out effect and compensate passive liquidity providers for the risk of always being in the market, ensuring both types of market makers are adequately incentivized. We solve for the optimal fee structure that maximizes gains from trade and ensures an increase in the aggregate market liquidity after the entry of active LPs.

\subsection{Literature Review}

Our study expands the existing body of literature on DEXs and AMMs. Existing literature has so far focused on analyzing the benefits and costs of passive
liquidity provision on DEXs. \cite{AMMAdoption21} argue that, under the current blockchain order execution mechanism, the arbitrage losses incurred by passive liquidity providers are unavoidable and proposes an optimal design of the DEX pricing functions that mitigate these losses and maximize social welfare. \cite{anotherAMM} show that the liquidity invariance pricing function enables front-running, and increases traders' costs. \cite{anotherAMM2} compare centralized and decentralized exchanges, accounting for information asymmetry frictions. \cite{Aoyagi2021CoexistingEP} {investigates market dynamics and trading patterns stemming from the coexistence of centralized and decentralized exchanges.} \cite{dexcexcompare} compare trading costs and price efficiency of centralized and decentralized exchanges, advocating for the growing competitiveness of AMMs. 
\cite{Hasbrouck_Rivera_Saleh_2022} show that setting higher trading fees at a DEX would reduce the price impact of trades, and consequently increase trading volumes. \cite{Milionis_Moallemi_Roughgarden_Zhang} provide a ``Black-Scholes formula'' for AMMs, which quantifies in closed-form the adverse selection costs faced by liquidity providers.  \cite{Hasbrouck_Rivera_Saleh_2023} and \cite{Milionis_Moallemi_Roughgarden_Zhang} quantify the ``optionality'' relinquished by pool deposits, under the assumption of an external price process that
that restricts AMM liquidity providers to fixed durations, preventing position adjustments or liquidity withdrawals.  {A recent study by \cite{Malinova_Park_2023} discuss the potential application of AMMs in equity trading, suggesting AMMs as alternatives to traditional limit order books. 

Relative to the above surveyed studies, our paper breaks new ground by analyzing a novel model of active liquidity provision, which has not been previously explored. We assess whether JIT liquidity can alleviate frictions inherent in the current AMM designs and enhance overall market liquidity. 

Our research is also broadly related with the market microstructure literature, particularly regarding speed heterogeneity and the impact of high-frequency trading (HFT) in traditional centralized exchanges. \cite{Han_Khapko_Kyle_2014} contribute to this discourse by demonstrating that slower market makers, in comparison to HFT entities, exhibit delayed responses in order cancellations, thereby becoming more susceptible to toxic flow. In their paper, they show that the presence of HFTs leads to wider spreads and reduced liquidity, or HFTs might entirely supplant slower market makers without changing the spread, potentially making full crowding out beneficial in certain scenarios on centralized exchanges. Our paper, however, highlights the potential risks associated with the crowding out of low-frequency market makers, a concern that warrants careful consideration in the evolving landscape of DEXs and AMMs. We also refer to the survey paper by \cite{MenkveldARFE} for a discussion on the adverse selection costs imposed by faster market makers on slower market makers.

Our paper also contributes to the body of literature on JIT liquidity. \cite{AdamsUniswap} describes the mechanism of JIT, and provides statistics on the success and failure rates of JIT transactions, along with information about the number of accounts involved. \cite{AdamsUniswapSlippage} conducts an in-depth study of various types of slippage on Uniswap, identifying that an average million dollar-valued order has a price improvement of 0.6 bps due to JIT liquidity.  
\cite{DemystifyingJIT} conducts an empirical study of over 30,000 JIT liquidity provision instances, analyzing metrics such as the liquidity-to-swap volume ratio, share dilution for passive LPs, and price improvement for traders.

\subsection{Institutional Details}

A JIT liquidity provider would execute the following actions: (i) spot a pending uninformed swap order, (ii) add liquidity to the pool, (iii) letting the swap execute, and (iv) remove the liquidity from the pool. This sequence of transactions would take place in a single block. Typically, to submit a transaction, an LP must broadcast it in the peer-to-peer blockchain network and bid a priority fee. Once the transaction is received by the blockchain validators\footnote{Validators ensure the authenticity of transactions, incorporate them into new blocks, and then add these blocks to the chain, thereby earning the priority fees corresponding to those transactions.}, it becomes a pending order in their mempools, visibile to all users of that blockchain. At discrete times, one validator is chosen to append the next block to the chain. As block space is limited, the validator will execute orders in her mempool in descending order of priority fees. In practice, JIT LPs pay fees to the block validators directly in return for their liquidity sandwich being included, bypassing the transaction fee auction.

For example, suppose there is \$10,000 of liquidity in the pool concentrated in a \$0.01 price range. A liquidity provider sees that a user has submitted a swap of \$5,000 worth of Ethereum tokens for Bitcoin tokens, which incurs a trading fee of \$100. Within the same block, the JIT LPr adds \$90,000 of liquidity in that small range which now represents 90\% of liquidity in the range. This means that the JIT LP would receive 90\% of the \$100 fee, leaving the passive LP with just 10 dollars compared to his 90 dollars (minus gas fees). We refer to \cite{AdamsUniswap} for a  more detailed introduction to JIT liquidity.

\section{Model}\label{sec:model}

There are two types of liquidity providers: passive LPs and a JIT LP. There are three types of traders: an informed trader, an uninformed trader, and an arbitrageur. The agents play a sequential game of liquidity provision and trading on an AMM. The AMM determines the execution price of a swap between two coins: a risky coin $R$ initially priced at $p>0$ and a stable coin $S$ priced at 1 that acts as a numeraire. 

\subsection{Automated Market Maker}

\paragraph{Pricing Function.} Let $r_0$ and $s_0$ denote the initial reserves of risky and stable coins, respectively. The AMM employs the Uniswap v3 pricing function, defined as 
\begin{align*}
    F(r,s;a,b) = (r+b^{-1/2}(r_0s_0)^{1/2})(s+a^{1/2}(r_0s_0)^{1/2}),
\end{align*}
where parameters $a$ and $b$ set the liquidity provision price range $[a,b]\subseteq[0,\infty]$, ensuring that the price $p$ falls within this range. 
Notably, when $a=0$ and $b=\infty$, this function aligns with the constant product model of Uniswap V2. In contrast, Uniswap v3 introduces flexibility for LPs to choose their active liquidity provision range. For our analysis, however, we assume that LPs are constrained to deposit within a predefined range, $[a,b]$, rather than allowing them to select this range endogenously.

Liquidity providers must deposit coins at the fundamental exchange rate, i.e.,\!
\begin{align*}
    \frac{F_r(r_0,s_0;a,b)}{F_s(r_0,s_0;a,b)} = p,
\end{align*}
where $F_r$ and $F_s$ denote, respectively, the partial derivatives of the function $F$ with respect to $r$ and $s$. Note that  $F_r/F_s$ is the rate at which an infinitesimal amount of risky coins is swapped for stable coins (see also \cite{AMMAdoption21}).

\paragraph{Invariance Relationship.} The AMM mandates that the reserve levels of the liquidity pool satisfy the invariant condition
\begin{align*}
    F(r,s;a,b) = r_0s_0,to deposi
\end{align*}
before and after any trade. In addition, the trader must pay a trading fee proportional to the quantity of risky or stable coins that are to be swapped. We denote the proportional fee rate by $f$ and assume that this fee goes directly to the LPs.\footnote{Many studies in the literature, including Lehar and Parlour (2021) and Hasbrouck, Rivera, and Saleh (2022), make similar assumptions. In some exchanges such as Uniswap, the fee is incorporated into the liquidity pool through the creation of new tokens, as captured for instance in \cite{AMMAdoption21}.} We set the gas fee paid by the agents to prioritize their execution to zero. Under this assumption, JIT liquidity providers always enter the market, which is the case of interest for our study.\footnote{If there is a nonzero priority fee, JIT liquidity providers will deposit only if their earnings from fees and the price impact of trades outweigh the predetermined priority fee.}

\subsection{Sequential Game}

\paragraph{Period 1: Passive LPs Decide Whether to Provide Liquidity.}

In Period 1, $N$ identical passive LPs, indexed by $i\in[N]$, arrive. Each passive LP is endowed with $e_P/N$ risky coins and $pe_P/N$ stable coins, and decides on whether or not to contribute to the liquidity pool. If a passive LP decides to contribute, then they deposit their entire endowment to the pool. Specifically, passive LP $i$ deposits $d^{(i)}_P\in\{0,e_P/N\}$ risky coins and $pd^{(i)}_P$ stable coins. The total amount of passive liquidity provided is $d_P\equiv \sum_{i\in[N]}d^{(i)}_P$ risky coins and $\sum_{i\in[N]}pd^{(i)}_P=pd_P$ stable coins.

\paragraph{Period 2: Arrival of Informed or Uninformed Traders.}
In Period 2, one of two mutually exclusive and collectively exhaustive events occurs:

\begin{itemize}
\item An informed trader, possessing large amounts of both coin types and a perfect predictive signal for the future price $p'$ of the risky asset realized in Period 5, arrives with probability $\alpha\in[0,1]$. If his private signal is that the price $p'$ of the risky coin will depreciate, the trader  exchanges  $q_R\in\R_+$ risky coins for stable coins. Otherwise, he would swap $q_S\in\R_+$ stable coins for risky coins. The swap order $(q_R,q_S)\in\R_+^2\backslash\R_{++}^2$ is then submitted to the blockchain's public mempool.

\item An uninformed trader, also endowed with a large amount of both coin types, arrives in period 2 with probability $(1-\alpha)$. Conditional on arrival, he either likes risky coins or stable coins. He trades for liquidity reasons only with his private valuation $p_U$ of the risky coins given by
\begin{align*}
p_U = \begin{dcases}
\zeta_U^{-1}p & \text{with probability $\psi_U$} \\
\zeta_Up & \text{with probability $(1-\psi_U)$},
\end{dcases}
\end{align*}
where $\zeta_U\in(1+f,\infty)$. The uninformed trader also decides amounts $q_R\in\R_+$ or $q_S\in\R_+$ to swap such that $(q_R,q_S)\in\R_+^2\backslash\R_{++}^2$ and submits the swap order to the public mempool. 

\end{itemize}

\paragraph{Period 3: JIT LP Decides Whether to Provide Liquidity.}

In Period 3, a JIT LP, endowed with large amounts of both coins, arrives with probability $\pi\in[0,1]$. Conditional on arrival, the JIT LP views the swap order in the public mempool and decides on the amount of liquidity to provide. We assume that the JIT LP is sophisticated and can determine whether an informed or uninformed trader submitted the swap order. When facing an informed trader, the JIT LP deduces the informed trader's predictive signal based on the type of coin swapped; when facing an uninformed trader, the JIT LP deduces that there will be no off-chain price shock. Specifically, the JIT LP submits an order to deposit $d_{J}\in\R_+$ risky coins and $pd_{J}$ stable coins right before the swap, and an order to withdraw its share of the pool right after the swap.

\paragraph{Period 4: Settlement of Transactions.}

In Period 4, the block containing the swap and JIT liquidity transactions is validated, leading to the following sequence of events:
\begin{itemize}
    \item \textbf{JIT LP Deposit:} The order from the JIT LP to deposit into the liquidity pool is executed. 
    \item \textbf{Swap Execution and Fee Distribution:} The swap order of the trader is executed. The passive LPs collectively receive a pro-rata share of the transaction fee, where the passive LPs share is
    \begin{align*}
        (1-\mathbbm{1}\{\text{JIT LP arrives}\})+\frac{d_P}{d_P+d_J}\cdot\mathbbm{1}\{\text{JIT LP arrives}\}.
    \end{align*}
    The remaining share of fees is earned by the JIT LP.
    \item \textbf{JIT LP Withdrawal:} Post-swap, the JIT LP withdraws their share of the pool.
\end{itemize}

\paragraph{Period 5: Market Shocks and Responses.}

In Period 5, events unfold in a manner perfectly correlated with those in Period 2, leading to one of two mutually exclusive scenarios:

\begin{itemize}
    \item  If an informed trader arrived in Period 2, the price of the risky coin moves to $p'$ where
    \begin{align*}
        p' = \begin{dcases}
            \zeta^{-1}p & \text{with probability $\psi$} \\
            \zeta p & \text{with probability $(1-\psi)$}
        \end{dcases}
    \end{align*}
    where $\zeta\in(1+f,\infty)$.
    \item If an uninformed trader arrived in Period 2, the risky coin's remains at $p$. This creates a \textit{reverse trade arbitrage opportunity}, as described in \cite{AMMAdoption21}. An arbitrageur arrives and trades to move the AMM's spot rate back to $p$, restoring the liquidity pool's reserves to the original levels. 
\end{itemize}

All agents are risk-neutral and valuate their holdings after period 5 with no discount factor. We assume that exogenous parameters ($\alpha$, $\zeta$, $\zeta_U$, $\psi$, $\psi_U$, $f$ and $\pi$) are known by all strategic agents.

\vspace{0.33cm}

\begin{center}
    
\tikzset{every picture/.style={line width=0.75pt}} 

\begin{tikzpicture}[x=0.75pt,y=0.75pt,yscale=-1,xscale=1]


\draw    (73,178) -- (107.5,133.37) ;
\draw [shift={(109.33,131)}, rotate = 127.71] [fill={rgb, 255:red, 0; green, 0; blue, 0 }  ][line width=0.08]  [draw opacity=0] (8.93,-4.29) -- (0,0) -- (8.93,4.29) -- cycle    ;
\draw    (73,178) -- (107.52,223.61) ;
\draw [shift={(109.33,226)}, rotate = 232.88] [fill={rgb, 255:red, 0; green, 0; blue, 0 }  ][line width=0.08]  [draw opacity=0] (8.93,-4.29) -- (0,0) -- (8.93,4.29) -- cycle    ;
\draw    (207,225) -- (240.33,225) ;
\draw [shift={(243.33,225)}, rotate = 180] [fill={rgb, 255:red, 0; green, 0; blue, 0 }  ][line width=0.08]  [draw opacity=0] (8.93,-4.29) -- (0,0) -- (8.93,4.29) -- cycle    ;
\draw    (207,132) -- (240.33,132) ;
\draw [shift={(243.33,132)}, rotate = 180] [fill={rgb, 255:red, 0; green, 0; blue, 0 }  ][line width=0.08]  [draw opacity=0] (8.93,-4.29) -- (0,0) -- (8.93,4.29) -- cycle    ;
\draw  [fill={rgb, 255:red, 74; green, 144; blue, 226 }  ,fill opacity=0.33 ] (345.33,126) .. controls (345.33,116.06) and (353.39,108) .. (363.33,108) -- (417.33,108) .. controls (427.27,108) and (435.33,116.06) .. (435.33,126) -- (435.33,228) .. controls (435.33,237.94) and (427.27,246) .. (417.33,246) -- (363.33,246) .. controls (353.39,246) and (345.33,237.94) .. (345.33,228) -- cycle ;
\draw    (308,226) -- (341.29,226.26) ;
\draw [shift={(344.29,226.29)}, rotate = 180.45] [fill={rgb, 255:red, 0; green, 0; blue, 0 }  ][line width=0.08]  [draw opacity=0] (8.93,-4.29) -- (0,0) -- (8.93,4.29) -- cycle    ;
\draw    (308,133) -- (341.29,133.26) ;
\draw [shift={(344.29,133.29)}, rotate = 180.45] [fill={rgb, 255:red, 0; green, 0; blue, 0 }  ][line width=0.08]  [draw opacity=0] (8.93,-4.29) -- (0,0) -- (8.93,4.29) -- cycle    ;
\draw    (345.33,160) -- (435.33,160) ;
\draw    (345.33,193) -- (435.33,193) ;
\draw    (437,225) -- (470.33,225) ;
\draw [shift={(473.33,225)}, rotate = 180] [fill={rgb, 255:red, 0; green, 0; blue, 0 }  ][line width=0.08]  [draw opacity=0] (8.93,-4.29) -- (0,0) -- (8.93,4.29) -- cycle    ;
\draw    (437,132) -- (470.33,132) ;
\draw [shift={(473.33,132)}, rotate = 180] [fill={rgb, 255:red, 0; green, 0; blue, 0 }  ][line width=0.08]  [draw opacity=0] (8.93,-4.29) -- (0,0) -- (8.93,4.29) -- cycle    ;
\draw [color={rgb, 255:red, 128; green, 128; blue, 128 }  ,draw opacity=1 ] [dash pattern={on 4.5pt off 4.5pt}]  (308,177) -- (341.29,177.26) ;
\draw [shift={(344.29,177.29)}, rotate = 180.45] [fill={rgb, 255:red, 128; green, 128; blue, 128 }  ,fill opacity=1 ][line width=0.08]  [draw opacity=0] (8.93,-4.29) -- (0,0) -- (8.93,4.29) -- cycle    ;
\draw [color={rgb, 255:red, 128; green, 128; blue, 128 }  ,draw opacity=1 ] [dash pattern={on 4.5pt off 4.5pt}]  (160.29,154.86) -- (185.89,174.19) ;
\draw [shift={(188.29,176)}, rotate = 217.06] [fill={rgb, 255:red, 128; green, 128; blue, 128 }  ,fill opacity=1 ][line width=0.08]  [draw opacity=0] (8.93,-4.29) -- (0,0) -- (8.93,4.29) -- cycle    ;
\draw [color={rgb, 255:red, 128; green, 128; blue, 128 }  ,draw opacity=1 ] [dash pattern={on 4.5pt off 4.5pt}]  (160.29,199.86) -- (186,177.95) ;
\draw [shift={(188.29,176)}, rotate = 139.57] [fill={rgb, 255:red, 128; green, 128; blue, 128 }  ,fill opacity=1 ][line width=0.08]  [draw opacity=0] (8.93,-4.29) -- (0,0) -- (8.93,4.29) -- cycle    ;

\draw (25,68) node [anchor=north west][inner sep=0.75pt]   [align=left] {$\displaystyle t=1$};
\draw    (-15,155) -- (72,155) -- (72,200) -- (-15,200) -- cycle  ;
\draw (-15,159) node [anchor=north west][inner sep=0.75pt]   [align=left] {\begin{minipage}[lt]{62.72pt}\setlength\topsep{0pt}
\begin{center}
passive LPs\\[-0.15\baselineskip]arrive
\end{center}
\end{minipage}};

\draw (80,128) node [anchor=north west][inner sep=0.75pt]   [align=left] {$\alpha$};

\draw (64,218) node [anchor=north west][inner sep=0.75pt]   [align=left] {$1-\alpha$};

\draw (216,118) node [anchor=north west][inner sep=0.75pt]   [align=left] {$\pi$};

\draw (216,210) node [anchor=north west][inner sep=0.75pt]   [align=left] {$\pi$};

\draw    (110,108) -- (206,108) -- (206,153) -- (110,153) -- cycle  ;
\draw (114.5,112) node [anchor=north west][inner sep=0.75pt]   [align=left] {\begin{minipage}[lt]{62.81pt}\setlength\topsep{0pt}
\begin{center}
informed\\[-0.15\baselineskip]trader arrives
\end{center}

\end{minipage}};
\draw    (110,201) -- (206,201) -- (206,246) -- (110,246) -- cycle  ;
\draw (114.5,205) node [anchor=north west][inner sep=0.75pt]   [align=left] {\begin{minipage}[lt]{62.81pt}\setlength\topsep{0pt}
\begin{center}
uninformed\\[-0.15\baselineskip]trader arrives
\end{center}

\end{minipage}};
\draw    (244,201) -- (307,201) -- (307,246) -- (244,246) -- cycle  ;
\draw (247,205) node [anchor=north west][inner sep=0.75pt]   [align=left] {\begin{minipage}[lt]{40.72pt}\setlength\topsep{0pt}
\begin{center}
JIT LP\\[-0.15\baselineskip]arrives
\end{center}

\end{minipage}};
\draw (365.33,113) node [anchor=north west][inner sep=0.75pt]   [align=left] {\begin{minipage}[lt]{36.91pt}\setlength\topsep{0pt}
\begin{center}
{\small JIT LP}\\[-0.5\baselineskip]{\small deposit}\\[-0.5\baselineskip]{\small (if any)}\\
\end{center}

\end{minipage}};
\draw (360,198) node [anchor=north west][inner sep=0.75pt]   [align=left] {\begin{minipage}[lt]{43.53pt}\setlength\topsep{0pt}
\begin{center}
{\small JIT LP}\\[-0.5\baselineskip]{\small withdrawal}\\[-0.5\baselineskip]{\small (if any)}
\end{center}

\end{minipage}};
\draw (371,172) node [anchor=north west][inner sep=0.75pt]   [align=left] {\begin{minipage}[lt]{26.53pt}\setlength\topsep{0pt}
\begin{center}
swap
\end{center}

\end{minipage}};
\draw    (244,108) -- (307,108) -- (307,153) -- (244,153) -- cycle  ;
\draw (247,112) node [anchor=north west][inner sep=0.75pt]   [align=left] {\begin{minipage}[lt]{40.72pt}\setlength\topsep{0pt}
\begin{center}
JIT LP\\[-0.15\baselineskip]arrives
\end{center}

\end{minipage}};
\draw    (474,108) -- (529,108) -- (529,153) -- (474,153) -- cycle  ;
\draw (477,112) node [anchor=north west][inner sep=0.75pt]   [align=left] {\begin{minipage}[lt]{35.04pt}\setlength\topsep{0pt}
\begin{center}
price\\[-0.15\baselineskip]shock 
\end{center}

\end{minipage}};
\draw    (474,201) -- (531,201) -- (531,246) -- (474,246) -- cycle  ;
\draw (476.5,208) node [anchor=north west][inner sep=0.75pt]   [align=left] {\begin{minipage}[lt]{36.73pt}\setlength\topsep{0pt}
\begin{center}
reverse\\[-0.15\baselineskip]trade
\end{center}

\end{minipage}};
\draw (139,68) node [anchor=north west][inner sep=0.75pt]   [align=left] {$\displaystyle t=2$};
\draw (257,68) node [anchor=north west][inner sep=0.75pt]   [align=left] {$\displaystyle t=3$};
\draw (372,68) node [anchor=north west][inner sep=0.75pt]   [align=left] {$\displaystyle t=4$};
\draw (483,68) node [anchor=north west][inner sep=0.75pt]   [align=left] {$\displaystyle t=5$};
\draw  [color={rgb, 255:red, 128; green, 128; blue, 128 }  ,draw opacity=1 ][dash pattern={on 4.5pt off 4.5pt}]  (190,165) -- (307,165) -- (307,189) -- (190,189) -- cycle  ;
\draw (195,170) node [anchor=north west][inner sep=0.75pt]   [align=left] {\begin{minipage}[lt]{77.2pt}\setlength\topsep{0pt}
\begin{center}
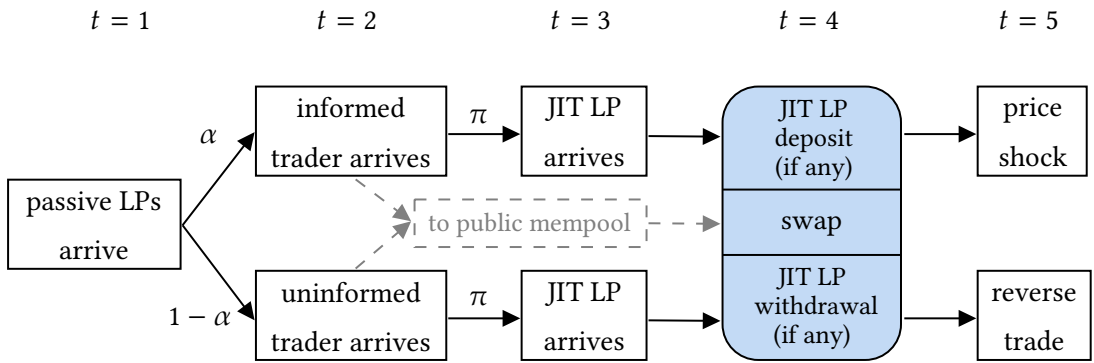

{\small \textcolor[rgb]{0.5,0.5,0.5}{to public mempool}}
\end{center}

\end{minipage}};

\end{tikzpicture} \vspace{0.5cm}
\captionof{figure}{\small Timeline of the sequential game.}
\label{fig:timeline}

\end{center}

\subsection{Strategies and Utilities}

\paragraph{Strategy Space.} Depending on the realization of trader arrivals and price / private value shocks, one of the following four scenarios occurs:
\begin{itemize}
    \item \textit{Informed Sell:} if a price shock occurs with $p'=\zeta^{-1} p$, then the informed trader chooses to sell risky coins for stable coins, due to depreciation in the risky coin's price.
    \item \textit{Informed Buy:} if a price shock occurs with $p'=\zeta p$, then the informed trader chooses to sell stable coins for risky coins, due to appreciation in the risky coin's price.
    \item \textit{Uninformed Sell:} if there is no price shock and the private value is $p_U=\zeta_U^{-1} p$, then the uninformed trader sells risky coins for stable coins, as his private valuation for risky coins is low.
    \item \textit{Uninformed Buy:} if there is no price shock and the private value is $p_U=\zeta_Up$, then the uninformed trader sells stable coins for risky coins, as his private valuation for risky coins is high.
\end{itemize}
The traders and the JIT LP can distinguish between these scenarios, but the passive LPs cannot. Denote the set of scenarios as $\Omega=\{\mathsf{IS}, \mathsf{IB}, \mathsf{US}, \mathsf{UB}\}$ for informed sell, informed buy, uninformed sell, and uninformed buy, respectively. The strategy spaces for the passive LPs, JIT LPs, and traders are, respectively,
\begin{gather*}
    d_P^{(i)}\in[0,{e}_{P}/N] \ \forall \ i\in[N], \\
       d_{J}:[0,{e}_{P}]\times([0,{e}_{P}]\times\Omega\to\R_+^2\backslash\R_{++}^2)\times\Omega\to\R_+ \\
       (q_R,q_S):[0,{e}_{P}]\times\Omega\to\R_+^2\backslash\R_{++}^2. 
\end{gather*}
where $(q_R,q_S)( d_P,\omega)$ for $\omega\in\{\mathsf{IS},\mathsf{IB}\}$ refers to the informed trader's strategy, and for $\omega\in\{\mathsf{US},\mathsf{UB}\}$, it refers to the uninformed trader's strategy. A strategy profile for this game is given by the tuple:
\begin{gather*}
    \sigma\equiv(\sigma_{P},\sigma_T,\sigma_{J})\equiv(\{  d_P^{(i)}\}_{i\in[N]},\{(q_R,q_S)(   d_P,\omega)\}_{\omega\in\Omega},\{d_{J}( d_P,(q_R,q_S)( d_P,\omega);\omega)\}_{\omega\in\Omega}).
\end{gather*}

\paragraph{Utilities.} Explicit forms of these utilities are provided in the appendices. The JIT LP's utility, conditional on arrival, is given by
\begin{align*}
    u_{J}(d_{J};\sigma_{-J},\omega) = \text{JIT LP's Share}\times(\text{Pool Value After Swap}+\text{Fees})-\text{Deposit Value}.
\end{align*}
Both traders' expected utility when choosing a swap order $(q_R,q_S)$ is given by
\begin{align*}
    u_T((q_R,q_S);\sigma_{-T},\omega) &= \pi \times\text{Value Received with JIT Liquidity} \\
    &\hspace{1cm} +(1-\pi )\times\text{Value Received w/o JIT Liquidity}-\text{Value Sent}-\text{Fees},
\end{align*}
where the value sent and received depends on the trader's valuation of the risky coin's price. The passive LPs' total conditionally expected utility given scenario $\omega$ is  given by
\begin{align*}
    u_{P}(d_P;\sigma_{-P},\omega) &= \pi \times \text{Passive LPs' Share}\times(\text{Pool Value After Swap with JIT Liquidity}+\text{Fees}) \\
    &\hspace{1cm}+(1-\pi )\times(\text{Pool Value After Swap w/o JIT Liquidity}+\text{Fees})
    -\text{Deposit Value}.
\end{align*}
Since the passive LPs cannot distinguish between scenarios when deciding the amount of the liquidity to provide, the passive LPs' total expected utility is given by
\begin{align*}
    u_P(d_P;\sigma_{-P})&=\alpha\psi\cdot u_P(d_P;\sigma_{-P},\mathsf{IS})+\alpha(1-\psi)\cdot u_P(d_P;\sigma_{-P},\mathsf{IB}) \\
    &\hspace{1cm}+(1-\alpha)\psi_U\cdot u_P(d_P;\sigma_{-P},\mathsf{US})+(1-\alpha)\psi_U\cdot u_P(d_P;\sigma_{-P},\mathsf{UB})
\end{align*}
Our solution concept for this game is a pure-strategy subgame-perfect Nash equilibrium (SPNE). To facilitate the analysis, we make the following assumption. \\[-0.75\baselineskip]

\noindent \textbf{\textsc{Assumption 1.}} \textit{We assume the following tie-breaking criteria:}
\begin{itemize}
    \item \textit{If two deposit quantities yield the same utility for a liquidity provider, then an passive LP prefers the larger deposit quantities and the JIT LP prefers the smaller deposit quantities.}
    \item \textit{If two swap quantities yield the same utility, then a trader prefers the smaller swap quantity.}
\end{itemize}

\section{The Effects of Just-in-Time Liquidity on Liquidity Pool Depth}\label{sec:benefit}

The search for SPNE is conducted through backward induction. We first fix the deposit amount of passive LPs and then determine the Nash equilibrium in the subgame involving either the informed or uninformed trader and the JIT LP. We address each scenario separately. Following this, we then determine the optimal response of passive LPs given the equilibrium strategies of traders and the JIT LP in their subgame.

\begin{proposition}\label{thm:subgame-eq}
A non-trivial Nash equilibrium in the subgame between the traders and JIT LP exists if and only if $\zeta_U>\underline{\zeta}(f,\pi)$ where
\begin{align*}
    \underline{\zeta}(f,\pi)=\frac{2(1+f)^3}{2+\pi f(3+f)}.
\end{align*}
Moreover, the equilibrium is unique. In the equilibrium outcome:
\begin{itemize}
    \item The value (at Period 2) of the informed trader's buy and sell swap orders is a fixed multiple $\mu_I$ of the amount of passive liquidity provided and is constant in $\pi$.
    \item The value (at Period 2) of the uninformed trader's buy and sell swap orders is a fixed multiple $\mu(\pi)$ of the amount of passive liquidity provided where $\mu(\pi)$ is increasing in $\pi$.
    \item Conditional on arrival, the JIT LP provides liquidity only when facing an uninformed trader. The amount of liquidity provided is a fixed multiple $\nu(\pi)$ of the amount of passive liquidity provided.
\end{itemize}
\end{proposition}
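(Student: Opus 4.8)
The plan is to solve the subgame by backward induction: first compute the JIT LP's best response in Period~3 to an arbitrary swap order in each scenario $\omega\in\Omega$, and then solve the trader's Period~2 problem anticipating that response (a Stackelberg structure with the trader as leader). Throughout I would exploit the homogeneity of the invariant by writing the post-deposit reserves as $(d_P+d_J,\,p(d_P+d_J))$ and normalizing the order by $pd_P$, so that every payoff becomes a function of the two scale-free ratios $\xi\equiv q_S/(pd_P)$ (or $q_R/d_P$ for a sell) and $\delta\equiv(d_P+d_J)/d_P$. This makes the claim that all equilibrium quantities are fixed multiples of $d_P$ automatic, and reduces each optimization to one variable.

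For the JIT LP's best response I would treat the informed and uninformed scenarios separately. In $\mathsf{IS}$ and $\mathsf{IB}$ the risky price moves to $\zeta^{\pm1}p$ in Period~5 and the informed order is aligned with that move; valuing the JIT LP's withdrawn basket at the realized price $p'$, I expect to show that $\partial u_J/\partial d_J<0$ for all $d_J>0$ whenever $\zeta>1+f$, so the adverse-selection loss dominates the fee and the best response is $d_J=0$ (selected by the tie-break). In $\mathsf{US}$ and $\mathsf{UB}$, valuing the withdrawn basket at the restored price $p$, both the price-impact term and the pro-rata fee are positive, and a direct computation reduces $\partial u_J/\partial d_J=0$ to the quadratic $(\xi-f)\delta^2-2(1+f)\xi\delta-(1+f)\xi^2=0$. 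The crucial qualitative feature is that this admits a root $\delta>1$ if and only if $\xi>f$; for $\xi\le f$ the JIT payoff is monotone with supremum attained only as $d_J\to\infty$, so no finite (non-trivial) best response exists. This pins down $\nu$ as the interior root minus one and drives the whole existence question.

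Turning to Period~2, in the informed scenarios the JIT never participates, so the informed trader faces depth $d_P$ regardless of $\pi$, and the optimal arbitrage order to the shifted price is a $\pi$-free multiple $\mu_I$ of $d_P$, giving the first bullet. In the uninformed scenarios I would substitute the JIT best response $\delta(\xi)$ and maximize $\zeta_U\!\left[\pi\,\frac{\delta(\xi)\xi}{\delta(\xi)+\xi}+(1-\pi)\,\frac{\xi}{1+\xi}\right]-(1+f)\xi$ over $\xi$, where the first bracketed term is the value received when the JIT arrives. Non-triviality is exactly the requirement that the maximizer satisfy $\xi^\ast>f$, so that the JIT's response stays finite, and the threshold is obtained by evaluating the trader's first-order condition at the participation boundary $\xi=f$. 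Using the expansion $\delta(\xi)\sim 2f(1+f)/(\xi-f)$ as $\xi\downarrow f$, the marginal value received with JIT tends to $\frac{2+f}{2(1+f)}$, and solving $\zeta_U\!\left[\pi\frac{2+f}{2(1+f)}+(1-\pi)\frac{1}{(1+f)^2}\right]=1+f$ yields precisely $\zeta_U=\underline{\zeta}(f,\pi)=\frac{2(1+f)^3}{2+\pi f(3+f)}$.

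I expect this boundary computation to be the main obstacle, because the JIT deposit blows up as the order approaches the participation threshold, so the trader's Stackelberg objective must be differentiated through $\delta(\xi)$ and the limiting marginal value extracted carefully rather than by naive substitution; one must also confirm that $\xi^\ast\le f$ genuinely corresponds to a degenerate (unbounded) JIT response and hence a trivial equilibrium. Once the threshold is established, the remaining claims follow from routine monotone comparative statics: strict concavity (or single-crossing) of the two objectives gives uniqueness of both $\delta(\xi)$ and $\xi^\ast(\pi)$; differentiating the trader's first-order condition in $\pi$ gives $\mu'(\pi)>0$, since a larger arrival probability raises expected depth and hence the order, and composing with $\delta(\cdot)$ defines $\nu(\pi)$. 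Finally, the symmetry of the valuations ($\zeta_U$ up and down) and of the invariant under interchange of the two coins yields identical multiples for the buy and sell orders, completing the statement.
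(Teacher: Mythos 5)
Your proposal is correct and follows essentially the same route as the paper's proof: backward induction with scale-invariance to normalize by $d_P$, the zero best response of the JIT LP against informed flow, the JIT first-order condition reducing to the quadratic $(\xi-f)\delta^2-2(1+f)\xi\delta-(1+f)\xi^2=0$ with a finite root precisely when $\xi>f$, and the threshold $\underline{\zeta}(f,\pi)$ obtained by evaluating the trader's first-order condition (with the JIT best response substituted) at the boundary $\xi=f$, where your limiting marginal value $\tfrac{2+f}{2(1+f)}$ coincides with the paper's closed-form $M_T(f)$ computation. The only differences are cosmetic (the paper isolates the homogeneity step as standalone lemmas and computes the boundary value directly rather than via the expansion of $\delta(\xi)$ as $\xi\downarrow f$), so nothing of substance is missing.
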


The JIT LP's decision to deposit or not is contingent on the type of trader they face. Facing informed traders, who trade optimally based on their knowledge of future prices, the JIT LP refrains from depositing, avoiding adverse selection. Conversely, when an uninformed trader arrives, the JIT LP is incentivized to deposit to earn fees and capitalize on the price impact of the swap. Consequently, the informed trader, who is aware of the JIT LP's non-deposit in their presence, bases their swap quantity on the existing liquidity provided by passive LPs. Uninformed traders, expecting a deeper liquidity pool on average due to the JIT LP's potential arrival, trade larger amounts. 

In equilibrium, the amount of liquidity provided by the JIT LP is not arbitrarily large, but is rather a fixed multiple of the existing passive liquidity. This is due to a critical balance the JIT LP must strike between two factors: the marginal benefit of acquiring a larger share of the pool's fees by increasing their deposit, and the marginal loss from a lower price impact caused by a deeper liquidity pool. If the JIT LP were to deposit an excessively large amount, the resulting depth of the liquidity pool would significantly diminish the price impact of traders' swaps, leading to a decrease in gains the JIT LP could expect from price movements generated by these trades. When $\zeta_U>\underline{\zeta}(\pi)$ does not hold, then $\zeta_U$ is small and generates a small uninformed trading volume. In this case, the marginal benefit of acquiring a larger share is always greater than the marginal loss from a lower price impact. The JIT LP's optimal deposit amount is infinite given a positive trading volume, so no non-trivial Nash equilibrium exists.

We now turn our attention to characterizing the strategy of passive LPs. The expected utility for a passive LP $i$, denoted as $u_{P}(d^{(i)}_P)$, can be expressed as follows:
\begin{align*}
    u_{P}(d^{(i)}_P) = p(\alpha \mathcal{C} + (1-\alpha)\cdot\mathcal{R}(\pi))d^{(i)}_P,
\end{align*}
where $\mathcal{C}<0$ is the cost of adverse selection per unit of liquidity deposited due to informed trading and $\mathcal{R}(\pi)$ is the expected fee revenue per unit of liquidity deposited earned from uninformed trading. Thus $\alpha \mathcal{C} + (1-\alpha)\cdot\mathcal{R}(\pi)$ is a passive LP's utility per unit of liquidity deposited. Unlike the JIT LP, passive LPs do not benefit from the price impact generated by uninformed traders due to the arbitrageur's reverse trade, cutting off this potential source of revenue for passive LPs.

\begin{proposition}\label{thm:passive-br}
A non-trivial subgame-perfect Nash equilibrium exists if and only if $\zeta_U>\underline{\zeta}(f,\pi)$ where
\begin{align*}
    \underline{\zeta}(f,\pi)=\frac{2(1+f)^3}{2+\pi f(3+f)}.
\end{align*}
Moreover, the equilibrium is unique. Define $\mathcal{U}(\pi)=\alpha \mathcal{C} + (1-\alpha)\cdot\mathcal{R}(\pi)$. In the equilibrium outcome, the amount of passive liquidity provided is $d_P^\star=e_P\cdot\mathbbm{1}\{\mathcal{U}(\pi)\geq0\}$.
\end{proposition}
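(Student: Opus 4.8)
The plan is to reduce the full-game equilibrium to a collection of independent binary decisions by the passive LPs, exploiting the linear payoff representation stated just above the proposition. The ``only if'' direction and the uniqueness of the continuation are inherited directly from Proposition~\ref{thm:subgame-eq}: for any fixed $d_P\ge 0$, a non-trivial equilibrium of the trader--JIT subgame exists and is unique precisely when $\zeta_U>\underline{\zeta}(f,\pi)$. When this fails, the JIT LP's optimal deposit against an uninformed order is unbounded, so no best response exists whenever $d_P>0$; hence no SPNE can feature positive liquidity provision, which yields the ``only if'' direction. For the rest of the argument I would assume $\zeta_U>\underline{\zeta}(f,\pi)$, so that the unique continuation of Proposition~\ref{thm:subgame-eq} is well-defined and all of its quantities --- the swap values $\mu_I d_P$ and $\mu(\pi)d_P$ and the JIT deposit $\nu(\pi)d_P$ --- are homogeneous of degree one in the total passive deposit $d_P$.

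The key step is to justify that passive LP $i$'s payoff takes the form $u_{P}(d_P^{(i)})=p\,\mathcal{U}(\pi)\,d_P^{(i)}$ with no dependence on the other LPs' deposits. First I would assemble the per-unit rate $\mathcal{U}(\pi)=\alpha\mathcal{C}+(1-\alpha)\mathcal{R}(\pi)$ by taking expectations over the four scenarios in $\Omega$: in the informed scenarios the JIT LP stays out (by Proposition~\ref{thm:subgame-eq}), so the whole pool is adversely selected and contributes the negative per-unit cost $\mathcal{C}$, while in the uninformed scenarios the reverse trade restores the reserves, leaving only the pro-rata fee term $\mathcal{R}(\pi)$. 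Because every subgame quantity scales linearly with $d_P$, the passive fee share $d_P/(d_P+d_J)=1/(1+\nu(\pi))$ and the per-unit adverse-selection loss are both independent of scale, so the \emph{total} passive utility is exactly $p\,\mathcal{U}(\pi)\,d_P$. Since all passive LPs deposit at the same fundamental ratio and are paid pro rata, LP $i$ receives the fraction $d_P^{(i)}/d_P$ of this total, which collapses to $p\,\mathcal{U}(\pi)\,d_P^{(i)}$.

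With this linear, separable payoff each LP faces a dominant-strategy binary choice over $\{0,e_P/N\}$: deposit the full endowment when $\mathcal{U}(\pi)>0$, deposit nothing when $\mathcal{U}(\pi)<0$, and break the tie toward the larger deposit when $\mathcal{U}(\pi)=0$ by the first clause of Assumption~1. Aggregating the $N$ identical decisions gives $d_P^\star=e_P\cdot\mathbbm{1}\{\mathcal{U}(\pi)\ge 0\}$, and since each LP's action is strictly dominant off the knife-edge (and pinned down by the tie-break on it), the equilibrium is unique; this also supplies the existence half of the ``if and only if''.

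I expect the main obstacle to be the homogeneity and independence argument of the second paragraph rather than the optimization. One must verify that $\mathcal{C}$ and $\mathcal{R}(\pi)$ genuinely do not depend on the total $d_P$, and in particular that a unilateral change in $d_P^{(i)}$ does not alter the per-unit rate through its effect on the endogenous subgame volumes and the JIT deposit. This is exactly what converts the game among passive LPs into a family of independent threshold decisions, and it underpins both the closed form for $d_P^\star$ and the uniqueness claim; the degree-one homogeneity established in Proposition~\ref{thm:subgame-eq} is the property that makes it go through.
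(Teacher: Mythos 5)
Your proposal is correct and follows essentially the same route as the paper: it invokes the unique trader--JIT subgame continuation and its degree-one homogeneity in $d_P$ (Proposition \ref{thm:subgame-eq}), reduces each passive LP's payoff to the linear, separable form $p\,\mathcal{U}(\pi)\,d_P^{(i)}$ with $\mathcal{U}(\pi)=\alpha\mathcal{C}+(1-\alpha)\mathcal{R}(\pi)$, and then resolves the resulting independent binary decisions via Assumption 1's tie-break, exactly as in the paper's Appendix A.5 deviation argument. The only cosmetic difference is that you phrase the final step as a dominant-strategy argument while the paper verifies no profitable unilateral deviation, which is the same reasoning.
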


If the cost of adverse selection is less than the revenue from the transaction fees of uninformed trades, then all passive LPs choose to provide liquidity. Otherwise, if adverse selection losses exceed expected fee revenues, then passive LPs will abstain from depositing. In this situation, zero passive liquidity implies that traders and the JIT LP do not participate as well. We thus have a market breakdown or \textit{liquidity freeze}, where liquidity vanishes and trading ceases. One may wonder why traders do not swap in the absence of passive market liquidity, or why they do not place market orders in anticipation of JIT LPs filling them. The concern here is that JIT liquidity might be minimal, leading to large price impacts for the traders as JIT LPs, who move after the traders, can limit their liquidity supply to earn a large price impact in such cases. This underscores the critical roles of passive LPs as both a safety net and an alternative for traders, protecting them from predatory practices by JIT LPs analogous to the way HFT firms prey on large orders (see \cite{PREY}).

For $\pi>0$, if $\mathcal{R}(0)>\mathcal{R}(\pi)$, then the per-unit utility of passive LPs is lower in the presence of a JIT LP that arrives with probability $\pi$ compared to the absence of a JIT LP. A reduction in utility significantly affects the equilibrium outcome if $\mathcal{U}(0)\geq0$ but $\mathcal{U}(\pi)<0$. In such scenarios, the possibility of a JIT LP arriving induces a liquidity freeze that would not occur in the complete absence of the JIT LP ($\pi=0$). When there is no adverse selection (i.e.\!\, $\alpha = 0$), the passive LPs' total utility is always positive. Conversely, when all order flows are informed (i.e.\!\, $\alpha = 1$), the utility is always negative. Thus, there always exists some interval $[\underline{\alpha}, \overline{\alpha}]\subseteq[0,1]$ where a JIT LP-induced liquidity freeze occurs when $\alpha\in[\underline{\alpha}, \overline{\alpha}]$. We now formalize this notion of utility reduction and establish the conditions under which it occurs.

\begin{definition*}
Let $\pi\in(0,1]$. Suppose that unique equilibria exist when the JIT LP's arrival probability is zero and $\pi$, holding the other exogenous parameters fixed. We say that the JIT LP
\begin{itemize}
    \item \ul{complements} the passive LPs at arrival probability $\pi$ if $\mathcal{R}(0) \leq \mathcal{R}(\pi)$;
    \item \ul{crowds out} the passive LPs at arrival probability $\pi$ if $\mathcal{R}(0) > \mathcal{R}(\pi)$.
\end{itemize}
\end{definition*}

\begin{theorem}\label{thm:threshold}
Let $\pi\in(0,1]$. Then one of the following is true:
\begin{itemize}
    \item The JIT LP complements the passive LPs at arrival probability $\pi$ for all $\zeta_U>\underline{\zeta}(f,\pi)$.
    \item There exists $\zeta^\star(f,\pi)>\underline{\zeta}(f,\pi)$ such that the JIT LP complements the passive LPs at arrival probability $\pi$ if and only if $\zeta_U\geq\zeta^\star(f,\pi)$.
\end{itemize}
Moreover, when $\pi=1$, the $\zeta_U$ threshold is given explicitly by $\zeta^\star(f,1)=(\sqrt{f}+\sqrt{1+f})^2$, assuming that $\underline{\zeta}(f,\pi)<(\sqrt{f}+\sqrt{1+f})^2$.
\end{theorem}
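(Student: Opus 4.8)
The plan is to recognize that, by the Definition preceding the statement, the JIT LP complements (resp.\ crowds out) the passive LPs precisely when $\mathcal{R}(\pi)\geq\mathcal{R}(0)$ (resp.\ $\mathcal{R}(\pi)<\mathcal{R}(0)$). The whole theorem is therefore a statement about the sign of $\Delta(\zeta_U):=\mathcal{R}(\pi)-\mathcal{R}(0)$ as $\zeta_U$ ranges over the feasible interval $(\underline{\zeta}(f,\pi),\infty)$ with $f$ and $\pi$ held fixed: the first bullet is the case $\Delta\geq0$ throughout, and the second is the case in which $\Delta$ is negative near the left endpoint and changes sign exactly once. Establishing the dichotomy thus amounts to two facts: (i) $\Delta$ is single-crossing from below in $\zeta_U$, and (ii) $\Delta>0$ for all sufficiently large $\zeta_U$ --- the latter being what rules out a third possibility of global crowding-out and is the analytic counterpart of the ``high elasticity'' regime.

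First I would assemble $\mathcal{R}(\pi)$ in closed form from the equilibrium objects of Proposition~\ref{thm:subgame-eq}. Because the arbitrageur's reverse trade annuls the price move on uninformed flow, $\mathcal{R}(\pi)$ is pure fee income with two contributions: the fee on the forward swap, of which the passive LPs retain all with probability $1-\pi$ and the diluted fraction $1/(1+\nu(\pi))$ with probability $\pi$; and the fee paid by the arbitrageur on the reverse trade that restores the passive position. Writing the uninformed swap value as $\mu(\pi)d_P$ and the JIT deposit as $\nu(\pi)d_P$, these give an explicit algebraic $\mathcal{R}(\pi)$, with $\mathcal{R}(0)=\mathcal{R}(\pi)\big|_{\pi=0}$. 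The natural change of variable is $x:=\sqrt{\zeta_U/(1+f)}$, strictly increasing in $\zeta_U$ and in which the subgame first-order conditions become algebraic: the trader's optimal swap without JIT satisfies $y=pd_P(x-1)$, and the JIT LP's best response to a committed swap $y=\eta\,pd_P$ reduces, after differentiating its profit $\Pi_J(t)\propto(t-1)\frac{ft+(1+f)\eta}{t(t+\eta)}$ in the total-depth ratio $t=1+\nu$, to the quadratic $(f-\eta)t^2+2(1+f)\eta t+(1+f)\eta^2=0$. Under this substitution $\mu(\pi)$, $\nu(\pi)$, and hence $\Delta$, become explicit functions of $x$.

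With $\Delta$ rewritten as $\Phi(x)$ for an explicit $\Phi$, the core of the argument is to show $\Phi$ is single-crossing from below, equivalently that the ratio $\mathcal{R}(\pi)/\mathcal{R}(0)$ is increasing in $x$: as $x$ grows the uninformed volume grows and the complementarity (volume) channel strictly dominates the dilution (share-loss) channel. I expect this monotonicity to be the main obstacle, since $\mu(\pi)$ and $\nu(\pi)$ are themselves the outcome of the nested trader-then-JIT optimization, so their dependence on $x$ is only implicit; the workable route is to clear denominators and reduce $\Phi(x)\geq0$ to a polynomial inequality in $x$ whose sign pattern (via its coefficients or a factorization) exhibits a unique feasible root with $\Phi<0$ to its left and $\Phi>0$ to its right. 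Together with the limit from (ii) that $\Phi(x)$ is positive as $x\to\infty$, single-crossing yields exactly the two listed alternatives, with $\zeta^\star(f,\pi)=(1+f)x_0^2$ where $x_0$ is the unique root of $\Phi$ in the case where $\Delta$ is negative near feasibility.

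Finally, for $\pi=1$ the dilution factor is $1/(1+\nu(1))$ with certainty, so the equilibrium is the Stackelberg outcome of the trader optimizing $\eta\,\frac{(\rho-1)t-\eta}{t+\eta}$ (with $\rho=x^2$) against the JIT best response $t^\star(\eta)=\frac{\eta[(1+f)+\sqrt{(1+f)(1+\eta)}]}{\eta-f}$ obtained from the quadratic above. Substituting the resulting $\mu(1),\nu(1)$ into the equation $\mathcal{R}(1)=\mathcal{R}(0)$ collapses it to a condition in $x$ whose relevant root is $x^\star=1+\sqrt{f/(1+f)}$; undoing the substitution gives $\zeta^\star(f,1)=(1+f)(x^\star)^2=(\sqrt{f}+\sqrt{1+f})^2$. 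The hypothesis $\underline{\zeta}(f,1)<(\sqrt{f}+\sqrt{1+f})^2$ is exactly what places this root inside the feasible interval, so that the second alternative of the dichotomy applies with this explicit threshold; otherwise the root lies below feasibility and the first alternative holds.
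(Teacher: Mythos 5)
Your logical skeleton coincides with the paper's: by the preceding definition, the theorem is a statement about the sign of $\Delta=\mathcal{R}(\pi)-\mathcal{R}(0)$ over $(\underline{\zeta}(f,\pi),\infty)$, and the dichotomy follows from single-crossing plus eventual positivity of $\Delta$, which is exactly the architecture of the paper's argument. Your $\pi=1$ computation is also correct, and it is a genuinely different and more direct route than the paper's: the paper characterizes crowding out at $\pi=1$ via the sign of $\partial \tilde{d}_J^{BR}/\partial q_R$ at equilibrium (its Proposition A.14) and then locates where that derivative vanishes, whereas you solve $\mathcal{R}(1)=\mathcal{R}(0)$ directly. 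The direct route does work: since both $V_0$ and the $\pi=1$ volume have the form $t-1/t$, the crossing equation collapses to $(2+\mu)^2=4(1+f)(1+\mu)$, i.e.\ $\mu^2-4f\mu-4f=0$, whose positive root $\mu^\star=2f+2\sqrt{f(1+f)}$ corresponds through the trader's first-order condition to $\zeta_U=1+\mu^\star=(\sqrt{f}+\sqrt{1+f})^2$, confirming your $x^\star=1+\sqrt{f/(1+f)}$. (Incidentally, the paper's displayed expression for $\partial\nu/\partial\mu$ in that step appears to carry a coefficient typo, $-2\mu f$ in place of $-3\mu f$, though its stated threshold agrees with this direct calculation.)

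The gap is that the crux of the theorem for general $\pi\in(0,1]$ --- your step (i), single-crossing --- is asserted rather than proven, and the route you sketch would not go through as described. First, under your substitution $\Delta$ is \emph{not} an explicit function of $x$: the equilibrium trade size $\mu(\pi)$ is defined only implicitly by $M_T(\mu)=1/x^2$, which itself contains $\sqrt{(1+f)(1+\mu)}$, so one must carry the implicit pair $(\mu,x)$ throughout. Second, even after clearing radicals and denominators one faces a high-degree polynomial in two linked variables with coefficients depending on $(f,\pi)$ in sign-ambiguous ways; establishing a unique feasible sign change uniformly in the parameters is the original difficulty restated, and no factorization or coefficient-sign argument is exhibited. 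The paper resolves precisely this point with a device your proposal lacks: it inverts the parametrization, treating $\mu$ as the free variable (so $\zeta_U(\mu)=(1+f)/M_T(\mu)$ becomes explicit), rewrites complementing as $M_T(\mu)\bigl(2+V(\mu)^2+V(\mu)\sqrt{4+V(\mu)^2}\bigr)\geq 2$, and proves in Lemma A.13 that this product is increasing on $(f,\infty)$ by constructing two-point random variables and chaining covariance (association) inequalities with Jensen's inequality. Note also that your target monotonicity --- $\mathcal{R}(\pi)/\mathcal{R}(0)$ increasing in $x$ --- is not the same statement as the paper's lemma (either suffices for single-crossing, but neither is routine). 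Your step (ii) is fine and matches the paper's asymptotics $M_T(\mu)=O(\mu^{-1/2})$, $V(\mu)=\Omega(\mu^{1/2})$; without an argument of comparable substance for step (i), however, the dichotomy, and hence the theorem, remains unproven.
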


\begin{figure}[t!]
\centering
  \includegraphics[scale=0.67]{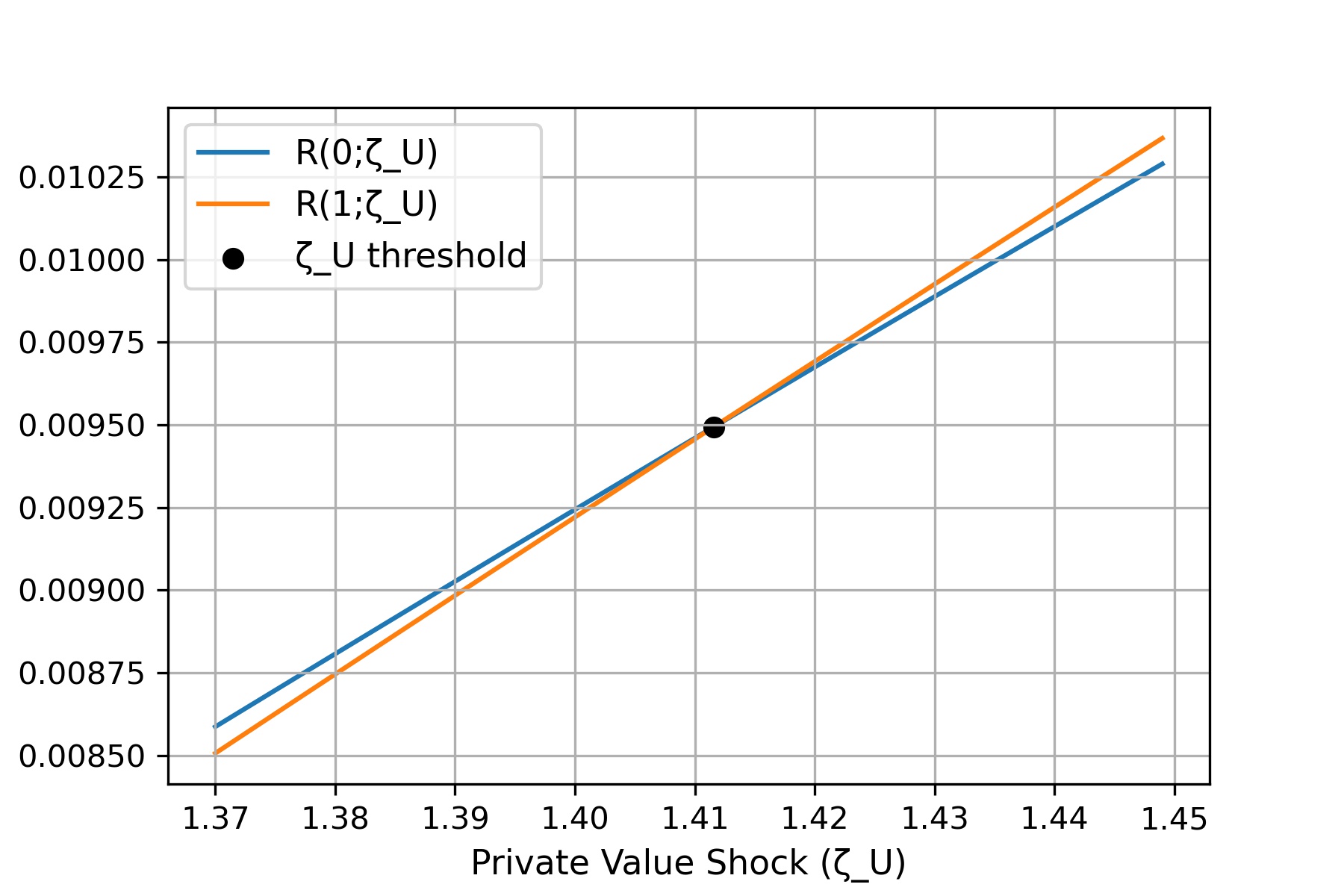} \\[0.25\baselineskip]
    \captionof{figure}{\centering \small $\zeta^\star(1)=(\sqrt{f}+\sqrt{1+f})^2$ is a threshold for crowding out and complementing when the JIT arrives with probability one. Here $f=0.03$ (chosen to illustrate the threshold the best), yielding $\zeta^\star(1)\approx1.4116$.}
    \label{fig:threshold}
\end{figure}

To understand the economic forces underlying the result of Theorem \ref{thm:threshold}, note that we can decompose the fee revenue $\mathcal{R}(\pi)$ as follows:
\begin{gather*}
    \mathcal{R}(\pi) \propto \E\!\left[\frac{\text{Passive Liquidity}}{\text{Passive Liquidity}+\text{JIT Liquidity}}\times(\text{Uninformed Swap Volume}+\text{Reverse Swap Volume})\right] 
\end{gather*}
The fees earned by the passive LPs is linked to their share of the liquidity pool and the total trading volume attracted by the AMM in the event of an uninformed arrival. In the complete absence of a JIT LP, passive LPs have a 100\% share of the pool. When the probability of a JIT LP arriving moves to $\pi\in(0,1]$, there is a chance that the the passive LPs' share of the pool is reduced by the JIT liquidity. However, Proposition \ref{thm:passive-br} states that the uninformed trader's swap volume increases in $\pi$ (since he must submit the order without knowledge of whether the JIT LP's arrival), so the total trading volume attracted by the AMM is increasing in $\pi$. If the increase in expected trading volume is proportionally less than the expected reduction in pool share, then passive LPs' fee revenue decreases with the potential arrival of a JIT LP. Conversely, if the increase in expected trading volume is proportionally greater than the expected reduction in pool share, passive LPs' fee revenue will increase when the JIT LP arrival probability moves from zero to $\pi$.

When uninformed trader's demand for coins is highly responsive to the expected depth of the liquidity pool, the JIT LP and passive LPs become strategic complements. In this scenario, the JIT LP, while capturing a significant portion of transaction fees, indirectly benefits passive LPs by inducing a higher trading volume that compensates for the share of fees it takes away from the passive LPs. Their participation thus reduces transaction costs and attracts more orders, supporting the involvement of passive LPs in the AMM and leading to increased aggregate market liquidity.

Conversely, when the uninformed trader's demand for coins is relatively unresponsive to changes in expected pool depth, the JIT LP and passive LPs are strategic substitutes. Here, the JIT LP accrues a portion of the transaction fees but does not induce a proportional increase in trading volume from the uninformed trader. The extra trading volume fails to compensate for the reduced pool share of passive LPs, lowering passive LPs' equilibrium payoffs and possibly leading to a liquidity freeze.

Theorem \ref{thm:threshold} asserts that the responsiveness of the uninformed trader's demand is intrinsically related to the private value shock size. For any value of $\pi\in(0,1]$, we can find a threshold $\zeta^\star(f,\pi)$ that delineates a crowding out regime for $\zeta_U<\zeta^\star(f,\pi)$ (small $\zeta_U$) and a complementing regime for $\zeta_U>\zeta^\star(f,\pi)$ (large $\zeta_U$). If $\zeta^\star(f,\pi)\leq\underline{\zeta}(f,\pi)$, then the first outcome (always complementing on $\zeta_U\in(\underline{\zeta}(f,\pi),\infty)$) occurs. Otherwise, if $\zeta^\star(f,\pi)>\underline{\zeta}(f,\pi)$, then we are in the second outcome, where the threshold is the range of $\zeta_U$ values that yield equilibria. When $\pi=1$, this threshold has an explicit form. Figure \ref{fig:threshold} illustrates the threshold phenomena and confirms the explicit form for $\pi=1$ as well.

\section{Fee Reallocation for Passive and Just-in-Time Liquidity Providers}\label{sec:ttfs}

In this extension, we propose a two-tiered fee structure to address the challenges posed by JIT liquidity crowding out passive LPs. This fee structure aims to retain passive LPs while ensuring that JIT LPs still supply adequate liquidity, improving aggregate liquidity and gains from trade.

Under this fee structure, JIT LPs are allocated a portion of the transaction fees based on a \textit{transfer rate} parameter $\lambda \in [0,1]$. Specifically, JIT LPs retain a $\lambda$ fraction of their pro-rata share of the fees, with the remaining $(1-\lambda)$ fraction being transferred to passive LPs. The value of $\lambda$ thus influences the fee distribution: lower $\lambda$ values correspond to a greater share of fees being transferred to passive LPs. Note that when $\lambda = 1$, the fee structure corresponds to the baseline model, where JIT LPs retain their entire pro-rata share of fees. The resulting \textit{effective share} of the liquidity pool for passive LPs and the JIT LP is given by
\begin{gather*}
    \text{Passive LPs Effective Share} = \frac{\text{Passive Liquidity}+(1-\lambda)\times\text{JIT Liquidity}}{\text{Passive Liquidity}+\text{JIT Liquidity}}, \\[0.25\baselineskip]
    \text{JIT LP Effective Share} = \frac{\lambda \times\text{JIT Liquidity}}{\text{Passive Liquidity}+\text{JIT Liquidity}}.
\end{gather*}
The effective share is factored into the utilities of the passive and JIT LPs, which are explicitly given in the appendices. Since traders still have to pay the fee in full, their utilities do not change relative to the baseline model, although their equilibrium best-response to the JIT LP might differ. We solve for equilibria under the two-tiered fee structure, which can be characterized by the following theorem.

\begin{proposition}\label{thm:twotier-eq}
Under the two-tiered fee structure with transfer rate $\lambda\in[0,1]$, a non-trivial subgame-perfect Nash equilibrium exists if and only if $\zeta_U > \underline{\zeta}(f,\lambda,\pi)$, where
\begin{align*}
    \underline{\zeta}(f,\lambda,\pi) = \frac{2(1+f)^3}{2+\pi[(2+f)\sqrt{(1+f)(1+\lambda f)}-2]}.
\end{align*}
Moreover, the equilibrium is unique. Define $\mathcal{U}(\pi,\lambda)=\alpha\mathcal{C}+(1-\alpha)\cdot\mathcal{R}(\pi,\lambda)$. In the equilibrium outcome:
\begin{itemize}
    \item The amount of passive liquidity provided
    is $d_P^\star=e_P\cdot\mathbbm{1}\{\mathcal{U}(\lambda,\pi)\geq0\}$.
    \item The value of the informed trader's buy and sell swap orders is $\mu_Id_P^\star$ where $\mu_I$ is the same constant as in Proposition \ref{thm:passive-br}.
    \item The value of the uninformed trader's buy and sell swap orders is $\mu(\lambda,\pi)$ for some function $\mu(\lambda,\pi)$ that is increasing in $\pi$. 
    \item Conditional on the arrival of one JIT LP, the JIT LP provides liquidity only when facing an uninformed trader. The amount of JIT liquidity provided is $d_J^\star=\nu(\lambda,\pi)\cdot d_P^\star$.
\end{itemize}
\end{proposition}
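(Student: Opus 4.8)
The plan is to run the same backward induction that establishes Propositions \ref{thm:subgame-eq} and \ref{thm:passive-br}, viewing the two-tiered structure as a one-parameter deformation of the baseline that is recovered at $\lambda=1$. The key structural observation is that traders still pay the fee $f$ in full, so conditional on any fixed amount of JIT liquidity their optimization problems---and hence the map from realized scenario to equilibrium swap value---are exactly those of the baseline. The transfer rate $\lambda$ can therefore enter only through the LPs' objectives, where the pro-rata share $d_J/(d_P+d_J)$ governing the JIT LP's fee income is replaced by the $\lambda$-discounted effective share and the passive LPs' fee share is correspondingly augmented. I would first record that, facing an informed trader, lowering $\lambda$ only shrinks the JIT LP's fee income while leaving the adverse price-impact term unchanged; since the JIT LP already abstains in the baseline informed scenario (Proposition \ref{thm:subgame-eq}), it abstains here for every $\lambda\in[0,1]$. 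The informed trader then faces only the passive liquidity $d_P$, exactly as in the baseline, so its equilibrium swap value is the same multiple $\mu_I d_P^\star$, settling the second bullet.

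The core of the argument is the JIT LP's problem in the two uninformed scenarios. Here I would write its conditional utility as the price-impact value captured by withdrawing at the post-swap state---a term the arbitrageur's reverse trade leaves intact for the JIT LP precisely because it has already exited---plus the $\lambda$-weighted fee term, minus the deposit value. Substituting the uninformed trader's best-response swap (a multiple of the total post-deposit liquidity) and differentiating in $d_J$ yields a stationarity condition in the ratio $\nu=d_J/d_P$. Two marginal effects compete: increasing $d_J$ raises the JIT LP's share of both fees and price impact, but deepens the pool and thereby shrinks the price impact itself; crucially, $\lambda$ scales only the fee channel. Solving the stationarity condition delivers $\nu(\lambda,\pi)$, and the requirement that this stationary point be an interior maximizer---rather than the objective being monotone in $d_J$, which would make the optimal deposit unbounded and destroy any non-trivial equilibrium---is exactly the inequality $\zeta_U>\underline{\zeta}(f,\lambda,\pi)$. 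As a consistency check I would verify that $\lambda=1$ collapses $(2+f)\sqrt{(1+f)(1+\lambda f)}-2$ to $f(3+f)$, recovering the baseline threshold.

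Backing out the uninformed trader's equilibrium response then gives its swap value as a multiple $\mu(\lambda,\pi)$ of passive liquidity; monotonicity in $\pi$ follows as in Proposition \ref{thm:subgame-eq}, since a larger arrival probability raises the expected pool depth on which the uninformed trader must condition when submitting an order blind to the JIT LP's arrival. Returning to Period 1, each passive LP's endowment enters its utility linearly and symmetrically---the equilibrium swap volumes and JIT deposit are all proportional to $d_P$, so the fee revenue per unit is scale-free---whence the aggregate passive utility takes the form $p\,\mathcal{U}(\lambda,\pi)\,d_P$ with $\mathcal{U}(\lambda,\pi)=\alpha\mathcal{C}+(1-\alpha)\mathcal{R}(\pi,\lambda)$, where $\mathcal{C}$ is the unchanged adverse-selection cost and $\mathcal{R}(\pi,\lambda)$ is the modified fee revenue obtained by replacing the passive pool share in the fee-revenue decomposition of Theorem \ref{thm:threshold} with the passive effective share and inserting the equilibrium volumes. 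The best response is bang-bang, giving $d_P^\star=e_P\,\mathbbm{1}\{\mathcal{U}(\lambda,\pi)\geq0\}$, with the boundary case $\mathcal{U}(\lambda,\pi)=0$ resolved in favor of depositing by the tie-breaking rule of Assumption 1.

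The step I expect to be the main obstacle is the algebra of the JIT LP's first-order condition and the extraction of the threshold in the stated closed form. The appearance of $\sqrt{(1+f)(1+\lambda f)}$ signals that $\nu(\lambda,\pi)$ is the root of a quadratic whose coefficients separate the full-fee factor $(1+f)$ that governs the swap from the retained-fee factor $(1+\lambda f)$ that governs the JIT LP's income, so the existence condition should surface as a sign condition on that quadratic; forcing the expression into precisely the displayed denominator $2+\pi[(2+f)\sqrt{(1+f)(1+\lambda f)}-2]$ is the delicate computation. Uniqueness of the full equilibrium then follows because every subgame best response is single-valued---strict concavity of the JIT LP's and traders' objectives on the relevant region, and linearity together with Assumption 1 for the passive LPs---so backward induction pins down a unique profile whenever $\zeta_U>\underline{\zeta}(f,\lambda,\pi)$.
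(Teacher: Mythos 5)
Your overall architecture is the paper's own: backward induction, treating the two-tiered model as the baseline in which the JIT LP's income factor $(1+f)$ is replaced by $(1+\lambda f)$ and the passive LPs' pool share by the effective share; your informed-scenario argument, the bang-bang passive-LP stage, and the uniqueness reasoning all track the paper's Appendix A.7. The gap sits exactly at the step you flag as ``the delicate computation,'' and it is not merely delicate --- carried out as you describe, it produces a \emph{different} threshold from the one stated. The sign condition on the quadratic defining the JIT LP's best response involves only the retained-fee factor: repeating Lemma A.7 with income $p(1+\lambda f)q_R$ gives a finite interior maximizer if and only if $q_R>\lambda f\tilde{d}_P$, i.e.\! $\mu>\lambda f$ (this is visible in the paper's own expression for $\tilde{\nu}(\lambda,\pi)$, whose denominator is $\tilde{\mu}-\lambda f$). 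Pushing the boundary $\mu=\lambda f$ through the uninformed trader's first-order condition
\begin{align*}
    (1-\pi)\cdot\frac{1}{(1+\mu)^2}+\pi\cdot\frac{(2+\mu)\sqrt{(1+\lambda f)(1+\mu)}}{2(1+\mu)^2}=\frac{1+f}{\zeta_U}
\end{align*}
yields the existence condition $\zeta_U>2(1+f)(1+\lambda f)^2/\bigl(2+\pi\lambda f(3+\lambda f)\bigr)$. The stated $\underline{\zeta}(f,\lambda,\pi)$ is instead equivalent to $\mu>f$ --- it is what one obtains by evaluating the first-order condition at $\mu=f$ --- and for $\lambda<1$ these conditions genuinely differ (the stated one is strictly stronger; both collapse to the baseline only at $\lambda=1$, which is why your consistency check passes). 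So your assertion that the interior-maximizer requirement ``is exactly'' $\zeta_U>\underline{\zeta}(f,\lambda,\pi)$ would not survive the computation. In fairness, the paper's appendix carries the same tension, stating the proposition's threshold while defining $\tilde{\nu}$ with denominator $\tilde{\mu}-\lambda f$; but a complete proof must either supply an argument for why $\mu>f$ (and not merely $\mu>\lambda f$) is necessary for equilibrium existence, or conclude that the tight threshold is the $(1+\lambda f)$-version, and your write-up does neither.

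Two smaller inaccuracies. First, your parenthetical that the uninformed trader's swap is ``a multiple of the total post-deposit liquidity'' is wrong on the timing: the order is submitted before the JIT deposit and cannot condition on it, so in equilibrium it is a multiple of the \emph{passive} liquidity alone, with the expected post-deposit depth only shaping that multiple; this matters because $q_R$ must be held fixed when you differentiate the JIT LP's utility in $d_J$. Second, your account of the mixed factor $\sqrt{(1+f)(1+\lambda f)}$ --- as arising from coefficients of the $\nu$-quadratic that ``separate'' $(1+f)$ from $(1+\lambda f)$ --- is off: that quadratic contains only $(1+\lambda f)$; the $(1+f)$ enters solely through the right-hand side of the trader's first-order condition (the trader pays the full fee), and the mixed square root appears only when that condition is evaluated at $\mu=f$.
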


Proposition \ref{thm:twotier-eq} shows that important characteristics of the baseline model's equilibrium are preserved in the two-tiered fee system: passive LPs decide whether or not to deposit based on the per-unit utility, and equilibrium swap and JIT liquidity deposit sizes are linear in the amount of passive liquidity provided. The comparative statics of the uninformed trader's swap value with respect to $\pi$ also carry over. The main difference is that important quantities, such as the lower bound on $\zeta_U$ for the existence of equilibria and the passive LPs' per-unit fee revenue, now depend on $\lambda$ as well as $\pi$. The following proposition characterizes some of the comparative statics with respect to $\lambda$.

\begin{proposition}\label{thm:dampen}
Let $\pi\in(0,1]$. Suppose that a unique non-trivial subgame-perfect Nash equilibrium for all $\lambda \in [0,1]$, holding the other exogenous parameters fixed. When $\lambda$ decreases, i.e.\!\, a larger proportion of fees is transferred from the JIT LP to the passive LPs:
\begin{itemize}
\setlength\itemsep{0em}
    \item The ratio of the JIT LP's deposit size to the amount of passive liquidity provided decreases in equilibrium.
    \item The ratio of the uninformed trader's swap value to the amount of passive liquidity provided decreases.
\end{itemize}
\end{proposition}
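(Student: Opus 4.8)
The plan is to build directly on the subgame equilibrium characterization established in the proof of Proposition \ref{thm:twotier-eq}, which pins down the two equilibrium ratios $\nu(\lambda,\pi)=d_J^\star/d_P^\star$ and $\mu(\lambda,\pi)$ (the uninformed swap value per unit of passive liquidity) through a pair of first-order conditions. The crucial structural observation, already noted in the text preceding the proposition, is that the uninformed trader pays the full fee regardless of how it is split between the two types of LPs, so the transfer rate $\lambda$ enters the trader's problem \emph{only} through the JIT deposit ratio $\nu$ that he anticipates. I would therefore write the trader's best response as $\mu=\Psi(\nu,\pi)$, with $\Psi$ the same function as in the baseline model, and the JIT LP's best response as $\nu=\Phi(\mu,\lambda)$, where $\lambda$ enters only through the JIT LP's effective fee share $\lambda d_J/(d_P+d_J)$.

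For the first bullet I would show that the equilibrium ratio $\nu(\lambda,\pi)$ is increasing in $\lambda$. Economically, raising $\lambda$ makes each unit of the JIT LP's pro-rata fee worth more, so the marginal benefit of enlarging its deposit to capture fee share rises relative to the unchanged marginal loss from a diluted price impact; the JIT LP thus deposits more relative to passive liquidity, i.e.\ the partial effect $\partial\Phi/\partial\lambda>0$. Rigorously, I would take the closed form for $\nu(\lambda,\pi)$ obtained in Proposition \ref{thm:twotier-eq} — whose dependence on $\lambda$ enters through $\sqrt{(1+f)(1+\lambda f)}$, as already visible in the existence threshold $\underline{\zeta}(f,\lambda,\pi)$ — and differentiate this equilibrium expression in $\lambda$. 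Since this form already incorporates the trader–JIT feedback, the sign of the total derivative is exactly what is needed, and since $\sqrt{1+\lambda f}$ is increasing in $\lambda$ the computation reduces to checking that $\nu$ is increasing in this square-root term over the admissible region $\zeta_U>\underline{\zeta}(f,\lambda,\pi)$.

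For the second bullet I would exploit the decoupling. Because $\Psi$ is independent of $\lambda$ and increasing in $\nu$ (the trader swaps more when he anticipates a deeper pool, the same monotonicity underlying the $\pi$-comparative statics in Propositions \ref{thm:subgame-eq} and \ref{thm:twotier-eq}), the composition $\mu(\lambda,\pi)=\Psi\!\left(\nu(\lambda,\pi),\pi\right)$ inherits the monotonicity of $\nu$ in $\lambda$, so $\mu(\lambda,\pi)$ decreases as $\lambda$ decreases. If one prefers to work with the coupled fixed-point system $\mu=\Psi(\nu,\pi),\ \nu=\Phi(\mu,\lambda)$ directly, total differentiation yields
\begin{align*}
\frac{\partial\mu}{\partial\lambda}=\frac{\Psi_\nu\,\Phi_\lambda}{1-\Psi_\nu\Phi_\mu},\qquad \frac{\partial\nu}{\partial\lambda}=\frac{\Phi_\lambda}{1-\Psi_\nu\Phi_\mu},
\end{align*}
and both signs follow from $\Phi_\lambda>0$, $\Psi_\nu>0$, together with $1-\Psi_\nu\Phi_\mu>0$, the slope condition guaranteed by the uniqueness of the non-trivial equilibrium assumed in the hypothesis.

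The main obstacle is the sign verification behind $\partial\nu/\partial\lambda>0$: the JIT LP's objective trades off fee capture against price impact, and its optimizer involves a square root in $\lambda$, so establishing positivity cleanly requires care over the full region in which a non-trivial equilibrium exists. I expect the cleanest route is to differentiate the explicit expression for $\nu(\lambda,\pi)$ from Proposition \ref{thm:twotier-eq} rather than invoking the implicit function theorem, reducing the whole argument to a single-variable monotonicity check in $\sqrt{1+\lambda f}$; the second bullet then follows essentially for free from the decoupling and the already-established monotonicity of the trader's best response in anticipated pool depth.
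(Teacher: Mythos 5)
Your proposal has the right economic intuition but two genuine gaps, both traceable to the same structural misreading. First, the ``decoupling'' premise is false: the trader moves before the JIT LP and therefore internalizes the JIT LP's reaction \emph{function} (including its slope), not a scalar anticipated deposit ratio. After substituting that reaction function, the uninformed trader's first-order condition in the two-tiered model is
\begin{align*}
(1-\pi)\cdot\frac{1}{(1+\mu)^2}+\pi\cdot\frac{(2+\mu)\sqrt{(1+\lambda f)(1+\mu)}}{2(1+\mu)^2}=\frac{1+f}{\zeta_U},
\end{align*}
in which $\lambda$ appears explicitly --- it is not ``the same function as in the baseline model,'' and $\lambda$ does not enter only through the level of $\nu$. (One can eliminate $\lambda$ using the JIT LP's first-order condition $\bigl(1+\tfrac{\mu}{1+\nu}\bigr)^2=\tfrac{1+\mu}{1+\lambda f}$, but that yields a derived equilibrium relation, not a best response to $\nu$, so your justification for $\Psi_\nu>0$ does not apply verbatim.) Ironically, this explicit $\lambda$-dependence makes your second bullet the \emph{easy} one: the paper simply differentiates this equation implicitly to obtain $\d\mu/\d\lambda>0$ directly, together with the explicit bound $\d\mu/\d\lambda\leq\frac{(1+\mu)(2+\mu)}{(1+\lambda f)(4+\mu)}\cdot f$.

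Second, and more seriously, your route to the first bullet --- differentiate the closed form for $\tilde{\nu}(\lambda,\pi)$ and ``reduce to a single-variable monotonicity check in $\sqrt{1+\lambda f}$'' --- ignores that $\tilde{\mu}(\lambda,\pi)$ inside that closed form also moves with $\lambda$. The best response $\nu^{BR}(\mu,\lambda)$ is \emph{not} monotone in $\mu$ (indeed $\d\nu/\d\mu<0$ is exactly the crowding-out regime of Theorem \ref{thm:threshold}), so the indirect effect through $\mu$ can oppose the positive direct effect, and the sign of the total derivative cannot be read off the square root. This is precisely the difficulty the paper's Lemma A.15 resolves: it shows that if $\d\mu/\d\lambda>0$ and is bounded above by (essentially) $\frac{1+\mu}{1+\lambda f}\cdot f$, then quasiconcavity of the JIT LP's objective forces $\d\nu/\d\lambda>0$; the implicit-differentiation bound above verifies that hypothesis. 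Your fallback fixed-point computation has the same hole in different clothing: in the complementing regime, where $\Phi_\mu>0$, the condition $1-\Psi_\nu\Phi_\mu>0$ does not follow from uniqueness alone (uniqueness only rules out a second transversal crossing, leaving a weak inequality at a possible tangency), so the denominator's sign --- which is exactly the ``direct effect dominates the feedback'' claim --- is assumed rather than proven. Repairing the argument requires a quantitative bound on $\d\mu/\d\lambda$ of the kind the paper derives, which in turn requires acknowledging the explicit $\lambda$-dependence of the trader's first-order condition that your premise denies.
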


For a fixed uninformed swap size, as $\lambda$ is reduced, the JIT LP has a diminished best-response deposit size due to having a smaller effective share of the liquidity pool. Allowing the uninformed trader to revise his order knowing that JIT liquidity will decrease means that they swap less aggressively due to the increased price impact resulting form a shallower pool. However, in equilibrium, the JIT LP also needs to consider the uninformed trader's reduced demand, which might increase the best-response deposit size since it is not necessarily monotonic in the swap size. Thus, at first glance, it is uncertain if the JIT LP's equilibrium deposit size increases or decreases in $\lambda$, but Proposition \ref{thm:dampen} clarifies this by showing that these economic forces act alongside each other in a manner that dampens both the JIT LP's deposit size and uninformed trader's swap size in equilibrium.



\paragraph{Preventing Liquidity Freezes} We consider the effect of the fee transfer on the passive LPs' utility. On a surface level, it would seem that transferring more fees to the passive LPs increases their utility, but the transfer's dampening effect suggests that there are contrasting economic forces at play. We find that the uninformed trader's base level of demand (with no transfers) needs to be sufficiently high in order for a higher transfer rate to boost the passive LPs' utility for the entire range of $\lambda\in[0,1]$.

\begin{theorem}\label{thm:preventfreeze}
Let $\pi\in[0,1]$. Suppose that a unique non-trivial subgame-perfect Nash equilibrium exists for all $\lambda\in$ $[0,1]$, holding the other exogenous parameters fixed. If $f\in[0,\pi)$, then there exists $\hat{\zeta}(f,\pi)$, increasing in $f$, such that if $\zeta_U\geq\hat{\zeta}(f,\pi)$, then the passive LPs' per-unit utility decreases in $\lambda$ for all $\lambda\in[0,1]$.
\end{theorem}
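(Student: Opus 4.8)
The plan is to reduce the statement to a sign condition on the per-unit fee revenue and then establish that condition by a threshold argument in $\zeta_U$. The adverse-selection cost $\mathcal{C}$ arises purely from informed trading, against which the JIT LP never deposits (Proposition~\ref{thm:twotier-eq}), so it is independent of $\lambda$ and $\partial_\lambda\mathcal{U}(\pi,\lambda)=(1-\alpha)\,\partial_\lambda\mathcal{R}(\pi,\lambda)$. For $\alpha<1$ it therefore suffices to show $\partial_\lambda\mathcal{R}(\pi,\lambda)<0$ for every $\lambda\in[0,1]$; for $\alpha=1$ the utility is constant in $\lambda$ and the claim holds trivially. Note the hypothesis that unique non-trivial equilibria exist for all $\lambda\in[0,1]$ is exactly what makes the map $\lambda\mapsto\mathcal{R}(\pi,\lambda)$ well-defined and differentiable over the whole interval.

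First I would make $\mathcal{R}(\pi,\lambda)$ explicit. Using the fee-revenue decomposition together with the effective-share formula of the two-tiered structure, write
\[
\mathcal{R}(\pi,\lambda)\;\propto\;\pi\cdot\frac{1+(1-\lambda)\nu}{1+\nu}\cdot V_J(\lambda,\pi)\;+\;(1-\pi)\cdot V_0(\lambda,\pi),
\]
where $\nu=\nu(\lambda,\pi)$ is the equilibrium JIT-to-passive deposit ratio (matching the baseline share $\tfrac{1}{1+\nu}$ at $\lambda=1$), and $V_J,V_0$ are the bundled uninformed-plus-reverse swap volumes per unit of passive liquidity, both built from the uninformed swap multiple $\mu(\lambda,\pi)$ that the trader commits to without knowing whether the JIT LP arrives. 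The multiples $\nu(\lambda,\pi)$ and $\mu(\lambda,\pi)$ solve the JIT LP's and the uninformed trader's first-order conditions from Proposition~\ref{thm:twotier-eq}; I would solve these to obtain closed forms in $(\lambda,\pi,f,\zeta_U)$ and substitute them in, reducing $\mathcal{R}$ to an explicit elementary function.

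Next I would differentiate, decomposing $\partial_\lambda\mathcal{R}$ into a \emph{direct} term from the factor $\tfrac{1+(1-\lambda)\nu}{1+\nu}$ with $\nu$ held fixed (negative, since raising $\lambda$ shrinks the passive effective share) and \emph{indirect} terms channelled through $\partial_\lambda\nu$ and $\partial_\lambda\mu$. Proposition~\ref{thm:dampen} supplies the signs of these channels, namely that $\nu$ and $\mu$ both increase in $\lambda$, which lets me bound the indirect (volume/complementarity) terms. The goal is to factor the result as a strictly positive quantity times an expression that is monotone in $\zeta_U$, so that $\operatorname{sign}(\partial_\lambda\mathcal{R})$ is governed by whether $\zeta_U$ exceeds a critical value; I would then define $\hat{\zeta}(f,\pi)$ as the supremum over $\lambda\in[0,1]$ of these critical values, whence $\zeta_U\geq\hat{\zeta}(f,\pi)$ forces $\partial_\lambda\mathcal{R}<0$ uniformly in $\lambda$. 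I expect the hypothesis $f<\pi$ to be precisely what prevents the volume effect from overturning the direct share effect in the large-$\zeta_U$ regime (it should appear as the sign of the coefficient of $\zeta_U$). Finally, monotonicity of $\hat{\zeta}(f,\pi)$ in $f$ would follow either by reading it off the closed form or by applying the implicit function theorem to the defining equation and checking the relevant cross-partial.

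The main obstacle is the uniform-in-$\lambda$ requirement in the third step: because $\mathcal{R}$ depends on $\lambda$ both directly and through the implicitly determined $\nu(\lambda,\pi)$ and $\mu(\lambda,\pi)$, the derivative is a sum of competing terms, and proving it stays negative across the \emph{entire} interval $[0,1]$---rather than merely at an interior point---demands a clean factorization that isolates $\zeta_U$, or a careful worst-case bound of the indirect terms over $\lambda$. Pinning down the exact threshold $\hat{\zeta}(f,\pi)$ and verifying its monotonicity in $f$ is where the bulk of the algebra will lie.
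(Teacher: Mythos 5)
Your skeleton---reducing to $\partial_\lambda \mathcal{R}(\lambda,\pi)<0$, decomposing into a direct share effect plus indirect volume effects, locating $f<\pi$ as the enabling condition, and converting a trade-size condition into a $\zeta_U$ threshold---matches the paper's architecture, but two steps in your execution plan would fail. First, you propose to ``solve the first-order conditions to obtain closed forms in $(\lambda,\pi,f,\zeta_U)$'' and substitute them into $\mathcal{R}$. No such closed form exists: the uninformed trader's first-order condition
\begin{align*}
(1-\pi)\cdot\frac{1}{(1+\mu)^2}+\pi\cdot\frac{(2+\mu)\sqrt{(1+\lambda f)(1+\mu)}}{2(1+\mu)^2}=\frac{1+f}{\zeta_U}
\end{align*}
cannot be inverted in elementary terms, so $\mathcal{R}$ is never an explicit elementary function of $\zeta_U$ and there is nothing from which to ``factor out'' $\zeta_U$. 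The paper avoids this entirely by keeping the equilibrium trade size $\mu$ as the free parameter: it proves a sufficient condition stated purely in $(\mu,f,\pi)$ (its Lemma A.16), shows that condition is equivalent to $f\leq\bar f(\mu,\pi)$ with $\bar f(\mu,\pi)=\frac{\pi\mu(1+\mu)}{2+\mu(2+\mu)+\pi[(1+\mu)^{3/2}-1]}$ increasing in $\mu$ with limit $\pi$ (this is exactly where the hypothesis $f\in[0,\pi)$ enters), and only at the end transports the condition to $\zeta_U$ through the increasing map $\zeta_U(\mu)=(1+f)/M_T(\mu)$; monotonicity of $\hat{\zeta}$ in $f$ then follows from composing increasing bijections, with no implicit function theorem or closed-form threshold required.

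Second, and more fundamentally, you assert that Proposition \ref{thm:dampen} ``supplies the signs of these channels\ldots which lets me bound the indirect terms.'' Signs alone cannot do this. Since $\partial_\lambda\mu>0$, the volume channel pushes $\mathcal{R}$ \emph{up}, i.e., it opposes the negative direct share effect, and the entire content of the theorem is a quantitative comparison of these opposing magnitudes. What the paper actually uses is not the statement of Proposition \ref{thm:dampen} but a bound extracted from inside its proof---the implicit differentiation of the trader's first-order condition, giving $\mu'(\lambda)\leq\frac{(1+\mu)(2+\mu)}{(1+\lambda f)(4+\mu)}\,f<\frac{1+\mu}{1+\lambda f}\,f$---which is then threaded through a chain of inequalities whose hypothesis is $\lambda$-free and propagates to every $\lambda\in[0,1]$ because its left side is decreasing and its right side increasing in $\lambda$. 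That lemma is precisely the ``clean factorization or careful worst-case bound'' you flag as the main obstacle: your proposal correctly identifies the crux, but supplies neither the missing lemma nor tools adequate to produce it.
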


Theorem \ref{thm:preventfreeze} provides us with the existence of a threshold $\hat{\zeta}(f,\pi)$ that serves as a \textit{sufficient} condition under which the passive LPs' per-unit utility decreases in $\lambda$ for entire range of $\lambda\in[0,1]$. If $\zeta_U<\hat{\zeta}(f,\pi)$ or $f\geq\pi$, then the passive LPs' per-unit utility is not guaranteed to be decreasing in $\lambda$ for all $\lambda\in[0,1]$. If $f\in[0,\pi)$ and $\zeta_U\geq\hat{\zeta}(f,\pi)$, then transferring a higher proportion of fees from the JIT LP to the passive LPs always increases the passive LPs' per-unit utility. 

This condition can be used to potentially avert a JIT LP-induced liquidity freeze. As previously mentioned, there exists an interval $[\underline{\alpha},\overline{\alpha}]\subset[0,1]$ on which such liquidity freezes occur. If the condition holds, then there exists a subinterval $[\underline{\beta},\overline{\beta}]\subseteq[\underline{\alpha},\overline{\alpha}]$ where lowering $\lambda$ enough pushes the passive LPs' per-unit utility above zero. Formally, for all $\alpha\in[\underline{\beta},\overline{\beta}]$, there exists $\lambda^\star(\alpha)$ such that 
\begin{align*}
    \mathcal{U}(\lambda,\pi) \geq 0 \ \forall \ \lambda\leq\lambda^\star(\alpha), \\
    \mathcal{U}(\lambda,\pi) < 0 \ \forall \ \lambda>\lambda^\star(\alpha).
\end{align*}
In the range $\alpha\in[\underline{\beta},\overline{\beta}]$, there would be a liquidity freeze without the two-tiered fee structure, but choosing any $\lambda\leq\lambda^\star(\alpha)$ prevents the liquidity freeze. Moreover, the choice of $\lambda$ that is optimal for the passive LPs' per-unit utility is $\lambda=0$, i.e.\! the passive LPs take all of the JIT LP's fee revenue, leaving the price impact as the only source of revenue for the JIT LP.

\paragraph{Maximizing Welfare}

We define welfare as the collective utility of all agents: the passive LPs, JIT LPs, traders, and arbitrageurs. While it may appear that the redistribution of coins within this closed system of agents leads to zero welfare, this is not the case. The uninformed trader's private valuation of the risky coin generates positive welfare via gains from trade, as the uninformed trader derives value from buying one coin and selling the other regardless of the actual price dynamics. However, conditional on the arrival of an informed trader, welfare becomes zero as pool value is simply transferred from the passive LPs to the informed trader. 

To maximize welfare in the market while mitigating the risk of market breakdown, it is important to select an appropriate level of $\lambda$. Multiple incentives need to be balanced: the two-tiered fee structure must ensure that passive LPs are incentivized to participate in the liquidity pool, while also ensuring that JIT LPs are not excessively burdened by the fee transfer and can encourage a sufficiently large swap volume from traders. The welfare-optimal choice of $\lambda$ is characterized by the following results.

\begin{theorem}\label{thm:maxwelfare}
Let $\pi\in[0,1]$. Suppose that a unique non-trivial subgame-perfect Nash equilibrium exists for all $\lambda \in$ $ [0,1]$, holding the other exogenous parameters fixed. 
\begin{itemize}
\setlength\itemsep{0em}
    \item If $\mathcal{U}(\lambda,\pi) < 0$ for all $\lambda \in [0,1]$, then welfare is zero for all $\lambda\in[0,1]$, reflecting a consistent absence of passive LP participation and a consequent market breakdown.
    \item Otherwise, if there exists $\lambda\in[0,1]$ such that $\mathcal{U}(\pi,\lambda)\geq0$, then welfare is maximized at the largest value of $\lambda$ at which passive LPs participate in the market, i.e.\!\, $\lambda^\star=\max\{\lambda\in[0,1]:\mathcal{U}(\pi,\lambda)\geq0\}$.
\end{itemize}
\end{theorem}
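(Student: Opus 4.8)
The plan is to reduce the welfare objective to the gains from trade generated by the uninformed trader alone, show that this quantity is monotone in the equilibrium swap size, and then invoke Proposition~\ref{thm:dampen} to convert monotonicity in swap size into monotonicity in $\lambda$.

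First I would write welfare as the sum of all agents' realized utilities and decompose it over the four scenarios in $\Omega$. Because coins are conserved across the closed set of agents and every agent other than the uninformed trader values the risky coin at the terminal market price, most terms cancel. In the informed scenarios the terminal price is $p'$; the JIT LP does not deposit (Proposition~\ref{thm:subgame-eq}), and both the informed trader and the passive LPs value the risky coin at $p'$, so the informed trader's profit exactly offsets the passive LPs' loss and the scenario contributes zero welfare. In the uninformed scenarios the price is restored to $p$ by the arbitrageur, so every agent except the uninformed trader values risky coins at $p$ while the trader values them at $p_U$; conservation then yields a welfare of $(p_U-p)\Delta r_T$, where $\Delta r_T$ is the trader's net change in risky holdings. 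For an uninformed sell this equals $p(1-\zeta_U^{-1})q_R>0$ (depending only on the quantity sold $q_R$), and for an uninformed buy it equals $p(\zeta_U-1)\,\E[\Delta r_T]>0$, where the expectation is over the JIT LP's arrival.

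For the first bullet, if $\mathcal{U}(\pi,\lambda)<0$ for every $\lambda$, then Proposition~\ref{thm:twotier-eq} gives $d_P^\star=e_P\mathbbm{1}\{\mathcal{U}(\pi,\lambda)\geq0\}=0$ for all $\lambda$; with no passive liquidity the traders and JIT LP stay out (the liquidity freeze described after Proposition~\ref{thm:passive-br}), no swap occurs, and welfare is identically zero. For the second bullet, I would restrict to the participation region $\{\lambda:\mathcal{U}(\pi,\lambda)\geq0\}$, on which $d_P^\star=e_P$ is constant, and write the conditional expected welfare as $G(\lambda)=(1-\alpha)[\psi_U W_{\mathsf{US}}(\lambda)+(1-\psi_U)W_{\mathsf{UB}}(\lambda)]$. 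The sell term depends only on $q_R=\mu(\lambda,\pi)e_P$, which is increasing in $\lambda$ by Proposition~\ref{thm:dampen}. The buy term is increasing because both the swap value $\mu(\lambda,\pi)e_P$ and the JIT deposit $\nu(\lambda,\pi)e_P$ increase in $\lambda$ (again by Proposition~\ref{thm:dampen}), and each raises the risky amount delivered by the AMM for a fixed-direction buy, so $\E[\Delta r_T]$ increases. Hence $G$ is positive and strictly increasing in $\lambda$. Since welfare equals $G(\lambda)$ on the participation region and zero off it, the global maximum is attained at the largest participating $\lambda$, namely $\lambda^\star=\max\{\lambda\in[0,1]:\mathcal{U}(\pi,\lambda)\geq0\}$, the maximum existing by continuity of $\mathcal{U}$ in $\lambda$.

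I expect the main obstacle to be the welfare-collapse step: one must track every coin flow --- the swap, the JIT deposit and withdrawal, the arbitrageur's reverse trade, and the fee payments --- and verify they cancel except for the uninformed trader's private surplus. A secondary difficulty is the buy-direction monotonicity, since there (unlike the sell direction) welfare depends on the realized risky amount received, so I must confirm that the two $\lambda$-channels identified by Proposition~\ref{thm:dampen}, namely a larger uninformed order and a deeper JIT deposit, reinforce rather than offset each other in the AMM's output. The degenerate cases are immediate: when $\pi=0$ the fee transfer is vacuous and $G$ is constant in $\lambda$, while $\alpha=1$ is excluded from the second bullet because then $\mathcal{U}(\pi,\lambda)=\mathcal{C}<0$ precludes participation.
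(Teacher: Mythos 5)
Your proposal is correct and follows essentially the same route as the paper's proof: the paper likewise observes that welfare vanishes in the informed scenarios, reduces uninformed-scenario welfare to $(1-\zeta_U^{-1})p(1+f)q_R$ for sells and $(\zeta_U-1)p\,\delta_R(q_S,\tilde{d}_{P}+\tilde{d}_{J})$ for buys, and then invokes the monotonicity of $q_R^\star(\lambda)$, $q_S^\star(\lambda)$, and $\tilde{d}_{J}^\star(\lambda)$ in $\lambda$ (Proposition \ref{thm:dampen}) together with $\delta_R$ being increasing in both arguments to place the maximum at the largest participating $\lambda$. The only discrepancy is immaterial: your sell-side expression omits the factor $(1+f)$ (the fee is also paid in risky coins that the trader values below market, so it too generates surplus), but this does not affect monotonicity in $q_R$ or the conclusion.
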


\vspace{0.5em}

\begin{figure}[h]
\centering
  \includegraphics[scale=0.67]{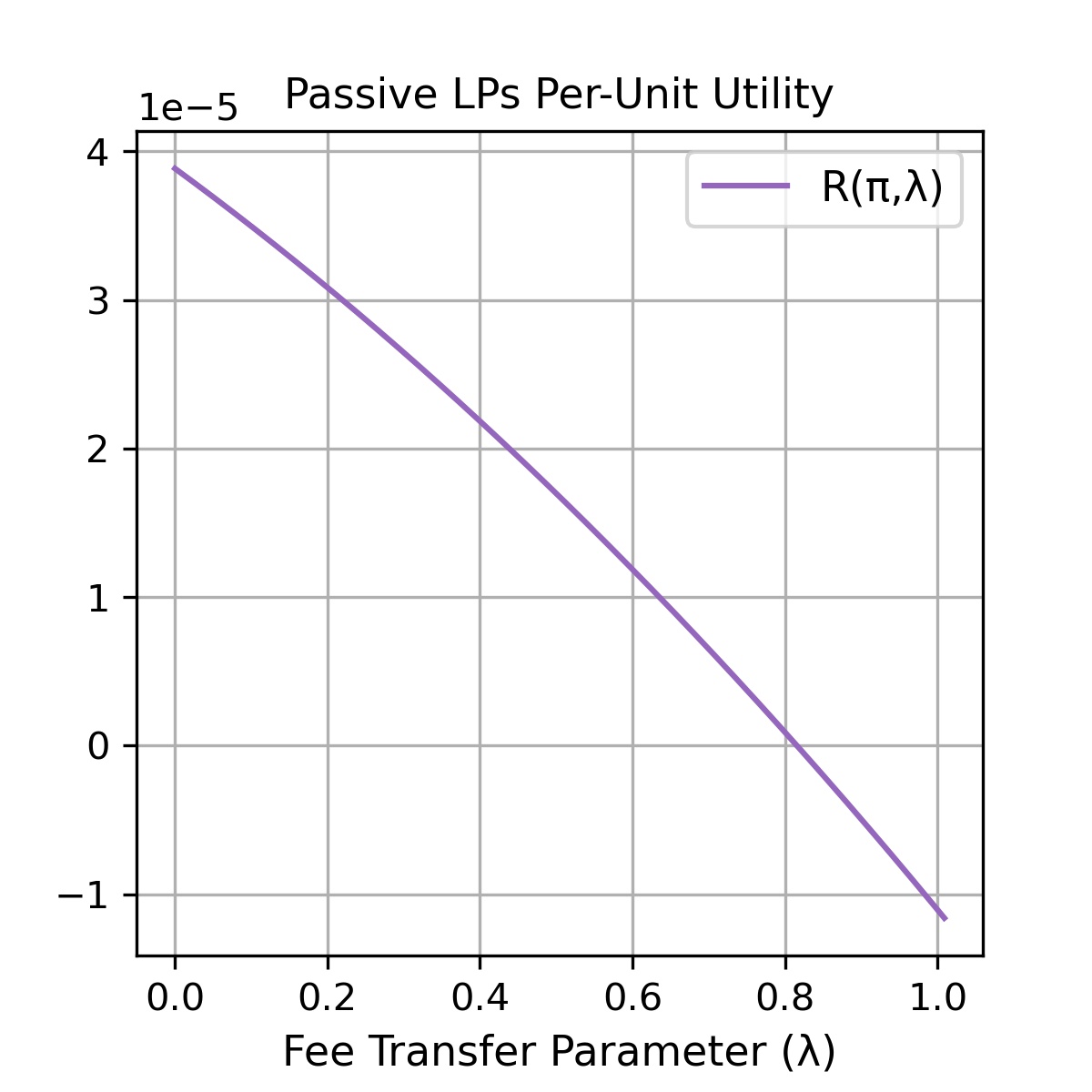} \includegraphics[scale=0.67]{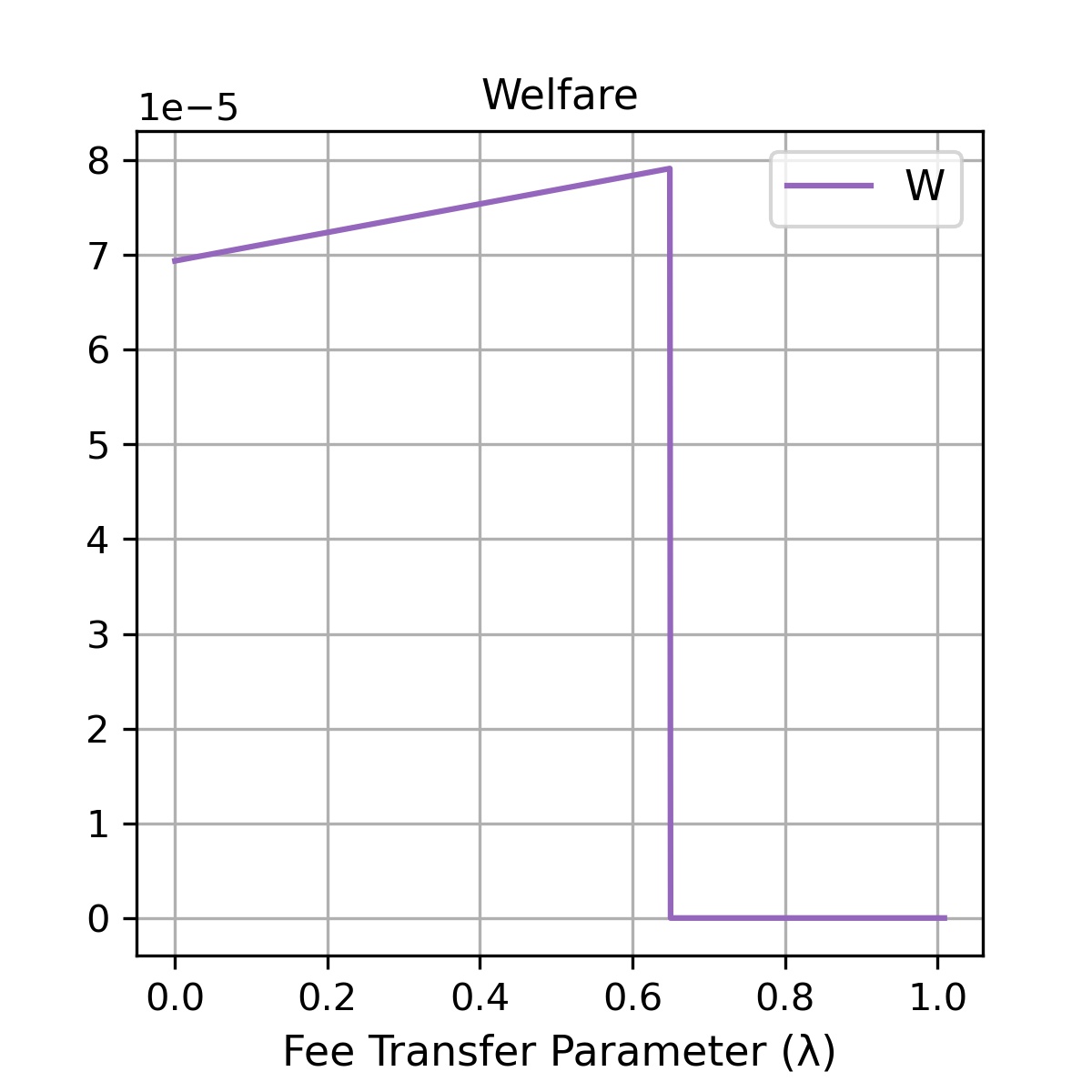} \\[0.25\baselineskip]
    \captionof{figure}{\centering \small Passive LPs' per-unit utility and welfare against $\lambda$. We fixed $\pi=1$, $f=0.003$, $\zeta=1.05$, $\psi=\zeta/(\zeta+1)$, $\zeta_U=1.02$, $\psi_U=\zeta_U/(\zeta_U+1)$, and $\alpha=0.1$.}
    \label{fig:marketdesign}
\end{figure}

The welfare-optimal point is characterized by the highest transfer rate that incentivizes passive LPs to remain in the liquidity pool, yet does not dissuade JIT LPs from rationing their liquidity their too much. If we are in the second case of Theorem \ref{thm:maxwelfare}, then Assumption 1 stipulates that the passive LPs' per-unit utility must be zero when in equilibrium at the welfare-optimal choice of $\lambda$. Figure 4 illustrates this point. As implied by Theorem \ref{thm:dampen}, the choice of $\lambda$ that is optimal for the passive LPs' utility is $\lambda=0$. The parameter values used to generate the figure land us in the second case, where $\lambda^\star\approx0.815$ is the highest fee transfer rate before a JIT LP-induced liquidity freeze occurs. At $\lambda^\star$, the passive LPs' per-unit utility is zero, while the aggregate welfare is maximized.

\paragraph{Practical Implementation}

We discuss the practical implementation of such a two-tiered fee structure in DEXs like Uniswap, with a focus on Uniswap v3's approach using non-fungible tokens (NFTs) to represent individual liquidity positions for LPs.

In Uniswap v3, LPs may use an \textit{NFT position manager} upon depositing liquidity, where LPs receive an NFT containing detailed information about the LP's position, including the specified price range and the amount of liquidity provided. The AMM smart contract tracks the fees owed to each LP's account, represented by these NFTs. Fee calculations consider the trade size, the LP's share of the pool, and the pool's fee specifications. When LPs burn their NFT tokens to redeem their deposits, the corresponding fees, as tracked by the so-called global fee variable in the AMM smart contract, are distributed to them. Currently, the method for calculating fees and the global fee variable are consistent across all LPs and are not influenced by the duration of liquidity provision. To adopt a two-tiered fee structure, we propose the following implementation steps:

\begin{itemize}
    \item \textbf{Commitment Duration Specification}: Upon depositing liquidity, an LP must specify a commitment duration level \(\phi \in \{0,1\}\). Here, \(\phi = 1\) signifies a commitment to provide liquidity for at least one block, defining them as a \textit{long-term LP}. Conversely, \(\phi = 0\) indicates no such commitment, defining them as a \textit{short-term LP}. The commitment mechanism thus classifies passive LPs as long-term and JIT LPs as short-term.
    \item \textbf{Enforcement of Commitment}: LPs are allowed to burn their pool tokens, i.e.\! withdraw, only if their deposit duration exceeds their specified \(\phi\). This mechanism enforces the commitment made at the time of deposit.
    \item \textbf{Differentiated Fee Variables}: The smart contract establishes a transfer rate parameter \(\lambda\in[0,1]\) and tracks two separate global fee variables, one for long-term LPs and another for short-term LPs.
    \item \textbf{Fee Distribution during Swaps}: When a swap occurs, the smart contract computes the short-term LPs' collective share of the pool. The transaction fee is first split pro-rata among the fee variables for short-term and long-term LPs. The smart contract then transfers $(1-\lambda)$ of the transaction fee in the short-term LPs' fee variable to the long-term LPs' fee variable.
    \item \textbf{Burning of Pool Tokens}: When a short/long-term LP burns their pool tokens, the smart contract computes the short/long-term LP's share of the liquidity pool, denoted by $w$, and the short/long-term LP's share of all short/long-term liquidity, denoted by $w_S$/$w_L$. The smart contract then allocates $w$ of the pool and $w_S$/$w_L$ of the short/long-term fee variable to the short/long-term LP. 
\end{itemize}

\section{Cournot Competition between Just-in-Time Liquidity Providers}

In this extension, we consider two competing JIT LPs. In practice, when multiple JIT LPs submit orders to sandwich a swap order with liquidity, the validator selects only one of them to confirm in the block, typically via bundles or gas fee bidding. This reduces to our baseline model with an additional process to determine which JIT LP's transactions to confirm. We instead study the case where the JIT LPs compete \`{a} la Cournot. Under Cournot competition, all of JIT LPs' deposit and withdrawal orders are confirmed. The events of Periods 3 and 4 are modified as follows.

\paragraph{Period 3*: Competing JIT LPs Decide Whether to Provide Liquidity.} In Period 3, two JIT LPs, indexed by $j\in\{1,2\}$ may arrive. Each JIT LP is endowed with $e_J=\bar{\nu}e_P$ risky coins and $pe_J$ stable coins, and arrives with probability $\pi\in[0,1]$ independently of their competitor. Conditional on arrival, a JIT LP views the swap order in the public mempool and decides on the amount of liquidity to provide. In addition to assuming that JIT LPs can determine whether an informed or uninformed trader submitted the swap order, we also suppose that each JIT LP receives a perfect signal about the arrival of their competitor. Conditional on arrival, JIT LP $j$ submits an order to deposit $d_J^{(j)}\in[0,e_J]$ risky coins and $pe_J$ stable coins right before the swap, and an order to withdraw its share of the pool right after the swap.

\paragraph{Period 4*: Settlement of Transactions Under Competition.} In Period 4, the block containing the swap and JIT liquidity transactions is validated, leading to the following sequence of events:
\begin{itemize}
    \item \textbf{JIT LP Deposits:} All orders from arriving JIT LPs to deposit into the liquidity pool are executed.
    \item \textbf{Swap Execution and Fee Distribution:} The swap order of the trader is executed. The passive LPs collectively receive a pro-rata-share of the transaction fee, where the passive LPs share is
    \begin{align*}
        &(1-\mathbbm{1}\{\text{JIT LP $1$ arrives}\})(1-\mathbbm{1}\{\text{JIT LP $2$ arrives}\}) \\[0.25\baselineskip]
        &\hspace{1cm}+\sum_{j\in\{1,2\}}\frac{d_P}{d_P+d_J^{(j)}}\cdot\mathbbm{1}\{\text{JIT LP $j$ arrives}\}\cdot(1-\mathbbm{1}\{\text{JIT LP $-j$ arrives}\}) \\[0.25\baselineskip]
        &\hspace{1cm}+\frac{d_P}{d_P+d_J^{(1)}+d_J^{(2)}}\cdot\mathbbm{1}\{\text{JIT LP $1$ arrives}\}\cdot\mathbbm{1}\{\text{JIT LP $2$ arrives}\}.
    \end{align*}
    The remaining share of fees is earned by the arriving JIT LP(s).
    \item \textbf{JIT LP Withdrawals:} Post-swap, all arriving JIT LPs withdraw their share of the pool. 
\end{itemize}

Our model of competition changes the strategy space of the JIT LPs, as only the JIT LPs can determine the presence of their competitor. Let $\Omega_j=\{\textsf{NA}_{-j},\textsf{A}_{-j}\}$, which correspond to the events of JIT LP $-j$ not arriving and arriving, respectively. The strategy space for JIT LP $j$ is now
\begin{align*}
    d_J:[0,e_P]\times([0,e_P]\times\Omega\to\R_+^2\backslash\R_{++}^2)\times\Omega\times\Omega_j\to[0,e_J].
\end{align*}
The main differences are the extra information received by a JIT LP about their competitor, and the introduction of an endowment for the JIT LP, which upper bounds their liquidity supply. These are crucial to assume since when more than one JIT LP arrives, both will want to deposit as much as possible, making a constraint on the deposit amount necessary to have equilibria. For the ease of computation and explainability, we make the following assumptions. \\[-0.75\baselineskip]

\noindent \textbf{\textsc{Assumption 2.}} \textit{We assume the following:}
\begin{itemize}
    \item \textit{The uninformed private value shock size $\zeta_U$ takes values in $[\underline{\zeta},\overline{\zeta}]$ where $\underline{\zeta}>\underline{\zeta}(f,\pi)$.}
    \item \textit{Given $[\underline{\zeta},\overline{\zeta}]$, we assume that $\bar{\nu}$ is sufficiently large such that when only one JIT LP arrives, their optimal deposit amount does not exceed their endowment.}
\end{itemize}

This assumption states that for any $\zeta_U\in[\underline{\zeta},\overline{\zeta}]$, the JIT LP's liquidity constraint does not bind if they are the only JIT LP in the market, and instead only binds when their competitor arrives. This allows us to isolate the upstream effects of monopoly power and Cournot competition on traders and passive LPs without concern for if the endowment will impact a JIT LP's monopoly power when their competitor is not present. We first consider the subgame between the informed or uninformed trader.

\begin{proposition}\label{thm:subgame-eq-comp}
There exists a unique non-trivial Nash equilibrium in the subgame between the traders and JIT LPs. In the equilibrium outcome:
\begin{itemize}
    \item The value of the informed trader's buy and sell swap orders is a fixed multiple $\mu_I$ of the amount of passive liquidity provided, where $\mu_I$ is the same constant as in Proposition \ref{thm:passive-br}.
    \item The value of the uninformed trader's buy and sell swap order is a fixed multiple $\mu_C(\pi)$ of the amount of passive liquidity provided where $\mu_C(\pi)$ is increasing in $\pi$ for $\bar{\nu}$ sufficiently high.
    \item Conditional on the arrival of a single JIT LP, that JIT LP provides liquidity only when facing an uninformed trader. The amount of liquidity provided is a fixed multiple $\nu_C(\pi)$ of the amount of passive liquidity provided.
    \item Conditional on the arrival of both JIT LPs, each JIT LP provides liquidity only when facing an uninformed trader. Both JIT LPs will deposit their entire endowment.
\end{itemize}
\end{proposition}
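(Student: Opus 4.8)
The plan is to solve the subgame by backward induction: first pin down the JIT LPs' deposit decisions conditional on the submitted swap order, then optimize the trader's order over the randomness in JIT LP arrivals. Throughout I would exploit the homogeneity of the AMM pricing function and of the utilities in the reserve levels, which lets me write every equilibrium quantity as a fixed multiple of the passive liquidity $d_P$; this is what makes $\mu_I$, $\mu_C(\pi)$, and $\nu_C(\pi)$ well-defined constants independent of $d_P$.

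First I would dispose of the informed scenarios. As in the baseline (Proposition \ref{thm:subgame-eq}), each JIT LP observes that the order is informed and that depositing exposes it to adverse selection whose per-unit cost is unaffected by the competitor's action; hence depositing zero is optimal for each JIT LP regardless of the other's arrival or deposit. With no JIT liquidity present, the informed trader's subgame is identical to the baseline one against a pool of depth $d_P$, so its solution is the same multiple $\mu_I d_P$ as in Proposition \ref{thm:passive-br}.

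The core of the argument is the uninformed scenario, which I would split into the three arrival events. In the one-JIT-LP event the problem reduces to the baseline monopolist deposit problem, and Assumption 2 guarantees the endowment does not bind, so the interior best response is a fixed multiple $\nu_C(\pi)d_P$ of $d_P$. In the both-arrive event the two JIT LPs play a simultaneous-move Cournot game in deposits, taking the already-submitted order as given. Writing JIT LP $j$'s payoff as its pool share $d_J^{(j)}/(d_P+d_J^{(1)}+d_J^{(2)})$ times the post-swap pool value plus fees, net of deposit cost, I would compute the marginal payoff in $d_J^{(j)}$ and decompose it into a positive share-capture term proportional to $d_P+d_J^{(-j)}$ and a negative price-impact-dilution term. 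The Cournot externality makes the share-capture term strictly larger than in the monopoly problem while the dilution term is weaker for a deeper pool; using this together with $\zeta_U\geq\underline{\zeta}>\underline{\zeta}(f,\pi)$ and $\bar{\nu}$ large, I would show the marginal payoff stays positive up to the endowment, so the Cournot equilibrium is the corner where both deposit $e_J$. Ruling out interior and asymmetric equilibria—by showing the interior first-order condition has no admissible solution in this regime—delivers uniqueness of the subgame.

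Finally I would assemble the trader's problem, substituting the three deposit outcomes so that the uninformed trader's expected utility becomes a concave function (a mixture of the per-scenario concave objectives) of the order size, with pool depths $d_P$, $(1+\nu_C(\pi))d_P$, and $d_P+2e_J$ weighted by $(1-\pi)^2$, $2\pi(1-\pi)$, and $\pi^2$. Solving the first-order condition yields the optimal order $\mu_C(\pi)d_P$, and strict concavity gives uniqueness of the overall subgame equilibrium once I verify the fixed point is consistent with the regime assumptions (the one-arrival deposit stays interior and the both-arrival deposit stays at the corner). For the comparative static, increasing $\pi$ shifts probability mass toward the deeper-pool events and lowers expected price impact, pushing the optimal order up. The main obstacle is the both-arrive Cournot subgame: establishing that the unique equilibrium is the endowment corner rather than an interior Cournot point requires controlling the sign of the marginal payoff over the entire deposit range at an endogenously determined order, and the JIT LP's best response is not monotone in the order size (the same non-monotonicity flagged around Proposition \ref{thm:dampen}). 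A secondary difficulty is the comparative static in $\pi$, since the one-arrival probability $2\pi(1-\pi)$ is non-monotone and the one-arrival depth is itself $\pi$-dependent; here I would invoke $\bar{\nu}$ large so that the very deep both-arrival pool dominates the comparison and forces $\mu_C(\pi)$ to be increasing.
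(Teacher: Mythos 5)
Your overall architecture matches the paper's: backward induction, homogeneity of the pricing function to express every equilibrium quantity as a fixed multiple of $d_P$, zero JIT deposits against informed flow, reduction of the single-arrival uninformed case to the baseline monopolist problem (with Assumption 2 keeping it interior), and a trader objective mixing the depths $d_P$, $(1+\nu_C(\pi))d_P$, and $d_P+2e_J$ with weights $(1-\pi)^2$, $2\pi(1-\pi)$, $\pi^2$. But there is a genuine gap exactly at what you yourself call the main obstacle: the both-arrive Cournot subgame. Your plan---decompose the marginal payoff into a ``share-capture'' and a ``dilution'' term and argue positivity up to the endowment using $\zeta_U\geq\underline{\zeta}$ and $\bar{\nu}$ large---never supplies the one inequality that closes the argument. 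The paper's proof observes that, with the competitor's deposit held fixed, JIT LP $j$ faces exactly the monopolist's problem against an effective pool of depth $\tilde{D}=\tilde{d}_P+\tilde{d}_J^{(-j)}$, and then proves algebraically that the best response (finite or infinite) always satisfies $\tilde{d}_J^{(j)\star}\geq\tilde{D}$; equivalently, a JIT LP's optimal deposit always buys a pool share exceeding one-half. This single fact does all the work: an interior Cournot equilibrium would require $\tilde{d}_J^{(1)}\geq\tilde{d}_P+\tilde{d}_J^{(2)}$ and $\tilde{d}_J^{(2)}\geq\tilde{d}_P+\tilde{d}_J^{(1)}$ simultaneously, and adding these yields $0\geq\tilde{d}_P$, a contradiction; and when the competitor sits at $\tilde{e}_J$, the unconstrained best response exceeds $\tilde{d}_P+\tilde{e}_J>\tilde{e}_J$, so quasiconcavity of the payoff forces the constrained optimum to the corner. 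Your heuristic comparison to the monopoly problem does not substitute for this inequality, and your proposed route for uniqueness (``the interior first-order condition has no admissible solution'') is precisely the claim left unproven.

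A related misattribution: the corner outcome needs neither $\bar{\nu}$ large nor $\zeta_U$ above a threshold---the share-exceeds-one-half inequality holds for any endowment and any positive order (when $q_R\leq f\tilde{D}$ the best response is infinite, which only reinforces the corner). In the paper, $\bar{\nu}$ large is used only to keep the single-arrival deposit interior (Assumption 2) and to obtain monotonicity of $\mu_C(\pi)$ in $\pi$, which your final paragraph treats in the same spirit as the paper. One more caution: your claim that the trader's objective is concave because it is ``a mixture of the per-scenario concave objectives'' is not right as stated, since the JIT best responses are substituted into those objectives before the trader optimizes; the paper instead verifies that the first-order-condition function $M_{TC}(\mu)$ is decreasing in $\mu$, which is what delivers uniqueness of the trader's optimum.
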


A crucial difference in this equilibrium, compared to that of the baseline model with a single JIT LP, is the behavior of JIT LPs conditional on both arriving. Each JIT LP has to vie for a share of the pool not only with the passive LPs, but also with their competitor. In the appendices, we show that as a lemma, a JIT LP's optimal deposit amount must give them a share of the pool that is over one-half. This suggests that the marginal benefit of a larger share always exceeds the marginal loss in price impact profits when the JIT LP's share is less than 50\%. Since the competing JIT LPs cannot both own more than half of the pool, they have an incentive to deposit more liquidity until their endowment is reached. 

A major contrast in characterizing the strategy of passive LPs is that their per-unit fee revenue now depends on the amount of passive liquidity provided itself. The expected utility of passive LP $i$ in the competitive case can be expressed as
\begin{align*}
    u_P(d_P^{(i)}) = p(\alpha\mathcal{C}+(1-\alpha)\cdot\mathcal{R}(\pi,d_P))d_P^{(i)}
\end{align*}
where $\mathcal{C}$ is the per-unit adverse selection loss, unchanged from before, and $\mathcal{R}(\pi,d_P)$ is now the liquidity-dependent expected per-unit fee revenue. There is the possibility for multiple equilibria to arise and for the equilibrium amount of passive liquidity provided to be in the interior of $[0,e_P]$, depending on the values of $\alpha$, $(1-\alpha)$, and $\mathcal{C}$, as the following result suggests.

\begin{proposition}\label{thm:passive-br-comp}
Let $\pi\in[0,1]$ and $d_P(k)=ke_P/N$ be the amount of liquidity provided by $k$ passive liquidity providers, where $k\in[N]$. Define $\mathcal{U}(\pi,d_P(k))=\alpha\mathcal{C}+(1-\alpha)\cdot\mathcal{R}(\pi,d_P(k))$.
\begin{itemize}
    \item For $k<N$, there exist subgame-perfect Nash equilibria where the amount of passive liquidity provided is $d_P^\star=d_P(k)$ if and only if $\mathcal{U}(\pi,d_P(k))\geq0$ and $\mathcal{U}(\pi,d_P(k+1))<0$.
    \item There exist subgame-perfect Nash equilibria where the amount of passive liquidity provided is $d_P^\star=e_P$ if and only if $\mathcal{U}(\pi,e_P)>0$.
\end{itemize}
\end{proposition}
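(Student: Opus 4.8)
The plan is to proceed by backward induction, as in the baseline analysis leading to Proposition \ref{thm:passive-br}, but now accounting for the fact that the passive LPs' per-unit fee revenue is liquidity-dependent. First I would invoke Proposition \ref{thm:subgame-eq-comp} to substitute the equilibrium strategies of the traders and the two JIT LPs into the continuation subgame following any Period-1 deposit profile, reducing the problem to the game played among the $N$ passive LPs in Period 1. The object to write out explicitly is $\mathcal{R}(\pi,d_P)$, whose dependence on the total passive liquidity $d_P$ traces entirely to the both-JIT-arrive scenario: there, by Proposition \ref{thm:subgame-eq-comp}, each JIT LP deposits its full endowment $e_J=\bar{\nu}e_P$, a quantity fixed independently of $d_P$, so the passive share $d_P/(d_P+2\bar{\nu}e_P)$ varies with $d_P$. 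This is precisely the structural feature absent from the baseline, where $\mathcal{R}$ was constant and only $d_P\in\{0,e_P\}$ was relevant, and it is what permits interior and multiple equilibria.

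Next I would reduce the equilibrium characterization to a one-dimensional problem in the number $k$ of depositing LPs. Since the passive LPs are identical, each chooses $d_P^{(i)}\in\{0,e_P/N\}$, and each payoff $u_P(d_P^{(i)})=p\,\mathcal{U}(\pi,d_P)\,d_P^{(i)}$ depends only on the LP's own deposit and on the total $d_P=d_P(k)$, any pure-strategy profile in which exactly $k$ LPs deposit yields the same total and is, up to the payoff-irrelevant identity of the depositors, interchangeable. Hence it suffices to ask, for each $k$, whether a profile with $k$ depositors admits no profitable unilateral deviation.

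The heart of the argument is the deviation analysis, which must carefully track that a single LP's switch moves the total, and hence the common per-unit utility, from $d_P(k)$ to $d_P(k\pm1)$. A depositing LP compares its payoff $p\,\mathcal{U}(\pi,d_P(k))(e_P/N)$ against the zero payoff it would obtain after withdrawing; since its own deposit would then be zero, the value of $\mathcal{U}$ at the reduced total $d_P(k-1)$ is immaterial, and by Assumption 1 it does not gain from withdrawing whenever $\mathcal{U}(\pi,d_P(k))\geq 0$. A non-depositing LP, present only when $k<N$, compares the zero payoff of abstaining against $p\,\mathcal{U}(\pi,d_P(k+1))(e_P/N)$, the payoff it would earn after entering and raising the total to $d_P(k+1)$; again by the tie-breaking rule it strictly prefers to enter unless $\mathcal{U}(\pi,d_P(k+1))<0$. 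Combining these for $k<N$ yields the stated condition $\mathcal{U}(\pi,d_P(k))\geq0$ and $\mathcal{U}(\pi,d_P(k+1))<0$, while for $k=N$ only the withdrawal deviation is available, producing the condition on $\mathcal{U}(\pi,e_P)$.

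I expect the main obstacle to be exactly this bookkeeping of which argument of $\mathcal{U}$ is in force after each candidate deviation: the strategic complementarity induced by the $d_P$-dependence of $\mathcal{R}$ breaks the clean all-or-nothing structure of Proposition \ref{thm:passive-br}, so the no-deviation conditions couple adjacent grid points $d_P(k)$ and $d_P(k+1)$. Getting the weak-versus-strict inequalities correct is delicate and hinges on a careful application of Assumption 1 at the boundary values where $\mathcal{U}(\pi,d_P(k))=0$. A secondary check is that the endowment bound does not bind in the single-arrival continuation, which is guaranteed by Assumption 2 together with $\underline{\zeta}>\underline{\zeta}(f,\pi)$, so that the $\mathcal{R}(\pi,d_P)$ used above is the one computed from the interior single-JIT best response rather than from a capped deposit.
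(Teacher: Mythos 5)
Your proposal is correct and follows essentially the same route as the paper's own proof: substitute the continuation equilibrium from Proposition \ref{thm:subgame-eq-comp}, then run the unilateral-deviation bookkeeping in which a withdrawing LP earns zero regardless of the resulting total (so only $\mathcal{U}(\pi,d_P(k))\geq 0$ matters, with Assumption 1 breaking the tie), while an entering LP evaluates its payoff at the enlarged total $d_P(k+1)$, yielding the coupled conditions $\mathcal{U}(\pi,d_P(k))\geq 0$ and $\mathcal{U}(\pi,d_P(k+1))<0$, with the $k=N$ case handled by dropping the entry deviation. Your identification of the both-JIT-arrive, endowment-constrained scenario as the source of the $d_P$-dependence of $\mathcal{R}$, and the check that the single-arrival best response is uncapped under Assumption 2, also match the paper's treatment.
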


Given an amount of passive liquidity, if the addition of an extra passive LP drives the per-unit utility of each passive LP below zero, then that amount of passive liquidity arises in a SPNE as no contributing passive LP profitably deviates by dropping out (under Assumption 1) and no non-contributing passive LP profitably deviates by opting in. This suggests that in the competitive case, there is now a secondary crowding-out effects where the passive LPs crowd each other out, as the appeal of contributing to the liquidity pool can become less attractive when more passive LPs join. To qualify this notion, we modify our definition of complementing and crowding out to accommodate the dependence of per-unit revenue on the amount of passive liquidity provided.

\begin{definition*}
Let $\pi\in(0,1]$ and $d_P\in[0,e_P]$. Suppose that some equilibria exist when the JIT LP's arrival probability is zero and $\pi$, holding the other exogenous parameters fixed. We say that the competing JIT LPs
\begin{itemize}
    \item \ul{complement} the passive LPs at arrival probability $\pi$ if $\mathcal{R}(0,d_P) \leq \mathcal{R}(\pi,d_P)$;
    \item \ul{crowd out} the passive LPs at arrival probability $\pi$ if $\mathcal{R}(0,d_P) > \mathcal{R}(\pi,d_P)$.
\end{itemize}
\end{definition*}

\begin{theorem}\label{thm:threshold-comp}
Let $\pi\in(0,1]$ and $d_P\in[0,e_P]$. For $\bar{\nu}$ sufficiently large, one of the following is true:
\begin{itemize}
    \item The JIT LPs complement the passive LPs at arrival probability $\pi$ for all $\zeta_U\in[\underline{\zeta},\overline{\zeta}]$.
    \item The JIT LPs crowd out the passive LPs at arrival probability $\pi$ for all $\zeta_U\in[\underline{\zeta},\overline{\zeta}]$.
    \item There exists $\zeta^\star_C(f,\pi,d_P)\in[\underline{\zeta},\overline{\zeta}]$ such that the JIT LP complements the passive LPs at arrival probability $\pi$ if and only if $\zeta_U\geq\zeta^\star_C(f,\pi,d_P)$.
\end{itemize}
Moreover, $\zeta^\star_C(f,\pi,d_P)\leq\zeta^\star(f,\pi)$ for all $\pi\in(0,1)$ where $\zeta^\star(f,\pi)$ is the threshold described in Theorem \ref{thm:threshold} that corresponds to the setting of a monopolist JIT LP.
\end{theorem}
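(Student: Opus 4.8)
The plan is to reduce the trichotomy to a single-crossing property of the competitive crossover function in $\zeta_U$, and then to locate its root relative to the monopolist root $\zeta^\star(f,\pi)$ of Theorem \ref{thm:threshold}. Using Proposition \ref{thm:subgame-eq-comp}, I would first write the passive LPs' per-unit fee revenue $\mathcal{R}(\pi,d_P)$ explicitly by conditioning on the three JIT-arrival events, which carry probabilities $(1-\pi)^2$, $2\pi(1-\pi)$, and $\pi^2$. Since the uninformed order is submitted before arrivals are realized, its value equals $\mu_C(\pi)\,d_P$ in every event, but the passive share differs across events: it is $1$ when no JIT LP arrives, $1/(1+\nu_C(\pi))$ when exactly one arrives, and $d_P/(d_P+2\bar{\nu}e_P)$ when both arrive. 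Following the revenue decomposition displayed before Theorem \ref{thm:threshold} (passive share times the uninformed plus reverse swap volume), this yields a closed form for $\mathcal{R}(\pi,d_P)$. I then set $g_C(\zeta_U)=\mathcal{R}(\pi,d_P)-\mathcal{R}(0,d_P)$, noting that $\mathcal{R}(0,d_P)$ equals the no-JIT baseline; by the competitive definition of complementing/crowding out, the JIT LPs complement the passive LPs exactly when $g_C(\zeta_U)\ge0$.

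The trichotomy then follows from a single-crossing-from-below property of $g_C$ on $[\underline{\zeta},\overline{\zeta}]$: if $g_C$ is continuous and crosses zero at most once, upward, then either $g_C\ge0$ throughout (always complement), $g_C<0$ throughout (always crowd out), or there is a unique $\zeta^\star_C\in[\underline{\zeta},\overline{\zeta}]$ with $g_C(\zeta_U)\ge0\iff\zeta_U\ge\zeta^\star_C$, matching the three bullets. I would establish this by substituting the closed forms of $\mu_C(\pi)$, $\nu_C(\pi)$, and the reverse-trade volume obtained from the subgame of Proposition \ref{thm:subgame-eq-comp}, and showing $\partial g_C/\partial\zeta_U\ge0$. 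The key structural fact is that $\zeta_U$ enters $g_C$ only through the order value $\mu_C(\pi)d_P$ (increasing in $\zeta_U$) and the induced reverse-trade volume, while the three passive shares are constant in $\zeta_U$; hence the derivative is a nonnegatively weighted sum of volume-sensitivity terms, and the only care needed is monotonicity of the reverse-trade volume in the order size. This is the same mechanism underlying Theorem \ref{thm:threshold}, with the extra both-arrive term treated identically.

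For the comparison $\zeta^\star_C(f,\pi,d_P)\le\zeta^\star(f,\pi)$, I would evaluate $g_C$ at the monopolist threshold $\zeta_U=\zeta^\star(f,\pi)$ and show $g_C(\zeta^\star)\ge0$; single-crossing then forces the competitive root to lie weakly to the left. At $\zeta^\star$ the monopolist breaks even, so its revenue $\mathcal{R}^{\mathsf{m}}(\pi)$ equals the baseline $\mathcal{R}(0)=\mathcal{R}(0,d_P)$, and $g_C(\zeta^\star)\ge0$ is therefore equivalent to the revenue-dominance inequality $\mathcal{R}(\pi,d_P)\ge\mathcal{R}^{\mathsf{m}}(\pi)$ at that point. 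I would prove this via a scenario coupling: the monopolist's mass $\pi$ on ``one JIT arrives'' is split under competition into mass $2\pi(1-\pi)$ on one arrival and $\pi^2$ on both arrivals. For $\bar{\nu}$ large, the both-arrival event raises the expected pool depth far above its monopolist level, so the uninformed trader trades strictly more aggressively, $\mu_C(\pi)\ge\mu(\pi)$; this higher volume enters every event multiplicatively and, for $\pi\in(0,1)$ (so that the favorable no-arrival and single-arrival masses are strictly positive), outweighs the share dilution incurred only in the low-probability both-arrival event. The hypothesis $\pi<1$ is exactly what keeps these favorable events at positive weight, and ``$\bar{\nu}$ sufficiently large'' is what makes the volume channel strong enough to tip the inequality.

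I expect the revenue-dominance step $\mathcal{R}(\pi,d_P)\ge\mathcal{R}^{\mathsf{m}}(\pi)$ to be the main obstacle. Because the both-arrival share $d_P/(d_P+2\bar{\nu}e_P)$ is not proportional to $d_P$, the competitive revenue is nonlinear in $d_P$ and cannot be matched to the monopolist revenue by a single normalization; one must instead track quantitatively how the equilibrium responses $\mu_C$ and $\nu_C$ differ from their monopolist counterparts $\mu$ and $\nu$, and verify that the induced volume gain is large enough. I anticipate this requires the explicit closed forms from Proposition \ref{thm:subgame-eq-comp} together with an asymptotic estimate in $\bar{\nu}$, rather than a purely structural comparison; the single-crossing step is then routine once those closed forms are in hand.
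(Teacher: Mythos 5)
Your overall architecture---a single-crossing property in $\zeta_U$ for the competitive crossover $g_C(\zeta_U)=\mathcal{R}(\pi,d_P)-\mathcal{R}(0,d_P)$, followed by a threshold comparison obtained by evaluating $g_C$ at the monopolist root $\zeta^\star(f,\pi)$---parallels the paper's skeleton, but both of your key steps have genuine gaps. On single crossing: your argument rests on the claim that $\zeta_U$ enters $g_C$ only through volumes ``while the three passive shares are constant in $\zeta_U$,'' so that $\partial g_C/\partial\zeta_U$ is a nonnegatively weighted sum of volume sensitivities. This is false on two counts. The one-arrival share $1/(1+\nu_C(\pi))$ varies with $\zeta_U$ through $\mu_C(\pi)$, and the map $\mu\mapsto\nu(\mu)$ is non-monotone---its sign change is precisely what generates the monopolist threshold (Proposition A.14). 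Moreover the subtrahend $\mathcal{R}(0,d_P)=f\bigl(\sqrt{\zeta_U/(1+f)}-\sqrt{(1+f)/\zeta_U}\bigr)$ is itself increasing in $\zeta_U$, so $\partial g_C/\partial\zeta_U$ is a difference of two increasing quantities; global monotonicity of $g_C$ is stronger than what is needed and is not what the paper proves. The paper's actual content here is Lemma A.19: parameterizing by the equilibrium trade size $\mu$, it shows that $M_{TC}(\mu)\cdot\bigl(2+V_C(\mu)^2+V_C(\mu)\sqrt{4+V_C(\mu)^2}\bigr)$ is increasing in $\mu$, via a covariance (association) argument over the three-point arrival distribution $\{(1-\pi)^2,\,2\pi(1-\pi),\,\pi^2\}$; this product condition is exactly equivalent to $g_C\geq0$ and yields sign single-crossing without ever claiming $g_C$ is monotone. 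It is also where ``$\bar\nu$ sufficiently large'' genuinely enters (the scenario orderings hold only on an interval converging to $(f,\infty)$ as $\bar\nu\to\infty$), whereas you invoke large $\bar\nu$ only in the comparison step.

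On the comparison $\zeta^\star_C\leq\zeta^\star$: your reduction to revenue dominance $\mathcal{R}(\pi,d_P)\geq\mathcal{R}(\pi)$ at $\zeta_U=\zeta^\star$ is legitimate once single crossing is in hand, but you do not prove the dominance---you explicitly defer it as the main obstacle---and the heuristic you offer points the wrong way. For $\pi<1$ and $\bar\nu$ large, the both-arrival event contributes essentially \emph{nothing} to passive revenue: the share $d_P/(d_P+2\bar\nu e_P)$ vanishes while $\mu_C(\pi)$ stays bounded (the both-arrival marginal utility tends to $1$), so the gain cannot come from volume in that event. What must actually be overcome is the reallocation of probability mass from the full-share no-arrival event to the diluted one-arrival event and to the near-zero-revenue both-arrival event, against the single gain that $\mu_C(\pi)\geq\mu(\pi)$ raises the per-event volumes; a first-order expansion in $\pi$ shows the competitive and monopolist revenue gaps both vanish to leading order at $\zeta^\star$, so the inequality is decided at second order and is delicate rather than routine. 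The paper avoids this computation entirely: it never compares revenues at a common $\zeta_U$, but combines the single-crossing characterization with pointwise-in-$\mu$ inequalities ($V(\mu)\geq V_C(\mu)$ and $M_T(\mu)\leq M_{TC}(\mu)$, relating $\zeta_U(\mu),\bar\zeta_U(\mu)$ to $\zeta_{UC}(\mu),\bar\zeta_{UC}(\mu)$). In short, the two missing ingredients are the covariance lemma behind single crossing and a workable mechanism for the threshold comparison; neither is supplied by your proposal.
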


Under Cournot competition, for any fixed $d_P\in[0,e_P]$, there also exists a threshold $\zeta^\star_C(f,\pi,d_P)$ that delineates regimes of $\zeta_U$ where the JIT LPs crowd out the passive LPs ($\zeta_U<\zeta^\star_C(f,\pi,d_P)$) and where the JIT LPs complement the passive LPs ($\zeta_U\geq\zeta^\star_C(f,\pi,d_P)$). The threshold where the JIT LP(s) switch from crowding out to complementing the passive LPs is thus lower under competition than under a monopoly in the JIT liquidity market. Although equilibrium amounts of passive liquidity $d_P^\star$ may also vary in $\zeta_U$, the result holds for all $d_P\in[0,e_P]$, making competition beneficial overall for the threshold. This is illustrated in Figure \ref{fig:threshold}, which shows that we have a region in the $(d_P,\zeta_U)$ plane where we have complementing only under competition.

\begin{figure}[h]
\centering
  \includegraphics[scale=0.67]{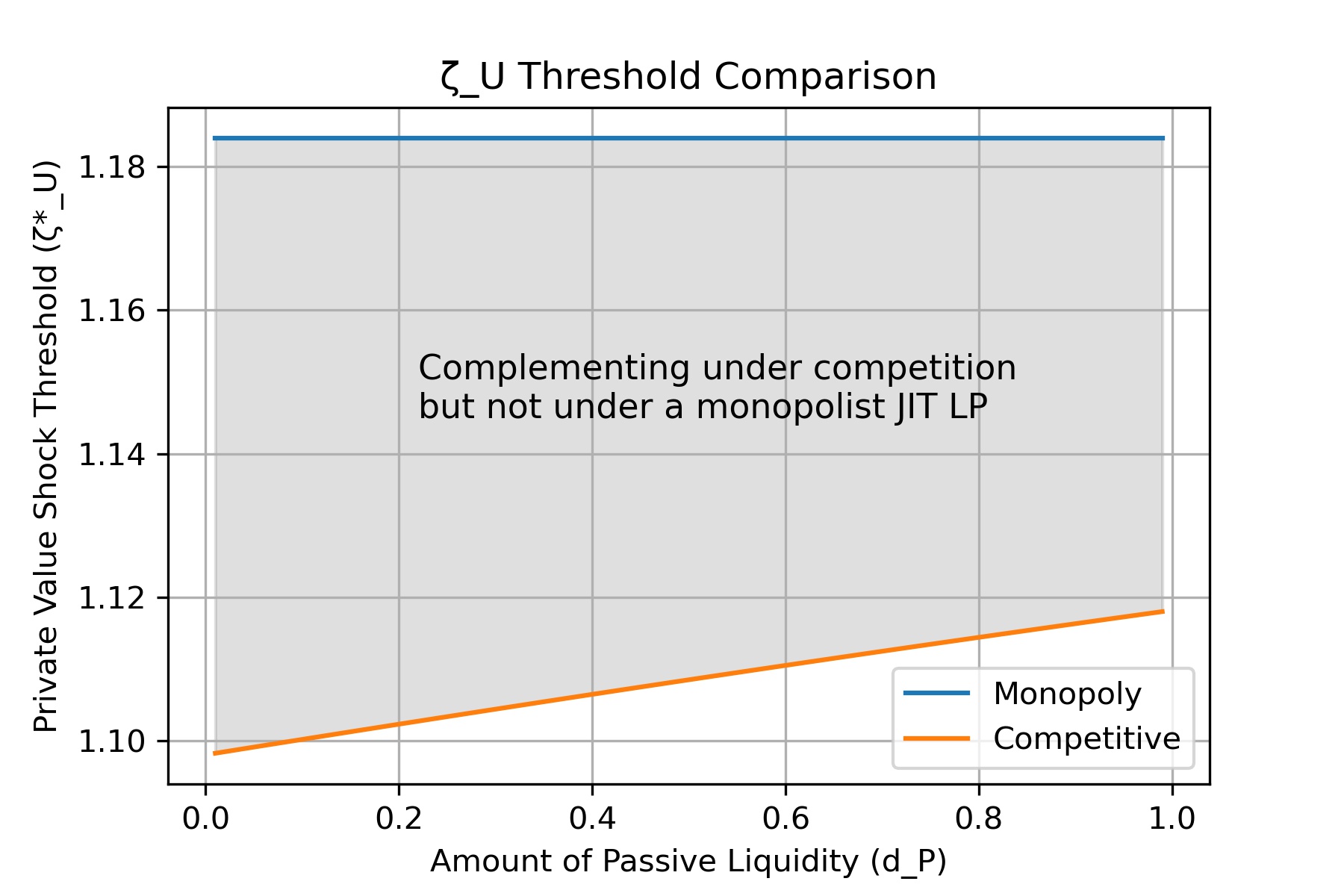} \\[0.25\baselineskip]
    \captionof{figure}{\centering \small The threshold $\zeta_U^\star$ between complementing and crowding out under competing JIT LPs and under a monopolist JIT LP given the amount of passive liquidity provided. We fixed $f=0.01$, $\pi=0.5$, $e_J=3$, and $e_P=1$.}
    \label{fig:threshold}
\end{figure}

The primary benefit of allowing Cournot competition in the JIT market is the increased sensitivity of the uninformed trader's demand as lower value of $\zeta_U$ is needed to induce a proportionally higher trading volume relative to the passive LPs' expected pool share. Cournot competition enables this phenomenon by having the JIT LPs compete with each other for a share of the liquidity pool, resulting in the JIT LPs depositing as much as possible in equilibrium. Notably, pool share competition does not occur when a JIT LP holds monopoly power or when JIT LPs submit via bundles (the current practice): the monopolist JIT LP only shares the pool with passive LPs that have already moved, and current practices result in other types of competition (e.g.\!\, bidding over the gas fee). 



\section{Conclusion}\label{sec:conclusion}

Just-in-time liquidity provision is a novel form of high-frequency market making uniquely enabled by the blockchain technology that underlies decentralized exchanges. As the concern of low pool depth is an ever-present issue for AMMs, it is important to understand when JIT liquidity is beneficial and how to make JIT liquidity beneficial for pool depth. We show that a JIT LP's ability to view pending transactions in the public mempool allows them to completely avoid informed order flows, and profit off of the price impact and fee revenue from uninformed order flows. However, if the prospect of extra liquidity does not induce uninformed traders to increase their trading volume by a sufficiently high amount, passive LPs will earn less expected profits in the presence of a JIT LP.  This can cause liquidity freezes, which can be potentially averted by transferring fee revenue from the JIT LP(s) to passive LPs or allowing for Cournot competition between JIT LPs. The two-tiered fee structure highlights an intricate balance between incentivizing both the JIT LP(s) and passive LPs to deposit, whereas competition seeks to increase the trader's demand. The careful design of AMMs taking into account the effects of JIT liquidity is thus key to enhancing pool depth as well as social welfare.


\newpage

\bibliographystyle{rfs}
\bibliography{defi}

\begin{thebibliography}{19}
\providecommand{\natexlab}[1]{#1}
\expandafter\ifx\csname urlstyle\endcsname\relax
  \providecommand{\doi}[1]{doi:\discretionary{}{}{}#1}\else
  \providecommand{\doi}{doi:\discretionary{}{}{}\begingroup \urlstyle{rm}\Url}\fi

\bibitem[{Adams et~al.(2023)Adams, Chan, Markovich, and Wan}]{AdamsUniswapSlippage}
Adams, A., B.~Y. Chan, S.~Markovich, and X.~Wan. 2023.
\newblock {The Costs of Swapping on the Uniswap Protocol}.
\newblock Papers 2309.13648, arXiv.org.

\bibitem[{Aoyagi and Ito(2021)}]{Aoyagi2021CoexistingEP}
Aoyagi, J., and Y.~Ito. 2021.
\newblock {Coexisting Exchange Platforms: Limit Order Books and Automated Market Makers}.
\newblock Working paper.

\bibitem[{Barbon and Ranaldo(2021)}]{dexcexcompare}
Barbon, A., and A.~Ranaldo. 2021.
\newblock {On the Quality of Cryptocurrency Markets: Centralized Versus Decentralized Exchanges}.
\newblock Working paper.

\bibitem[{BIS(2023)}]{BIS}
BIS. 2023.
\newblock {Project Mariana: cross-border exchange of wholesale CBDCs using automated market-makers}.

\bibitem[{Brunnermeier and Pedersen(2005)}]{PREY}
Brunnermeier, M.~K., and L.~H. Pedersen. 2005.
\newblock {Predatory Trading}.
\newblock \emph{The Journal of Finance} 60(4):1825--63.
\newblock \doi{https://doi.org/10.1111/j.1540-6261.2005.00781.x}.

\bibitem[{Capponi and Jia(2021)}]{AMMAdoption21}
Capponi, A., and R.~Jia. 2021.
\newblock The {{Adoption}} of {{Blockchain-based Decentralized Exchanges}}.
\newblock SSRN Scholarly Paper 3805095.
\newblock \doi{10.2139/ssrn.3805095}.

\bibitem[{Capponi, Jia, and Wang(2023)}]{CappMEV}
Capponi, A., R.~Jia, and Y.~Wang. 2023.
\newblock {Maximal Extractable Value and Allocative Inefficiencies in Public Blockchains}.
\newblock Working paper.

\bibitem[{CoinGecko(2023)}]{2023_Report}
CoinGecko. 2023.
\newblock {2023 Q3 Crypto Industry Report}.

\bibitem[{Han, Khapko, and Kyle(2014)}]{Han_Khapko_Kyle_2014}
Han, J., M.~Khapko, and A.~S. Kyle. 2014.
\newblock {Liquidity with High-Frequency Market Making}.
\newblock SSRN Scholarly Paper 2416396.
\newblock \doi{10.2139/ssrn.2416396}.

\bibitem[{Hasbrouck, Rivera, and Saleh(2022)}]{Hasbrouck_Rivera_Saleh_2022}
Hasbrouck, J., T.~J. Rivera, and F.~Saleh. 2022.
\newblock {The Need for Fees at a DEX: How Increases in Fees Can Increase DEX Trading Volume}.
\newblock SSRN Scholarly Paper 4192925.
\newblock \doi{10.2139/ssrn.4192925}.

\bibitem[{Hasbrouck, Rivera, and Saleh(2023)}]{Hasbrouck_Rivera_Saleh_2023}
---{}---{}---. 2023.
\newblock {An Economic Model of a Decentralized Exchange with Concentrated Liquidity}.
\newblock SSRN Scholarly Paper 4529513.
\newblock \doi{10.2139/ssrn.4529513}.

\bibitem[{Lehar and Parlour(2023)}]{anotherAMM2}
Lehar, A., and C.~Parlour. 2023.
\newblock {Decentralized Exchange: The Uniswap Automated Market Maker}.
\newblock \emph{Forthcoming in Journal of Finance} .

\bibitem[{Loesch et~al.(2021)Loesch, Hindman, Richardson, and Welch}]{Loesch21}
Loesch, S., N.~Hindman, M.~B. Richardson, and N.~Welch. 2021.
\newblock {Impermanent Loss in Uniswap v3}.
\newblock {arXiv}.

\bibitem[{Malinova and Park(2023)}]{Malinova_Park_2023}
Malinova, K., and A.~Park. 2023.
\newblock {Learning from DeFi: Would Automated Market Makers Improve Equity Trading?}
\newblock SSRN Scholarly Paper 4531670.
\newblock \doi{10.2139/ssrn.4531670}.

\bibitem[{Menkveld(2015)}]{MenkveldARFE}
Menkveld, A. 2015.
\newblock {The Economics of High-Frequency Trading: Taking Stock}.
\newblock \emph{Annual Review of Financial Economics} 8(1):1--24.

\bibitem[{Milionis et~al.(2022)Milionis, Moallemi, Roughgarden, and Zhang}]{Milionis_Moallemi_Roughgarden_Zhang}
Milionis, J., C.~C. Moallemi, T.~Roughgarden, and A.~L. Zhang. 2022.
\newblock {Automated Market Making and Loss-Versus-Rebalancing}.
\newblock Working paper.

\bibitem[{Park(2023)}]{anotherAMM}
Park, A. 2023.
\newblock {The Conceptual Flaws of Decentralized Automated Market Making}.
\newblock \emph{Management Science} 69(11):6731--51.

\bibitem[{Wan and Adams(2022)}]{AdamsUniswap}
Wan, X., and A.~Adams. 2022.
\newblock {Just-in-Time Liquidity on the Uniswap Protocol}.
\newblock White paper.

\bibitem[{Xiong et~al.(2023)Xiong, Wang, Knottenbelt, and Huth}]{DemystifyingJIT}
Xiong, X., Z.~Wang, W.~Knottenbelt, and M.~Huth. 2023.
\newblock {Demystifying Just-in-Time (JIT) Liquidity Attacks on Uniswap V3}.
\newblock \emph{Cryptology ePrint Archive} 2023/973.

\end{thebibliography}

\newpage

\appendix

\section{Proofs}

\subsection{Change of Variables}
We make the following changes of variables to facilitate derivations.
\begin{definition*} Define the following:
    \begin{itemize}
        \item Price-adjusted deposit size for passive LPs:  $\tilde{d}_{P}^{(i)} = p^{1/2}d^{(i)}_P$
        \item Price-adjusted endowment for passive LPs: $\Tilde{e}_{P}=p^{1/2}e_P$
        \item Aggregate price-adjusted deposit size for passive LPs: $\tilde{d}_{P}=\sum_{i\in[N]}\tilde{d}_{P}^{(i)}$
        \item Price-adjusted deposit size for JIT LPs: $\tilde{d}_{J} = p^{1/2}d_J$
    \end{itemize}
\end{definition*}

\subsection{Trading Functions}

\begin{definition*}
Let $\delta_S(r,d):\R_+^2\to\R_+$ be the quantity of stable coins that $r$ risky coins can be swapped for at a pool depth of $d$. Similarly, let $\delta_R(s,d):\R_+^2\to\R_+$ be the quantity of risky coins that $s$ stable coins can be swapped for at a pool depth of $d$. It then follows that
\begin{align*}
    \delta_S(r,d) = \frac{p^{1/2}dr}{p^{-1/2}d+r} = \frac{p\tilde{d}r}{\tilde{d}+r}; \\[0.25\baselineskip]
    \delta_R(s,d) = \frac{p^{-1/2}ds}{p^{1/2}d+s} = \frac{\tilde{d}s}{p\tilde{d}+s}.
\end{align*}
\end{definition*}

\subsection{Evolution of Pool Reserves}

The following table shows the evolution of the liquidity pool's reserves when the JIT LP does not and does arrive, respectively:
\begin{center}
\small
    \begin{tabular}{ccc}
        Description & $R$ & $S$  \\
        \hline
        $t=1$: passive LPs deposit & $\tilde{d}_P(1-p^{1/2}b^{-1/2})$ & $\tilde{d}_P(p-p^{1/2}a^{1/2})$ \rule{0pt}{3ex} \\
        $t=2$: trader submits swap order & $\tilde{d}_P(1-p^{1/2}b^{-1/2})$ & $\tilde{d}_P(p-p^{1/2}a^{1/2})$ \rule{0pt}{3ex} \\
        $t=3$: JIT LP does not arrive & $\tilde{d}_P(1-p^{1/2}b^{-1/2})$ & $\tilde{d}_P(p-p^{1/2}a^{1/2})$ \rule{0pt}{3ex} \\
        $t=4$: swap occurs & $\tilde{d}_P(1-p^{1/2}b^{-1/2})+q_R$ & $\tilde{d}_P(p-p^{1/2}a^{1/2})-\delta_S(q_R,\tilde{d}_P)$ \rule{0pt}{3ex} \\
        $t=5$: possible reverse trade & $\tilde{d}_P(1-p^{1/2}b^{-1/2})$ & $\tilde{d}_P(p-p^{1/2}a^{1/2})$ \rule{0pt}{3ex} \\
    \end{tabular} 
\end{center}
\begin{center}
\small
    \begin{tabular}{ccc}
        Description & $R$ & $S$  \\
        \hline
        $t=1$: passive LPs deposit & $\tilde{d}_P(1-p^{1/2}b^{-1/2})$ & $\tilde{d}_P(p-p^{1/2}a^{1/2})$ \rule{0pt}{3ex} \\
        $t=2$: trader submits swap order & $\tilde{d}_P(1-p^{1/2}b^{-1/2})$ & $\tilde{d}_P(p-p^{1/2}a^{1/2})$ \rule{0pt}{3ex} \\
        $t=3$: JIT LP arrives & $\tilde{d}_P(1-p^{1/2}b^{-1/2})$ & $\tilde{d}_P(p-p^{1/2}a^{1/2})$ \rule{0pt}{3ex} \\
        $t=4$: JIT LP deposits & $(\tilde{d}_P+\tilde{d}_J)(1-p^{1/2}b^{-1/2})$ & $(\tilde{d}_P+\tilde{d}_J)(p-p^{1/2}a^{1/2})$ \rule{0pt}{3ex} \\
        $t=4$: swap occurs & $(\tilde{d}_P+\tilde{d}_J)(1-p^{1/2}b^{-1/2})+q_R$ & $(\tilde{d}_P+\tilde{d}_J)(p-p^{1/2}a^{1/2})-\delta_S(q_R,\tilde{d}_P+\tilde{d}_J)$ \rule{0pt}{3ex} \\
        $t=4$: JIT LP withdraws & $\tilde{d}_P(1-p^{1/2}b^{-1/2})+\frac{\tilde{d}_P}{\tilde{d}_P+\tilde{d}_J}\cdot q_R$ & $\tilde{d}_P(p-p^{1/2}a^{1/2})-\frac{\tilde{d}_P}{\tilde{d}_P+\tilde{d}_J}\cdot\delta_S(q_R,\tilde{d}_P+\tilde{d}_J)$ \rule{0pt}{3ex} \\
        $t=5$: possible reverse trade & $\tilde{d}_P(1-p^{1/2}b^{-1/2})$ & $\tilde{d}_P(p-p^{1/2}a^{1/2})$ \rule{0pt}{3ex} \\
    \end{tabular} \\
    \vspace{0.5cm}
\end{center}

\subsection{Explicit Forms of Utilities}

The JIT LP's utility, conditional on arrival, is 
\begin{align*}
    u_J(\tilde{d}_J;\sigma_{-J},\omega) = \begin{dcases}
        \frac{\tilde{d}_J}{\tilde{d}_P+\tilde{d}_J}\left(\frac{p(1+f)}{\zeta}\cdot q_R(\tilde{d}_P;\omega)-\delta_S(q_R(\tilde{d};\omega),\tilde{d}_P+\tilde{d}_J)\right) & \omega=\mathsf{IS} \\[0.25\baselineskip]
        \frac{\tilde{d}_J}{\tilde{d}_P+\tilde{d}_J}((1+f)\cdot q_S(\tilde{d}_P;\omega)-\zeta p\cdot\delta_R(q_S(\tilde{d}_P;\omega),\tilde{d}_P+\tilde{d}_J)) & \omega=\mathsf{IB} \\[0.25\baselineskip]
        \frac{\tilde{d}_J}{\tilde{d}_P+\tilde{d}_J}(p(1+f)\cdot q_R(\tilde{d}_P;\omega)-\delta_S(q_R(\tilde{d};\omega),\tilde{d}_P+\tilde{d}_J)) & \omega=\mathsf{US} \\[0.25\baselineskip]
        \frac{\tilde{d}_J}{\tilde{d}_P+\tilde{d}_J}((1+f)\cdot q_S(\tilde{d}_P;\omega)-p\cdot\delta_R(q_S(\tilde{d}_P;\omega),\tilde{d}_P+\tilde{d}_J)) & \omega=\mathsf{UB}.
    \end{dcases}
\end{align*}
For $\omega\in\{\mathsf{IS},\mathsf{IB}\}$, the informed trader's utility is (suppressing arguments for the JIT LP's strategy)
\begin{align*}
    u_T((q_R,q_S);\sigma_{-T},\omega) = \begin{dcases}
        \pi \cdot\delta_S(q_R,\tilde{d}_P+\tilde{d}_J)+(1-\pi )\cdot\delta_S(q_R,\tilde{d}_P)-p'(\omega)\cdot(1+f)q_R & q_R>0 \\
        p'(\omega)[\pi \cdot\delta_R(q_S,\tilde{d}_P+\tilde{d}_J)+(1-\pi )\cdot\delta_R(q_S,\tilde{d}_P)]-(1+f)q_S & q_S>0 \\
        0 & \text{o.w.}
    \end{dcases}.
\end{align*}
For $\omega\in\{\mathsf{US},\mathsf{UB}\}$, the uninformed trader's utility is (suppressing arguments for the JIT LP's strategy)
\begin{align*}
    u_T((q_R,q_S);\sigma_{-T},\omega) = \begin{dcases}
        \pi \cdot\delta_S(q_R,\tilde{d}_P+\tilde{d}_J)+(1-\pi )\cdot\delta_S(q_R,\tilde{d}_P)-P(\omega)\cdot(1+f)q_R & q_R>0 \\
        P(\omega)[\pi \cdot\delta_R(q_S,\tilde{d}_P+\tilde{d}_J)+(1-\pi )\cdot\delta_R(q_S,\tilde{d}_P)]-(1+f)q_S & q_S>0 \\
        0 & \text{o.w.}
    \end{dcases}.
\end{align*}
The passive LPs' conditional total utility (suppressing arguments for the other agents' strategies) is then
\begin{gather*}
    u_P(\tilde{d}_P;\sigma_{-P},\omega) = \begin{dcases}
        \frac{\pi \tilde{d}_P}{\tilde{d}_P+\tilde{d}_J}\left(\frac{p(1+f)}{\zeta}q_R-\delta_S(q_R,\tilde{d}_P+\tilde{d}_J)\right) & \\
        \hspace{3cm}+\,(1-\pi )\left(\frac{p(1+f)}{\zeta}q_R-\delta_S(q_R,\tilde{d}_P)\right) & \omega=\mathsf{IS} \\[0.25\baselineskip]
        \frac{\pi \tilde{d}_P}{\tilde{d}_P+\tilde{d}_J}((1+f)q_S-\zeta p\cdot\delta_R(q_S,\tilde{d}_P+\tilde{d}_J)) & \\
        \hspace{3cm}+\,(1-\pi )((1+f)q_S-\zeta p\cdot\delta_R(q_S,\tilde{d}_P)) & \omega=\mathsf{IB} \\[0.25\baselineskip]
        \left[\frac{\pi \tilde{d}_P}{\tilde{d}_P+\tilde{d}_J}(pq_R+\delta_S(q_R,\tilde{d}_P+\tilde{d}_J))+(1-\pi )(pq_R+\delta_S(q_R,\tilde{d}_P))\right]f & \omega=\mathsf{US} \\[0.25\baselineskip]
        \left[\frac{\pi \tilde{d}_P}{\tilde{d}_P+\tilde{d}_J}(q_S+p\cdot\delta_R(q_S,\tilde{d}_P+\tilde{d}_J))+(1-\pi )(q_S+p\cdot \delta_R(q_S,\tilde{d}_P))\right]f & \omega=\mathsf{UB}
    \end{dcases}.
\end{gather*}
Since the passive LPs must decide on their strategies before $\omega$ is realized, their expected utility is
\begin{align*}
    u_P(\tilde{d}_P;\sigma_{-P}) &= \alpha\left[\psi\cdot u_P(\tilde{d}_P;\sigma_{-P},\mathsf{IS})+(1-\psi)\cdot u_P(\tilde{d}_P;\sigma_{-P},\mathsf{IB}) \right] \\
    &\hspace{1cm}+(1-\alpha)\left[\psi_U\cdot u_P(\tilde{d}_P;\sigma_{-P},\mathsf{US})+(1-\psi_U)\cdot u_P(\tilde{d}_P;\sigma_{-P},\mathsf{UB}) \right].
\end{align*} 

\subsection{Proofs of Propositions \ref{thm:subgame-eq} and \ref{thm:passive-br}}
It suffices to show the following: let $\tilde{d}_{P}\in[0,\tilde{e}_{P}]$ and  $\underline{\zeta}(\pi )$ be given by
\begin{align*}
    \underline{\zeta}(\pi ) = \frac{2(1+f)^3}{2+\pi  f(3+f)}.
\end{align*}
Suppose that $\zeta_U>\underline{\zeta}(\pi )$. Then there exists a non-trivial Nash equilibrium in the subgame between the traders and JIT LP. Define the following:
\begin{gather*}
    \tilde{\mu}_I = \zeta^{1/2}(1+f)^{-1/2}-1, \\[0.25\baselineskip]
    \tilde{\mu}(\pi ) = \arg\min_{\mu\in\R_+}\,\left|(1-\pi)\cdot\frac{1}{(1+\mu)^2}+\pi\cdot\frac{(2+\mu)\sqrt{(1+f)(1+\mu)}}{2(1+\mu)^2}-\frac{1+f}{\zeta_U}\right|\!, \\[0.25\baselineskip]
    \tilde{\nu}(\pi ) = \frac{f(1+\tilde{\mu}(\pi))+\tilde{\mu}(\pi)\sqrt{(1+f)(1+\tilde{\mu}(\pi))}}{\tilde{\mu}(\pi)-f}.
\end{gather*}
The equilibrium outcome is
\begin{gather*}
    (q_R,q_S)^\star(\tilde{d}_P;\omega) = \begin{dcases}
        (\tilde{\mu}_I\tilde{d}_P,0) & \omega=\mathsf{IS} \\
        (0,\tilde{\mu}_Ip\tilde{d}_P) & \omega=\mathsf{IB} \\
        (\tilde{\mu}(\pi)\cdot \tilde{d}_P,0) & \omega=\mathsf{US} \\
        (0,\tilde{\mu}(\pi)\cdot p\tilde{d}_P) & \omega=\mathsf{UB}
    \end{dcases}, \\[0.25\baselineskip]
    \tilde{d}_J^\star(\tilde{d}_P,(q_R,q_S)^\star(\tilde{d}_P;\omega);\omega) = \begin{dcases}
        0 & \omega\in\{\mathsf{IS},\mathsf{IB}\} \\
        \tilde{\nu}(\pi)\cdot\tilde{d}_P & \omega\in\{\mathsf{US},\mathsf{UB}\}
    \end{dcases}.
\end{gather*}
The multiples $\mu_I$, $\mu(\pi)$, and $\nu(\pi)$ are given by scaling $\tilde{\mu}$, $\tilde{\mu}(\pi)$, and $\tilde{\nu}(\pi)$ by $p^{1/2}$, respectively.

\begin{lemma}
Let $(q_R,q_S)$ be a strategy of the trader. If $(0,1)^\top(q_R,q_S)(\tilde{d}_{P};\mathsf{IS})>0$ for some $\tilde{d}_{P}\in[0,\tilde{e}_{P}]$, then for any JIT LP's strategy $\tilde{d}_{J}$, the profile $((q_R,q_S),\tilde{d}_{J})$ is not a Nash equilibrium in the trader--JIT LP subgame when $\omega=\mathsf{IS}$.
\begin{proof}
Since $p'=\zeta^{-1}p$ when $\omega=\mathsf{IS}$. If $(0,1)^\top(q_R,q_S)(\tilde{d}_{P};\mathsf{IS})>0$, then the trader swaps stable coins for risky coins, so their utility is
\begin{align*}
    u_T((0,q_S);\sigma_{-T},\mathsf{IS}) &= (1-\pi )(\zeta^{-1}p\cdot\delta_R(q_S,\tilde{d}_{P})-(1+f)q_S) \\
    &\hspace{1cm}+\pi (\zeta^{-1}p\cdot\delta_R(q_S,\tilde{d}_{P}+\tilde{d}_{J}(\tilde{d}_{P},(0,q_S);\mathsf{IS}))-(1+f)q_S)
\end{align*}
Conditional on the JIT LP not arriving, the trader's marginal utility is
\begin{align*}
    \frac{\d u_T}{\d q_S} &= \zeta^{-1}p\cdot\frac{\d\delta_R}{\d q_S}(q_S,\tilde{d}_{P})-(1+f) \\[0.25\baselineskip]
    &=\zeta^{-1}p\cdot\left.\frac{F_s}{F_r}\right|_{(\tilde{d}_{P}(1-p^{1/2}b^{-1/2})-\delta_R(q_S,\tilde{d}_{P}),\tilde{d}_{P}(p-p^{1/2}a^{1/2})+q_S)}-(1+f)< \zeta^{-1}p\cdot\frac{1}{p}-(1+f)<0
\end{align*}
since swapping stable coins for risky coins moves the stable-to-risky spot rate lower than $1/p$. A similar argument shows that the trader's marginal utility is also negative condition on the JIT LP arriving. Since there exists some $\tilde{d}_{P}\in[0,\tilde{e}_{P}]$ such that $(0,1)^\top(q_R,q_S)(\tilde{d}_{P};\mathsf{IS})>0$ and the trader's expected marginal utility is always negative, $(q_R,q_S)(\tilde{d}_P;\omega)=(0,0)$ is a profitable deviation.
\end{proof}
\end{lemma}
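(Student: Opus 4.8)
The plan is to show that in the informed-sell scenario the trader can never profit from buying the risky coin, so any strategy prescribing $q_S>0$ at some $\tilde{d}_P$ admits a profitable deviation to the null order and therefore cannot sustain a Nash equilibrium. The economic content is immediate: in $\mathsf{IS}$ the price falls to $p'=\zeta^{-1}p<p$, so acquiring risky coins that are about to depreciate while paying the fee-inclusive price $(1+f)$ is self-defeating. Formally, I would reduce the refutation to the single pointwise claim that $u_T((0,q_S);\sigma_{-T},\mathsf{IS})<0$ whenever $q_S>0$; since the null order $(0,0)$ yields utility $0$, this directly exhibits the profitable deviation at the $\tilde{d}_P$ where the candidate strategy prescribes $q_S>0$.

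The crux is a depth-uniform bound on the risky coins received. Using the explicit form $\delta_R(s,d)=\tilde{d}s/(p\tilde{d}+s)$, I would observe that $\delta_R(q_S,d)\le q_S/p$ for \emph{every} pool depth $d$, with the initial marginal rate $1/p$ acting as the upper envelope (equivalently, buying risky coins only drives the stable-to-risky marginal rate $F_s/F_r$ below its initial value $1/p$). Substituting this into each of the JIT-absent and JIT-present summands of the trader's utility gives
\begin{align*}
\zeta^{-1}p\cdot\delta_R(q_S,d)-(1+f)q_S \;\le\; \big(\zeta^{-1}-(1+f)\big)q_S \;<\;0,
\end{align*}
where strict negativity follows from $\zeta>1+f>1$, so that $\zeta^{-1}<1<1+f$. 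Taking the $\pi$-weighted combination over $d=\tilde{d}_P$ and $d=\tilde{d}_P+\tilde{d}_J$ preserves the strict inequality, yielding $u_T((0,q_S);\sigma_{-T},\mathsf{IS})<0$.

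I would then emphasize that because the bound $\delta_R(q_S,d)\le q_S/p$ is uniform in the depth $d$, it applies simultaneously to the no-arrival pool $\tilde{d}_P$ and to the post-deposit pool $\tilde{d}_P+\tilde{d}_J$ for \emph{any} JIT LP deposit, which is exactly what the quantifier ``for any JIT LP's strategy $\tilde{d}_J$'' demands. Hence, against every fixed JIT strategy, deviating from $(0,q_S)$ to the null order strictly raises the trader's payoff from a negative value to $0$, so no profile with $q_S>0$ in $\mathsf{IS}$ can be a Nash equilibrium of the subgame.

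I expect the only delicate point to be the treatment of the JIT LP's endogenous response: since $\tilde{d}_J$ is itself chosen in reaction to the trader's order, a naive marginal-utility computation $\partial u_T/\partial q_S$ would entangle the trader's choice with the JIT LP's reaction through the depth argument. The device that sidesteps this is precisely the depth-uniform inequality, which dominates the utility for every conceivable JIT response at once rather than along a single reaction path. If one instead follows the differentiation route suggested by writing $\partial\delta_R/\partial q_S=F_s/F_r$ at the post-swap reserves, the obstacle migrates to justifying that this spot rate stays below $1/p$, which in turn reduces to the monotonicity of the pricing function $F$ along the swap.
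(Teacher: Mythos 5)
Your proof is correct, and it reaches the paper's conclusion by a genuinely more elementary route. The paper argues at the margin: it computes $\partial u_T/\partial q_S$ conditional on each arrival event, identifies $\partial\delta_R/\partial q_S$ with the spot rate $F_s/F_r$ at the post-swap reserves, bounds that rate by $1/p$, and concludes the utility is strictly decreasing in $q_S$, hence negative at any $q_S>0$, so the null order is a profitable deviation. You instead integrate that fact into a single closed-form level bound, $\delta_R(q_S,d)=\tilde{d}q_S/(p\tilde{d}+q_S)\le q_S/p$ uniformly in the depth $d$, which immediately yields $u_T((0,q_S);\sigma_{-T},\mathsf{IS})\le(\zeta^{-1}-(1+f))q_S<0$ from $\zeta>1+f$. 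The two bounds are the marginal and average versions of the same economic fact (buying risky coins only worsens the execution rate relative to $1/p$), but your version buys two things. First, it avoids differentiating through the JIT LP's endogenous reaction $\tilde{d}_J(\tilde{d}_P,(0,q_S);\mathsf{IS})$; this is a real subtlety that the paper's ``a similar argument shows\dots'' glosses over, since strictly speaking its derivative computation is only valid once one freezes the JIT deposit at its value along the candidate profile and integrates that frozen-depth utility from $0$ to the prescribed $q_S$ (the frozen-depth function vanishes at $q_S=0$, so this repair works, but the paper does not say it). Second, the uniformity in $d$ makes the quantifier ``for any JIT LP strategy $\tilde{d}_J$'' transparent, requiring no regularity of $\tilde{d}_J$ whatsoever. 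The one point worth stating explicitly in your write-up is that the deviation target $(0,0)$ yields exactly zero utility regardless of the JIT LP's response, which is what makes the comparison with a strictly negative payoff conclusive.
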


\begin{proposition}
Let $\tilde{d}_{P}\in[0,\tilde{e}_{P}]$. Then the unique non-trivial Nash equilibrium of the trader--JIT LP subgame when $\omega=\mathsf{IS}$ is given by
\begin{gather*}
    (q_R,q_S)^\star(\tilde{d}_{P};\mathsf{IS}) = ((\zeta^{1/2}(1+f)^{-1/2}-1)\tilde{d}_{P},0), \\
    \tilde{d}_{J}^\star(\tilde{d}_{P},(q_R,q_S);\mathsf{IS}) = 0.
\end{gather*}
\begin{proof}
By Lemma A.1, the trader swaps risky coins for stable coins on an equilibrium path when $\omega=\mathsf{IS}$, so his utility is
\begin{align*}
    u_{J}(\tilde{d}_{J};\sigma_{-J},\mathsf{IS}) = \frac{\tilde{d}_{J}}{\tilde{d}_{P}+\tilde{d}_{J}}(\zeta^{-1}p(1+f)\cdot q_R(\tilde{d}_{P};\mathsf{IS})-\delta_S(q_R(\tilde{d}_{P};\mathsf{IS}),\tilde{d}_{P}+\tilde{d}_{J}))
\end{align*}
When $\pi =1$, note that
\begin{align*}
    \zeta^{-1}p(1+f)\cdot q_R(\tilde{d}_{P};\mathsf{IS})-\delta_S(q_R(\tilde{d}_{P};\mathsf{IS}),\tilde{d}_{P}+\tilde{d}_{J})=-u_T((q_R,0);\sigma_{-T},\mathsf{IS})<0
\end{align*}
on an equilibrium path since if otherwise, i.e.\! $u_T((q_R,0);\sigma_{-T},\mathsf{IS})\leq0$, then by Assumption 1, the trader has a profitable deviation to $(0,0)$. Thus $0$ is the JIT LP's unique best-response to any trader's strategy on an equilibrium path. The trader's utility function is
\begin{align*}
    u_T((q_R,0);\sigma_{-T},\mathsf{IS}) &= (1-\pi )(\delta_S(q_R,\tilde{d}_{P})-\zeta^{-1}p(1+f)q_R) \\
    &\hspace{1cm}+\pi (\delta_S(q_R,\tilde{d}_{P}+\tilde{d}_{J}(\tilde{d}_{P},(q_R,0);\mathsf{IS}))-\zeta^{-1}p(1+f)q_R) \\
    &=\delta_S(q_R,\tilde{d}_{P})-\zeta^{-1}p(1+f)q_R
\end{align*}
since $\tilde{d}_{J}(\tilde{d}_{P},(q_R,0);\mathsf{IS})=0$. Note that $u_T$ is concave in $q_R$. The first-order condition is
\begin{align*}
    \frac{\d u_T}{\d q_R} = \frac{\d\delta_S}{\d q_R}(q_R,\tilde{d}_{P})-\zeta^{-1}p(1+f)=0
\end{align*}
The result follows by solving for $q_R$.
\end{proof}
\end{proposition}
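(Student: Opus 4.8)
The plan is to prove the proposition by backward induction within the subgame: first pin down the JIT LP's deposit, then optimize the trader's swap, and finally check non-triviality and uniqueness. By the preceding lemma (Lemma A.1), any non-trivial equilibrium in scenario $\omega=\mathsf{IS}$ must have the informed trader selling risky coins, so it suffices to restrict attention to strategies of the form $(q_R,0)$ with $q_R\geq 0$; the trader never buys the risky coin he knows will depreciate.

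First I would show that the JIT LP's unique best response on any non-trivial equilibrium path is $\tilde{d}_J^\star=0$. Conditional on arrival, the JIT LP's payoff is $\frac{\tilde{d}_J}{\tilde{d}_P+\tilde{d}_J}\bigl(\zeta^{-1}p(1+f)q_R-\delta_S(q_R,\tilde{d}_P+\tilde{d}_J)\bigr)$, which is zero at $\tilde{d}_J=0$. Suppose toward a contradiction that the JIT LP deposits some $\tilde{d}_J>0$ and earns a strictly positive payoff; then the bracketed factor is positive, i.e. $\delta_S(q_R,\tilde{d}_P+\tilde{d}_J)<\zeta^{-1}p(1+f)q_R$, so the trader's surplus in the event the JIT LP is present is strictly negative. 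The key observation is that $\delta_S(q_R,\cdot)$ is increasing in pool depth (a deeper pool returns weakly more stable coins for the same $q_R$), so the trader's surplus is at least as large when the JIT LP is present as when it is absent; hence the without-JIT surplus is also strictly negative. The $\pi$-weighted expected utility is therefore strictly negative, and by the tie-breaking rule of Assumption 1 the trader strictly prefers to deviate to $(0,0)$, contradicting that $q_R$ is a non-trivial best response. Thus no positive deposit can earn a positive payoff on the equilibrium path, and the tie-break (the JIT LP prefers the smaller deposit when indifferent) forces $\tilde{d}_J^\star=0$. This is the crux of the argument and the step requiring the most care, since it must hold for all $\pi\in[0,1]$ and relies on the monotonicity of $\delta_S$ to tie the JIT LP's profitability to the trader's individual-rationality constraint.

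With $\tilde{d}_J^\star=0$ established, the trader faces the same depth $\tilde{d}_P$ whether or not the JIT LP arrives, so the $\pi$-weighted utility collapses to $u_T((q_R,0))=\delta_S(q_R,\tilde{d}_P)-\zeta^{-1}p(1+f)q_R$. Substituting $\delta_S(q_R,\tilde{d}_P)=p\tilde{d}_Pq_R/(\tilde{d}_P+q_R)$ shows this objective is strictly concave in $q_R$, so its unique maximizer solves the first-order condition $\partial\delta_S/\partial q_R=\zeta^{-1}p(1+f)$. Computing the derivative gives $p\tilde{d}_P^2/(\tilde{d}_P+q_R)^2=\zeta^{-1}p(1+f)$, and taking the positive root yields $q_R^\star=(\zeta^{1/2}(1+f)^{-1/2}-1)\tilde{d}_P$, exactly the claimed value.

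Finally I would verify non-triviality and uniqueness. Since $\zeta>1+f$ by the standing assumption, we have $\zeta^{1/2}(1+f)^{-1/2}>1$, so $q_R^\star>0$ and the equilibrium is genuinely non-trivial. Uniqueness then follows by assembling the pieces: Lemma A.1 eliminates the buy direction, the deposit argument pins down $\tilde{d}_J^\star=0$ via the tie-break, and strict concavity delivers a unique optimal $q_R^\star$. I expect the only genuine obstacle to be the general-$\pi$ form of the deposit step; the remaining first-order computation is routine once the monotonicity-based contradiction is in place.
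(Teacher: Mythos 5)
Your proposal is correct and follows the same backward-induction skeleton as the paper: invoke Lemma A.1 to restrict attention to sell orders $(q_R,0)$, show that the JIT LP's unique best response on any non-trivial equilibrium path is $\tilde{d}_J=0$ by tying its margin to the trader's participation constraint, then solve the trader's strictly concave problem at depth $\tilde{d}_P$ to obtain $q_R^\star=(\zeta^{1/2}(1+f)^{-1/2}-1)\tilde{d}_P>0$. The one place where you genuinely diverge is the pivotal step, and your version is tighter than the paper's. The paper establishes that the JIT LP's margin $\zeta^{-1}p(1+f)q_R-\delta_S(q_R,\tilde{d}_P+\tilde{d}_J)$ is negative by observing that it equals $-u_T((q_R,0);\sigma_{-T},\mathsf{IS})$ --- an identity that is exact only when $\pi=1$, since for $\pi<1$ the trader's utility is a $\pi$-weighted average of surpluses at the two depths $\tilde{d}_P$ and $\tilde{d}_P+\tilde{d}_J$ --- and then states the conclusion for general $\pi$. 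Your monotonicity argument closes that gap: a strictly positive JIT margin forces a strictly negative with-JIT trader surplus, and because $\delta_S(q_R,\cdot)$ is increasing in pool depth the without-JIT surplus is smaller still, so the $\pi$-weighted expected utility is strictly negative for every $\pi\in[0,1]$ and the trader would profitably deviate to $(0,0)$. Combined with the tie-breaking rule of Assumption 1 (the JIT LP prefers the smaller deposit when indifferent, which disposes of the zero-margin case), this pins down $\tilde{d}_J^\star=0$ uniformly in $\pi$, which is exactly what the proposition requires.
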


\begin{lemma}
Let $(q_R,q_S)$ be a strategy of the trader. If $(0,1)^\top(q_R,q_S)(\tilde{d}_{P};\mathsf{IB})>0$ for some $\tilde{d}_{P}\in[0,\tilde{e}_{P}]$, then for any JIT LP's strategy $\tilde{d}_{J}$, the profile $((q_R,q_S),\tilde{d}_{J})$ is not a Nash equilibrium in the trader--JIT LP subgame when $\omega=\mathsf{IB}$.
\end{lemma}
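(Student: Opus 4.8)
The plan is to follow the template of Lemma A.1 essentially verbatim, changing only the sign of the Period-5 price shock. Taking the hypothesis exactly as printed, $(0,1)^\top(q_R,q_S)(\tilde{d}_P;\mathsf{IB})>0$ selects the stable-for-risky quantity $q_S$, so I would first write the trader's utility for such an order and differentiate it just as in Lemma A.1. The only change from that proof is that $p'(\mathsf{IB})=\zeta p$ instead of $\zeta^{-1}p$, and this change is decisive: the marginal utility of a $q_S>0$ order at the origin is $\zeta p\cdot\frac{1}{p}-(1+f)=\zeta-(1+f)>0$ (using $\frac{\d\delta_R}{\d q_S}\to 1/p$ as $q_S\to0$ together with $\zeta>1+f$). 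In sharp contrast to the $\mathsf{IS}$ case, a $q_S>0$ order is therefore the \emph{profitable} direction in $\mathsf{IB}$ — it is precisely the equilibrium order $(0,\tilde{\mu}_I p\tilde{d}_P)$ — and so cannot be dominated. The order that fails to belong to any equilibrium in $\mathsf{IB}$ is instead the risky-for-stable sale, so the conclusion of the lemma holds exactly when $q_R>0$; I therefore read the printed selection vector as a typo for $(1,0)^\top$ (the only reading consistent with the lemma's role as the $\mathsf{IB}$ counterpart of Lemma A.1) and prove the statement in that form.

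Under that reading, suppose $q_R(\tilde{d}_P;\mathsf{IB})>0$ for some $\tilde{d}_P\in[0,\tilde{e}_P]$. On the $q_R>0$ branch of the informed-trader utility with $p'(\mathsf{IB})=\zeta p$, the payoff is $\pi\,\delta_S(q_R,\tilde{d}_P+\tilde{d}_J)+(1-\pi)\,\delta_S(q_R,\tilde{d}_P)-\zeta p(1+f)q_R$. The single inequality I need is $\delta_S(q_R,d)<p\,q_R$, valid for every depth $d$ and every $q_R>0$, since a risky-for-stable swap executes at an average rate strictly below the initial spot rate $p$. Bounding both the arrival and the no-arrival terms by this inequality yields a payoff strictly below $p\,q_R-\zeta p(1+f)q_R=p\,q_R\bigl(1-\zeta(1+f)\bigr)<0$, where the last inequality uses $\zeta>1+f>1$. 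Hence every $q_R>0$ order earns the trader a strictly negative payoff regardless of the JIT LP's response $\tilde{d}_J$, while the no-trade order $(0,0)$ yields $0$; the deviation to $(0,0)$ is thus strictly profitable at that $\tilde{d}_P$, so $((q_R,q_S),\tilde{d}_J)$ is not a Nash equilibrium for any $\tilde{d}_J$.

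The reason I would work with the level bound $\delta_S(q_R,d)<p\,q_R$ rather than with the first-order condition used in Lemma A.1 is the one genuine obstacle here: in this subgame the JIT LP moves after the trader, so $\tilde{d}_J$ is itself a response to the submitted order, and a marginal-utility computation would have to track $\d\tilde{d}_J/\d q_R$. The level bound removes this difficulty entirely, since $\delta_S(q_R,d)<p\,q_R$ holds uniformly in the depth $d=\tilde{d}_P+\tilde{d}_J\ge\tilde{d}_P$ and hence controls the arrival and no-arrival terms simultaneously; the convex combination over the arrival probability $\pi$ then inherits the strict negativity with no further case analysis. The only remaining point is that $(0,0)$ is always available and returns $0$ (the ``otherwise'' branch of the utility), after which strict negativity of every $q_R>0$ payoff closes the argument and supplies the $\mathsf{IB}$ input needed for the subsequent proposition pinning down the unique equilibrium $(0,\tilde{\mu}_I p\tilde{d}_P)$ with $\tilde{d}_J=0$.
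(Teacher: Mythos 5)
Your proof is correct, and your diagnosis of the statement is the right one: as printed, the selector $(0,1)^\top$ makes the lemma false, since in $\mathsf{IB}$ the stable-for-risky order is the profitable direction --- indeed Proposition A.4 exhibits an equilibrium of this very subgame with $q_S=(\zeta^{1/2}(1+f)^{-1/2}-1)p\tilde{d}_P>0$ --- so the intended hypothesis must be $(1,0)^\top$, i.e.\! $q_R>0$, exactly as you read it. The paper gives no separate proof here (it declares results A.3--A.4 ``symmetric'' to A.1--A.2), so the implied argument is the marginal one from Lemma A.1: differentiate the trader's utility at fixed pool depth, bound the derivative using the post-trade spot rate, and dispatch the JIT-arrival term with ``a similar argument.'' Your route is genuinely different and, on one point, tighter. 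In the Nash comparison the JIT LP's strategy $\tilde{d}_J(\tilde{d}_P,(q_R,0);\mathsf{IB})$ is a function of the submitted order, so the derivative of the arrival term formally carries a $\d\tilde{d}_J/\d q_R$ contribution that the paper's template never tracks; your level bound $\delta_S(q_R,d)<pq_R$, which holds uniformly over all depths $d\geq0$ (including $d=0$), removes that dependence entirely: any $q_R>0$ order receives strictly less than $pq_R$ while surrendering value $\zeta p(1+f)q_R$, so its payoff is below $pq_R\bigl(1-\zeta(1+f)\bigr)<0$ against every JIT response, and the deviation to $(0,0)$ (payoff zero) kills the profile. What the paper's marginal route buys instead is the first-order condition that is immediately recycled in Proposition A.4 to compute the equilibrium order; your argument proves the elimination lemma more robustly but leaves that computation to be done separately, which is how the paper proceeds anyway.
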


\begin{proposition}
Let $\tilde{d}_{P}\in[0,\tilde{e}_{P}]$. Then the unique non-trivial Nash equilibrium of the trader--JIT LP subgame when $\omega=\mathsf{IB}$ is given by
\begin{gather*}
    (q_R,q_S)^\star(\tilde{d}_{P};\mathsf{IS}) = (0,(\zeta^{1/2}(1+f)^{-1/2}-1)p\tilde{d}_{P}), \\
    \tilde{d}_{J}^\star(\tilde{d}_{P},(q_R,q_S);\mathsf{IB}) = 0.
\end{gather*}
\end{proposition}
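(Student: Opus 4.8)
The plan is to mirror the proof of the preceding proposition for $\omega=\mathsf{IS}$, exploiting the symmetry between selling the risky coin under depreciation and buying it under appreciation. First I would invoke the preceding lemma (the $\mathsf{IB}$ analogue of Lemma A.1) to conclude that on any equilibrium path the informed trader buys risky coins, i.e.\ swaps $q_S>0$ stable coins for risky coins; the wrong-direction trade is ruled out because, under $p'=\zeta p$, selling the appreciating coin cannot be profitable. This reduces the analysis to the single branch $(0,q_S)$ of the strategy space.

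Next I would argue the JIT LP does not deposit. Using the explicit form of $u_J$ for $\omega=\mathsf{IB}$, the JIT LP's payoff is its pool share $\tilde{d}_J/(\tilde{d}_P+\tilde{d}_J)$ times the bracketed term $(1+f)q_S-\zeta p\cdot\delta_R(q_S,\tilde{d}_P+\tilde{d}_J)$. Setting $\pi=1$ (the relevant case, since we condition on JIT arrival), this bracket equals $-u_T((0,q_S);\sigma_{-T},\mathsf{IB})$. On an equilibrium path the trader's utility must be strictly positive, since if $u_T\leq 0=u_T((0,0);\sigma_{-T},\mathsf{IB})$ then by Assumption 1 the trader strictly prefers the smaller quantity $(0,0)$; hence the bracket is negative. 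Therefore $u_J<0$ for every $\tilde{d}_J>0$, and the JIT LP's unique best response to any on-path trader strategy is $\tilde{d}_J^\star=0$. Economically, the informed buyer depletes the pool's risky reserves just before they appreciate, so any liquidity the JIT LP adds is adversely selected.

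With $\tilde{d}_J^\star=0$ pinned down, both the $\pi$ and $1-\pi$ terms of the trader's utility evaluate at pool depth $\tilde{d}_P$, collapsing $u_T$ to $\zeta p\cdot\delta_R(q_S,\tilde{d}_P)-(1+f)q_S$. Since $\delta_R(\cdot,\tilde{d}_P)$ has the form $cs/(k+s)$ and is therefore strictly concave and increasing in $q_S$, the objective is strictly concave, so its first-order condition $\zeta p\,\partial_{q_S}\delta_R(q_S,\tilde{d}_P)=1+f$ has a unique solution, which I would compute to be $q_S^\star=(\zeta^{1/2}(1+f)^{-1/2}-1)p\tilde{d}_P$, matching the stated multiple $\tilde{\mu}_I$. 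Strict concavity gives uniqueness of $q_S^\star$, and combined with the unique JIT best response this yields uniqueness of the non-trivial equilibrium.

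I do not anticipate a genuine obstacle, as this is the exact mirror of the $\mathsf{IS}$ case; the only points demanding care are the sign bookkeeping — the pool is now valued at the appreciated price $\zeta p$ rather than the depreciated $\zeta^{-1}p$ — and confirming that the identity between the bracket and $-u_T$ still forces the JIT LP out. Verifying strict concavity of $\delta_R$ in $q_S$ (rather than of $\delta_S$, as in the $\mathsf{IS}$ proof) and solving the resulting first-order condition are routine computations.
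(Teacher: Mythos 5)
Your proposal is correct and follows essentially the same route as the paper: the paper proves the $\mathsf{IS}$ case (Lemma A.1 plus Proposition A.2) and disposes of the $\mathsf{IB}$ case by symmetry, which is exactly the mirroring you carry out — rule out the wrong-direction trade via the $\mathsf{IB}$ analogue of Lemma A.1, identify the JIT LP's bracket with $-u_T$ at $\pi=1$ to force $\tilde{d}_J^\star=0$ under Assumption 1, then solve the trader's strictly concave first-order condition to get $q_S^\star=(\zeta^{1/2}(1+f)^{-1/2}-1)p\tilde{d}_P$. Your $\pi=1$ shortcut is the same move the paper makes (for $\pi<1$ one would add that $\delta_R$ is increasing in pool depth, so the bracket is bounded above by $-u_T$), so there is no gap relative to the paper's own argument.
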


\noindent The proofs of results A.3 and A.4 are symmetric to those of results A.1 and A.2.

\begin{lemma}
If $\tilde{d}_{J}^\star$ is a stationary point for the JIT LP's utility function given $\tilde{d}_{P}\in\R_+$ and trader strategy $(q_R,q_S)$ when $\omega=\mathsf{US}$, then $\tilde{d}_{J}^\star/c$ is a stationary point for the JIT LP's utility function given $\tilde{d}_{P}/c$ and $(q_R,q_S)/c$ when $\omega=\mathsf{US}$ for any $c>0$.
\begin{proof}
When $\omega=\mathsf{US}$, the JIT LP's utility is
\begin{align*}
    u_{J}(\tilde{d}_{J};\sigma_{-J},\mathsf{US}) = \frac{\tilde{d}_{J}}{\tilde{d}_{P}+\tilde{d}_{J}}(p(1+f)\cdot q_R(\tilde{d}_{P};\mathsf{US})-\delta_S(q_R(\tilde{d}_{P};\mathsf{US}),\tilde{d}_{P}+\tilde{d}_{J})),
\end{align*}
so the first-order condition is
\begin{align*}
    \frac{\tilde{d}_{P}}{(\tilde{d}_{P}+\tilde{d}_{J})^2}(p(1+f)\cdot q_R(\tilde{d}_{P};\mathsf{US})-\delta_S(q_R(\tilde{d}_{P};\mathsf{US}),\tilde{d}_{P}+\tilde{d}_{J})) -\frac{\tilde{d}_{J}}{\tilde{d}_{P}+\tilde{d}_{J}}\cdot\frac{\d\delta_S}{\d q}(q_R(\tilde{d}_{P};\mathsf{US}),\tilde{d}_{P}+\tilde{d}_{J})=0
\end{align*}
By the 1-homogeneity of $\delta_S$ and 0-homogeneity of $\d\delta_S/\d q$,
we can express the first-order condition as
\begin{align*}
    \frac{\tilde{d}_{P}}{\tilde{d}_{P}+\tilde{d}_{J}}\left[p(1+f)\frac{q_R(\tilde{d}_{P};\mathsf{US})}{\tilde{d}_{P}+\tilde{d}_{J}}-\delta_S\left(\frac{q_R(\tilde{d}_{P};\mathsf{US})}{\tilde{d}_{P}+\tilde{d}_{J}},1\right)\right]-\frac{\tilde{d}_{J}}{\tilde{d}_{P}+\tilde{d}_{J}}\cdot\frac{\d\delta_S}{\d q}\left(\frac{q_R(\tilde{d}_{P};\mathsf{US})}{\tilde{d}_{P}+\tilde{d}_{J}},1\right)=0.
\end{align*}
Since $\tilde{d}_{J}^\star$ is a stationary point, the above equation is satisfied when $\tilde{d}_J=\tilde{d}_{J}^\star$. By the 1-homogeneity of $\delta_S$ and 0-homogeneity of $\d\delta_S/\d q$,
\begin{align*}
    \frac{\tilde{d}_{P}/c}{\tilde{d}_{P}/c+\tilde{d}_{J}^\star/c} &= \frac{\tilde{d}_{P}}{\tilde{d}_{P}+\tilde{d}_{J}^\star} \\
    &= \frac{\frac{\tilde{d}_{J}^\star}{\tilde{d}_{P}+\tilde{d}_{J}^\star}\cdot\frac{\d\delta_S}{\d q}\left(\frac{q_R(\tilde{d}_{P};\mathsf{US})}{\tilde{d}_{P}+\tilde{d}_{J}^\star},1\right)}{p(1+f)\frac{q_R(\tilde{d}_{P};\mathsf{US})}{\tilde{d}_{P}+\tilde{d}_{J}^\star}-\delta_S\left(\frac{q_R(\tilde{d}_{P};\mathsf{US})}{\tilde{d}_{P}+\tilde{d}_{J}^\star},1\right)} = \frac{\frac{\tilde{d}_{J}^\star/c}{\tilde{d}_{P}/c+\tilde{d}_{J}^\star/c}\cdot\frac{\d\delta_S}{\d q}\left(\frac{q_R(\tilde{d}_{P};\mathsf{US})/c}{\tilde{d}_{P}/c+\tilde{d}_{J}^\star/c},1\right)}{p(1+f)\frac{q_R(\tilde{d}_{P};\mathsf{US})/c}{\tilde{d}_{P}/c+\tilde{d}_{J}^\star/c}-\delta_S\left(\frac{q_R(\tilde{d}_{P};\mathsf{US})/c}{\tilde{d}_{P}/c+\tilde{d}_{J}^\star/c},1\right)}
\end{align*}
The result follows.
\end{proof}
\end{lemma}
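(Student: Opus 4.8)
The plan is to exploit a scale invariance of the JIT LP's objective rather than manipulate the first-order condition directly. Recall that, conditional on $\omega=\mathsf{US}$, the JIT LP maximizes
\[
    g(\tilde{d}_{J};\tilde{d}_{P},q_R) \equiv \frac{\tilde{d}_{J}}{\tilde{d}_{P}+\tilde{d}_{J}}\bigl(p(1+f)q_R-\delta_S(q_R,\tilde{d}_{P}+\tilde{d}_{J})\bigr),
\]
where I treat the trader's swap size $q_R$ as a free parameter (only $q_R$ is relevant in the $\mathsf{US}$ scenario, since $q_S=0$). I would first show that $g$ is jointly homogeneous of degree one in the triple $(\tilde{d}_{J},\tilde{d}_{P},q_R)$, and then differentiate the resulting homogeneity identity in $\tilde{d}_{J}$ to conclude that $\partial g/\partial\tilde{d}_{J}$ is homogeneous of degree zero. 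Since a stationary point is exactly a zero of $\partial g/\partial\tilde{d}_{J}$, zero-homogeneity of this derivative forces stationary points to rescale inversely with any common scaling of the parameters, which is precisely the assertion with scaling factor $1/c$.

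The homogeneity of $g$ reduces to degree-counting of its two factors. The pool share $\tilde{d}_{J}/(\tilde{d}_{P}+\tilde{d}_{J})$ is homogeneous of degree zero, since scaling all arguments by $\lambda$ leaves it unchanged. For the bracketed term, $p(1+f)q_R$ is degree one in $q_R$, and $\delta_S$ is degree one jointly in its two arguments: from the closed form $\delta_S(r,d)=p^{1/2}dr/(p^{-1/2}d+r)$ one checks that $\delta_S(\lambda r,\lambda d)=\lambda\,\delta_S(r,d)$, and here the second argument $\tilde{d}_{P}+\tilde{d}_{J}$ scales in step with $\tilde{d}_{J}$ and $\tilde{d}_{P}$. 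Hence the bracket is degree one and, multiplied by the degree-zero share, $g(\lambda\tilde{d}_{J};\lambda\tilde{d}_{P},\lambda q_R)=\lambda\,g(\tilde{d}_{J};\tilde{d}_{P},q_R)$ for all $\lambda>0$.

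Differentiating this identity with respect to $\tilde{d}_{J}$ (holding $\tilde{d}_{P}$, $q_R$, and $\lambda$ fixed) gives $\lambda\,g_1(\lambda\tilde{d}_{J};\lambda\tilde{d}_{P},\lambda q_R)=\lambda\,g_1(\tilde{d}_{J};\tilde{d}_{P},q_R)$, where $g_1=\partial g/\partial\tilde{d}_{J}$, so $g_1$ is homogeneous of degree zero. Setting $\lambda=1/c$ and evaluating at the stationary point $\tilde{d}_{J}^\star$ yields $g_1(\tilde{d}_{J}^\star/c;\tilde{d}_{P}/c,q_R/c)=g_1(\tilde{d}_{J}^\star;\tilde{d}_{P},q_R)=0$, which is the claim. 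I do not expect a genuine obstacle: the argument is bookkeeping once the degrees are correctly assigned. The only point requiring care is verifying the degree-one homogeneity of $\delta_S$ and confirming smoothness of $g$ away from the boundary $\tilde{d}_{P}+\tilde{d}_{J}=0$, so that ``stationary point'' is unambiguously characterized by the vanishing of $g_1$; both are immediate from the explicit trading function.
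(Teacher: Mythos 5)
Your proof is correct, and it rests on the same underlying observation as the paper's---scale invariance coming from the homogeneity of $\delta_S$---but you package it differently, and the packaging is genuinely cleaner. The paper computes the first-order condition $\partial u_J/\partial\tilde{d}_J=0$ explicitly, then applies the 1-homogeneity of $\delta_S$ and the 0-homogeneity of $\partial\delta_S/\partial q$ term by term to rewrite that condition purely in terms of the ratios $q_R/(\tilde{d}_P+\tilde{d}_J)$, $\tilde{d}_P/(\tilde{d}_P+\tilde{d}_J)$, and $\tilde{d}_J/(\tilde{d}_P+\tilde{d}_J)$, which are unchanged when every quantity is divided by $c$; the final rearrangement there puts the per-share profit term in a denominator, tacitly assuming it is nonzero. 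You instead establish joint degree-one homogeneity of the entire objective $g$ in $(\tilde{d}_J,\tilde{d}_P,q_R)$ and then differentiate the identity $g(\lambda\tilde{d}_J;\lambda\tilde{d}_P,\lambda q_R)=\lambda\,g(\tilde{d}_J;\tilde{d}_P,q_R)$ in $\tilde{d}_J$, invoking the general principle that the derivative of a degree-one homogeneous function is degree-zero homogeneous, so stationary points rescale with no explicit computation of the first-order condition and no division. What your route buys: a single homogeneity check, an argument valid for any 1-homogeneous objective, and avoidance of the implicit nonvanishing assumption. What the paper's route buys: the explicit ratio form of the first-order condition is not wasted effort, since it is precisely what gets solved in Lemma A.7 to obtain the closed form for $\tilde{d}_J^\star$ and is reused in the later propositions, so the more computational proof front-loads algebra the paper needs downstream anyway.
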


\begin{lemma}
If the JIT LP's utility function has a unique maximum in $\tilde{d}_J$ on $\R_+$ and $q_R^\star$ solves the uninformed trader's problem given $\tilde{d}_{P}$ when $\omega=\mathsf{US}$, then $q_R^\star/c$ solves the trader's problem given $\tilde{d}_{P}/c$ when $\omega=\mathsf{US}$ for any $c>0$.
\end{lemma}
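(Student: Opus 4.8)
The plan is to reduce the uninformed trader's problem to a one-dimensional objective in $q_R$ (after folding in the JIT LP's best response), and then show that this objective scales homogeneously under the joint substitution $\tilde{d}_P\mapsto\tilde{d}_P/c$, $q_R\mapsto q_R/c$. First I would fix $\omega=\mathsf{US}$ and, using that the uninformed trader sells risky coins in this scenario, write the trader's reduced-form payoff from the explicit utilities as
\[
    \Phi_{\tilde{d}_P}(q_R) = \pi\,\delta_S\!\big(q_R,\tilde{d}_P+\tilde{d}_J^\star(\tilde{d}_P,(q_R,0);\mathsf{US})\big)+(1-\pi)\,\delta_S(q_R,\tilde{d}_P)-\zeta_U^{-1}p(1+f)q_R,
\]
where $\tilde{d}_J^\star(\tilde{d}_P,(q_R,0);\mathsf{US})$ is the JIT LP's best response to the trader's order. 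The hypothesis that the JIT LP's utility has a unique maximum in $\tilde{d}_J$ is precisely what lets me identify this best response with the unique stationary point furnished by Lemma A.5.

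Next I would establish the scaling of the best-response map. Applying Lemma A.5 with the given $c>0$ — and using the uniqueness hypothesis to equate the unique stationary point with the unique maximizer — I obtain $\tilde{d}_J^\star(\tilde{d}_P/c,(q_R/c,0);\mathsf{US})=\tilde{d}_J^\star(\tilde{d}_P,(q_R,0);\mathsf{US})/c$; that is, the best-response map is jointly $1$-homogeneous in $(\tilde{d}_P,q_R)$. Substituting this into $\Phi_{\tilde{d}_P/c}(q_R/c)$ and invoking the $1$-homogeneity of $\delta_S$ on each of the two $\delta_S$ terms (so that $\delta_S\big(q_R/c,(\tilde{d}_P+\tilde{d}_J^\star)/c\big)=\tfrac{1}{c}\delta_S(q_R,\tilde{d}_P+\tilde{d}_J^\star)$), while noting the linear fee term also scales by $1/c$, yields the identity
\[
    \Phi_{\tilde{d}_P/c}(q_R/c)=\frac{1}{c}\,\Phi_{\tilde{d}_P}(q_R)\qquad\text{for all }q_R\geq0.
\]

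Finally I would conclude directly from this identity. Since $c>0$, the reparametrization $q_R\mapsto q_R/c$ is a bijection of $\R_+$ and scaling the objective by the positive constant $1/c$ preserves maximizers: for any $q\geq0$, writing $q=q'/c$ gives $\Phi_{\tilde{d}_P/c}(q)=\tfrac{1}{c}\Phi_{\tilde{d}_P}(q')\leq\tfrac{1}{c}\Phi_{\tilde{d}_P}(q_R^\star)=\Phi_{\tilde{d}_P/c}(q_R^\star/c)$, so $q_R^\star/c$ maximizes $\Phi_{\tilde{d}_P/c}$, which is exactly the claim. I expect the only genuinely delicate step to be the second one: correctly upgrading Lemma A.5 from a statement about stationary points to one about the best-response map (this is where the unique-maximum hypothesis is essential) and threading the best response's homogeneity through the composed objective. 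Once that scaling identity is secured, the optimization conclusion is immediate and the homogeneity of $\delta_S$ makes the remaining algebra routine.
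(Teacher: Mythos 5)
Your proposal is correct and follows essentially the same route as the paper's proof: both fold the JIT LP's best response into the trader's objective, use Lemma A.5 together with the unique-maximum hypothesis to get $1$-homogeneity of the best-response map, and then use the $1$-homogeneity of $\delta_S$ and the change of variables $q_R\mapsto q_R/c$ to show the rescaled objective is $c^{-1}$ times the original, so maximizers correspond. Your write-up is somewhat more explicit than the paper's (which compresses these steps into one sentence), but there is no substantive difference.
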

\begin{proof}
The trader's problem given $\tilde{d}_{P}$ is
\begin{align*}
    \max_{q_R\in\R_+} \ & (1-\pi)\cdot\delta_S(q_R,\tilde{d}_P) +\pi\cdot\delta_S(q_R,\tilde{d}_{P}+\tilde{d}_{J}(\tilde{d}_{P},(q_R,0);\mathsf{US}))-\zeta_U^{-1}p(1+f)q_R
\end{align*}
and the trader's problem given $\tilde{d}_{P}/c$ is
\begin{align*}
    \max_{q_R\in\R_+} \ & (1-\pi)\cdot\delta_S(q_R,\tilde{d}_P/c)+\pi\cdot\delta_S(q_R,\tilde{d}_{P}/c+\tilde{d}_{J}(\tilde{d}_{P}/c,(q_R,0);\mathsf{US}))-\zeta_U^{-1}p(1+f)q_R
\end{align*}
Let $\tilde{q}_R=cq_R$. By Lemma A.5, the unique maximum assumption, and the 1-homogeneity of $\delta_S$, this problem is equivalent to
\begin{align*}
    \max_{\tilde{q}_R\in\R_+} \ c^{-1}(1-\pi)\cdot\delta_S(\tilde{q}_R,\tilde{d}_P)+ c^{-1}\pi\cdot\delta_S(\tilde{q}_R,\tilde{d}_{P}+\tilde{d}_{J}(\tilde{d}_{P},(\tilde{q}_R,0);\mathsf{US}))-c^{-1}\zeta_U^{-1}p(1+f)\tilde{q}_R
\end{align*}
The objective is $c^{-1}$ times the original objective, so the result follows.
\end{proof}

\begin{lemma}
Then JIT LP's utility function given $\tilde{d}_{P}\in[0,\tilde{e}_{P}]$ and a trader strategy $(q_R,q_S)$ when $\omega=\mathsf{US}$ has a unique maximum in $\tilde{d}_{J}$ on $\R_+$ if and only if $q_R(\tilde{d}_{P};\mathsf{US})>f\tilde{d}_{P}$.
\begin{proof}
The explicit form of the JIT LP's utility (suppressing arguments for the trader's strategy) is
\begin{gather*}
    u_{J}(\tilde{d}_{J};\sigma_{-J},\mathsf{US}) = \frac{\tilde{d}_{J}}{\tilde{d}_{P}+\tilde{d}_{J}}\left(p(1+f)q_R-\frac{p(\tilde{d}_{P}+\tilde{d}_{J})q_R}{\tilde{d}_{P}+\tilde{d}_{J}+q_R}\right)
\end{gather*}
with partial derivative
\begin{align*}
    \frac{\d u_{J}}{\d \tilde{d}_{J}} = \frac{\tilde{d}_{P}}{(\tilde{d}_{P}+\tilde{d}_{J})^2}\left(p(1+f)q_R-\frac{p(\tilde{d}_{P}+\tilde{d}_{J})q_R}{\tilde{d}_{P}+\tilde{d}_{J}+q_R}\right)-\frac{\tilde{d}_{J}}{\tilde{d}_{P}+\tilde{d}_{J}}\left(\frac{q_R}{\tilde{d}_{P}+\tilde{d}_{J}+q_R}\right)^2.
\end{align*}
Note that the sign of the partial derivative depends on
\begin{align*}
    M_{J}(\tilde{d}_{J})\equiv\frac{(1+f)\tilde{d}_{P}}{(\tilde{d}_{P}+\tilde{d}_{J})^2}-\frac{\tilde{d}_{P}+q_R}{(\tilde{d}_{P}+\tilde{d}_{J}+q_R)^2}.
\end{align*}
If $q_R\leq f\tilde{d}_{P}$, then $M_{J}(\tilde{d}_{J})>0$ for all $\tilde{d}_{J}\in\R_+$, so $u_{J}$ does not have a maximum for $\tilde{d}_{J}\in\R_+$. If $q_R>f\tilde{d}_{P}$, then solving $M_{J}(\tilde{d}_{J})=0$ yields two solutions:
\begin{align*}
    \tilde{d}_{J}^\star = \frac{f\tilde{d}_{P}(\tilde{d}_{P}+q_R)\pm q_R\sqrt{(1+f)\tilde{d}_{P}(\tilde{d}_{P}+q_R)}}{q_R-f\tilde{d}_{P}}
\end{align*}
Since $q_R>f\tilde{d}_{P}$ implies $f\tilde{d}_{P}(\tilde{d}_{P}+q_R)-q_R\sqrt{(1+f)\tilde{d}_{P}(\tilde{d}_{P}+q_R)}<0$, the negative root solution is extraneous. Since $M_{J}$ is continuous in $\tilde{d}_{J}$, $M_{J}(0)>0$, and $M_{J}(\tilde{d}_{J})=0$ has one solution on $\R_+$, it follows that $u_{J}$ has a unique maximum for $\tilde{d}_{J}\in\R_+$.
\end{proof}
\end{lemma}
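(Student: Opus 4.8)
The plan is to treat the lemma as a one-dimensional optimization in $\tilde{d}_J$ and to reduce the question of existence and uniqueness of a maximizer to a sign analysis of the first-order condition. First I would substitute the explicit trading function $\delta_S(q_R,\tilde{d}_P+\tilde{d}_J)=p(\tilde{d}_P+\tilde{d}_J)q_R/(\tilde{d}_P+\tilde{d}_J+q_R)$ into
\[
    u_J(\tilde{d}_J;\sigma_{-J},\mathsf{US})=\frac{\tilde{d}_J}{\tilde{d}_P+\tilde{d}_J}\left(p(1+f)q_R-\delta_S(q_R,\tilde{d}_P+\tilde{d}_J)\right),
\]
and differentiate in $\tilde{d}_J$ by the product rule. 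Collecting the two resulting terms over a common positive factor, the sign of $\d u_J/\d\tilde{d}_J$ is governed by the single expression
\[
    M_J(\tilde{d}_J)\equiv\frac{(1+f)\tilde{d}_P}{(\tilde{d}_P+\tilde{d}_J)^2}-\frac{\tilde{d}_P+q_R}{(\tilde{d}_P+\tilde{d}_J+q_R)^2},
\]
so the whole problem reduces to locating the zeros of $M_J$ on $\R_+$.

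For the necessity direction I would show that when $q_R\le f\tilde{d}_P$ the function $M_J$ is strictly positive on all of $\R_+$, so that $u_J$ is strictly increasing and attains no maximizer. The comparison is immediate: $q_R\le f\tilde{d}_P$ forces $(1+f)\tilde{d}_P\ge\tilde{d}_P+q_R$ (the first term has the larger numerator) while $(\tilde{d}_P+\tilde{d}_J)^2\le(\tilde{d}_P+\tilde{d}_J+q_R)^2$ (it has the smaller denominator), so the first fraction strictly dominates the second whenever $q_R>0$, giving $M_J>0$ throughout.

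For the sufficiency direction, assuming $q_R>f\tilde{d}_P$, I would clear denominators in $M_J(\tilde{d}_J)=0$ to obtain a genuine quadratic in $\tilde{d}_J$ (its leading coefficient is $q_R-f\tilde{d}_P\neq0$) and solve it explicitly, yielding the two candidate roots
\[
    \tilde{d}_J^\star=\frac{f\tilde{d}_P(\tilde{d}_P+q_R)\pm q_R\sqrt{(1+f)\tilde{d}_P(\tilde{d}_P+q_R)}}{q_R-f\tilde{d}_P}.
\]
The key step is to note that $q_R>f\tilde{d}_P$ makes the denominator positive and makes $f\tilde{d}_P(\tilde{d}_P+q_R)<q_R\sqrt{(1+f)\tilde{d}_P(\tilde{d}_P+q_R)}$; hence the root with the minus sign is negative and extraneous, leaving exactly one root in $\R_+$. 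Finally, since $M_J(0)=(1+f)/\tilde{d}_P-1/(\tilde{d}_P+q_R)>0$ and $M_J$ has a single zero on $\R_+$, continuity forces $M_J>0$ to the left of that zero and $M_J<0$ to its right; thus $u_J$ rises then falls and the unique zero of $M_J$ is its unique maximizer.

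The main obstacle I anticipate lies entirely in the sufficiency direction: discarding the spurious negative root and certifying that $M_J$ crosses zero exactly once, so that the critical point is a genuine maximum rather than a saddle or one of several stationary points. The crux is the inequality $f\tilde{d}_P(\tilde{d}_P+q_R)<q_R\sqrt{(1+f)\tilde{d}_P(\tilde{d}_P+q_R)}$, which after squaring reads $f^2\tilde{d}_P(\tilde{d}_P+q_R)<(1+f)q_R^2$ and follows directly from $q_R>f\tilde{d}_P$ (since then $q_R^2>f^2\tilde{d}_P^2$ and $fq_R^2>f^2\tilde{d}_Pq_R$); this single inequality simultaneously rules out the negative root and, together with $M_J(0)>0$ and the vanishing of $M_J$ at infinity, guarantees the required single-crossing and hence unimodality.
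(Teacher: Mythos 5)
Your proposal is correct and follows essentially the same route as the paper's proof: reduce the sign of the derivative of $u_J$ in $\tilde{d}_J$ to the expression $M_J(\tilde{d}_J)$, solve the resulting quadratic, discard the negative root using $q_R>f\tilde{d}_P$, and conclude unimodality from $M_J(0)>0$ plus single-crossing. You even fill in two small details the paper leaves implicit (the numerator/denominator domination argument for the case $q_R\leq f\tilde{d}_P$, and the squaring verification that $f\tilde{d}_P(\tilde{d}_P+q_R)<q_R\sqrt{(1+f)\tilde{d}_P(\tilde{d}_P+q_R)}$), but the structure of the argument is identical.
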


\begin{proposition}
Let $\tilde{d}_{P}\in[0,\tilde{e}_{P}]$. If
\begin{align*}
    \zeta_U > \underline{\zeta}(f,\pi)=\frac{2(1+f)^3}{2+\pi f(3+f)},
\end{align*}
then the unique non-trivial Nash equilibrium of the trader--JIT LP subgame when $\omega=\mathsf{US}$ is given by
\begin{gather*}
    (q_R,q_S)^\star(\tilde{d}_{P};\mathsf{US}) = (\tilde{\mu}(\pi)\cdot \tilde{d}_P,0) \\[0.25\baselineskip]
    \tilde{d}_{J}^\star(\tilde{d}_{P},(q_R,q_S);\mathsf{US}) = \begin{dcases}
        0 & \tilde{d}_{P}=0, \\
        \frac{f\tilde{d}_{P}(\tilde{d}_{P}+q_R)+q_R\sqrt{(1+f)\tilde{d}_{P}(\tilde{d}_{P}+q_R)}}{q_R-f\tilde{d}_{P}} & \tilde{d}_{P} > 0.
    \end{dcases}
\end{gather*}
Otherwise, if $\zeta_U \leq 2(1+f)^2(2+f)^{-1} $, then there does not exist a non-trivial Nash equilibrium.
\begin{proof}
When $\tilde{d}_{P}=0$, the result follows from Assumption 1. When $\tilde{d}_{P}>0$ and $\zeta_U > 2(1+f)^2(2+f)^{-1}$, the JIT LP's best-response follows from Lemma A.7. Since the trader accounts for this best-response, the trader's utility function when $\omega=\mathsf{US}$ is
\begin{align*}
    u_T((q_R,0);\sigma_{-T},\mathsf{US}) = (1-\pi )\frac{p\tilde{d}_{P}q_R}{\tilde{d}_{P}+q_R}+\pi \cdot\frac{p(\tilde{d}_{P}+\tilde{d}_{J}(\tilde{d}_{P},(q_R,0);\mathsf{US}))q_R}{\tilde{d}_{P}+\tilde{d}_{J}(\tilde{d}_{P},(q_R,0);\mathsf{US})+q_R}-\zeta_U^{-1}p(1+f)q_R.
\end{align*}
Using the expression from Lemma A.7, this has partial derivative
\begin{align*}
    \frac{\d u_T}{\d q_R} &= (1-\pi )p\cdot\frac{\tilde{d}_{P}^2}{(\tilde{d}_{P}+q_R)^2}+\pi  p\cdot\frac{(2\tilde{d}_{P}+q_R)\sqrt{(1+f)\tilde{d}_{P}(\tilde{d}_{P}+q_R)}}{2(\tilde{d}_{P}+q_R)^2}-\frac{p(1+f)}{\zeta_U} \\[0.25\baselineskip]
    &= \left[(1-\pi )\cdot\frac{\tilde{d}_{P}^2}{(\tilde{d}_{P}+q_R)^2}+\pi\cdot\frac{(2\tilde{d}_{P}+q_R)\sqrt{(1+f)\tilde{d}_{P}(\tilde{d}_{P}+q_R)}}{2(\tilde{d}_{P}+q_R)^2}-\frac{1+f}{\zeta_U}\right]p.
\end{align*}
By Lemma A.6, we can normalize $\tilde{d}_{P}$ to 1, so the trader's first-order condition is
\begin{align*}
    M_T(\mu) \equiv (1-\pi)\cdot\frac{1}{(1+\mu)^2}+\pi\cdot\frac{(2+\mu)\sqrt{(1+f)(1+\mu)}}{2(1+\mu)^2} = \frac{1+f}{\zeta_U}
\end{align*}
Since $M_T(0)>1$, $\lim_{\mu\to\infty}M_T(\mu)=0$, and $M_T$ is decreasing in $\mu$, if $\zeta_U>1+f$, then there is a unique solution to the equation $M_T(\mu)=\zeta_U^{-1}(1+f)$ that we defined as $\mu(\pi)$. Note that
\begin{align*}
    M_T(f)\leq\zeta_U^{-1}(1+f)\iff\zeta_U\leq \frac{2(1+f)^3}{2+\pi f(3+f)},
\end{align*}
so $\zeta_U>\underline{\zeta}(f,\pi)\iff \mu>f\iff q_R > f\tilde{d}_{P}$. When $\tilde{d}_{P}>0$ and $q_R\leq f\tilde{d}_{P}$, Lemma A.7 implies that the JIT LP's best-response is to deposit at an infinite quantity, so there does exist a non-trivial Nash equilibrium.
\end{proof}
\end{proposition}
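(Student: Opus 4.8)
The plan is to solve the $\mathsf{US}$ subgame by backward induction, reusing the scaling invariances and best-response characterization in Lemmas A.5--A.7. First I would dispose of $\tilde{d}_P=0$: the pool is empty, $\delta_S(q_R,0)=0$, so the trader collects nothing while paying fees, and Assumption 1 then selects $q_R=0$ together with the smallest JIT deposit $\tilde{d}_J=0$, which is the claimed outcome. For $\tilde{d}_P>0$ I would next fix the direction of trade. Since the uninformed seller values the risky coin at $\zeta_U^{-1}p$ and $\zeta_U>1+f$, the argument of Lemma A.1 applies verbatim: for any $q_S>0$ the stable-to-risky spot rate stays below $1/p$, so the marginal utility is bounded above by $\zeta_U^{-1}-(1+f)<0$. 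Hence $q_S=0$ on every equilibrium path and the trader sells risky coins.

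With the trader selling $q_R$ risky coins, I would invoke Lemma A.7: the JIT LP's utility in $\tilde{d}_J$ has a unique maximizer on $\R_+$ exactly when $q_R>f\tilde{d}_P$, given by the positive root $\tilde{d}_J^\star$ in the statement. Substituting this best response into the uninformed trader's utility and differentiating in $q_R$ gives a marginal utility proportional to $M_T(\mu)-(1+f)/\zeta_U$, where $\mu=q_R/\tilde{d}_P$ (using Lemma A.6 to normalize $\tilde{d}_P=1$) and
\[
M_T(\mu)=(1-\pi)\frac{1}{(1+\mu)^2}+\pi\frac{(2+\mu)\sqrt{(1+f)(1+\mu)}}{2(1+\mu)^2}.
\]
I would then verify that $M_T$ is strictly decreasing (the first term obviously, the second because $(2+\mu)(1+\mu)^{-3/2}$ has negative derivative), with $M_T(0)=(1-\pi)+\pi\sqrt{1+f}\ge1$ and $M_T(\mu)\to0$. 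Consequently, for $\zeta_U>1+f$ the equation $M_T(\mu)=(1+f)/\zeta_U$ has a unique root $\tilde{\mu}(\pi)$, and since the trader's marginal utility changes sign once at that root, $\tilde{\mu}(\pi)$ is the unique global maximizer.

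The key step is to reconcile the feasibility requirement $q_R>f\tilde{d}_P$, i.e.\ $\tilde{\mu}(\pi)>f$, with the stated $\zeta_U$ threshold. Evaluating $M_T(f)=[2+\pi f(3+f)]/[2(1+f)^2]$ and using monotonicity of $M_T$, I would rewrite $\tilde{\mu}(\pi)>f$ as $(1+f)/\zeta_U<M_T(f)$, which simplifies exactly to $\zeta_U>(1+f)/M_T(f)=\underline{\zeta}(f,\pi)$. Thus for $\zeta_U>\underline{\zeta}(f,\pi)$ the trader's optimum sits in the region where the JIT LP possesses a finite best response, and the profile $(q_R,q_S)^\star=(\tilde{\mu}(\pi)\tilde{d}_P,0)$ with $\tilde{d}_J^\star$ is an equilibrium; uniqueness follows from the uniqueness of both the FOC root and the JIT maximizer, together with the direction argument.

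For the non-existence claim I would use $\underline{\zeta}(f,\pi)\ge\underline{\zeta}(f,1)=2(1+f)^2/(2+f)$, so $\zeta_U\le2(1+f)^2/(2+f)$ forces $\tilde{\mu}(\pi)\le f$; then $M_T(\mu)<M_T(f)\le(1+f)/\zeta_U$ for all $\mu>f$, making the trader's reduced objective strictly decreasing on $q_R>f\tilde{d}_P$ with no interior optimum, while for $q_R\le f\tilde{d}_P$ Lemma A.7 leaves the JIT LP with an unbounded deposit and no maximizer. I expect the main obstacle to be precisely this boundary behavior at $q_R=f\tilde{d}_P$: one must argue that the JIT LP's lack of a finite best response below the threshold genuinely rules out equilibrium, and that the trader cannot profit by shrinking $q_R$ to induce an arbitrarily deep pool (zero price impact) rather than settling in the interior. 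Establishing the strict monotonicity of the $\pi$-weighted term in $M_T$, which simultaneously yields uniqueness and the exact threshold $\underline{\zeta}(f,\pi)$, is the other calculation that needs care.
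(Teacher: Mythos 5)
Your proposal follows essentially the same route as the paper's proof: invoke Lemma A.7 for the JIT LP's finite best response when $q_R>f\tilde{d}_P$, substitute it into the trader's objective and normalize via Lemma A.6, use monotonicity of $M_T$ with $M_T(0)\geq 1$ and $M_T(\mu)\to 0$ to obtain the unique root $\tilde{\mu}(\pi)$, and evaluate $M_T(f)$ to translate the feasibility condition $\tilde{\mu}(\pi)>f$ into exactly the threshold $\zeta_U>\underline{\zeta}(f,\pi)$. Your additional steps --- explicitly ruling out $q_S>0$ by the Lemma A.1 argument, and spelling out why no non-trivial equilibrium survives when $\zeta_U\leq 2(1+f)^2/(2+f)$ (trader's reduced objective strictly decreasing above the boundary, JIT LP lacking a maximizer below it) --- are careful elaborations of points the paper treats tersely, not a different approach.
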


\begin{lemma}
If $\tilde{d}_{J}^\star$ is a stationary point for the JIT LP's utility function given $\tilde{d}_{P}\in\R_+$ and trader strategy $(q_R,q_S)$ when $\omega=\mathsf{UB}$, then $\tilde{d}_{J}^\star/c$ is a stationary point for the JIT LP's utility function given $\tilde{d}_{P}/c$ and $(q_R,q_S)/c$ when $\omega=\mathsf{UB}$ for any $c>0$.
\end{lemma}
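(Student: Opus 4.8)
The plan is to reproduce the argument of Lemma A.5 with the roles of $\delta_S$ and $q_R$ replaced by $\delta_R$ and $q_S$; the statement and its proof are the exact mirror image, reflecting the buy/sell symmetry of the uninformed scenarios. The only structural facts needed are that $\delta_R$ is jointly $1$-homogeneous in its two arguments and that its partial derivative in the pool-depth argument is $0$-homogeneous. Both follow immediately from the closed form $\delta_R(s,d)=\tilde{d}s/(p\tilde{d}+s)$: scaling $(s,\tilde{d})\mapsto(cs,c\tilde{d})$ multiplies numerator and denominator by $c^2$ and $c$ respectively, giving $\delta_R(cs,c\tilde{d})=c\,\delta_R(s,\tilde{d})$, and differentiating this identity in the second slot shows the corresponding partial derivative is scale-invariant.

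First I would record the JIT LP's utility in the $\mathsf{UB}$ scenario,
\begin{align*}
    u_{J}(\tilde{d}_{J};\sigma_{-J},\mathsf{UB}) = \frac{\tilde{d}_{J}}{\tilde{d}_{P}+\tilde{d}_{J}}\bigl((1+f)q_S-p\cdot\delta_R(q_S,\tilde{d}_{P}+\tilde{d}_{J})\bigr),
\end{align*}
where I suppress the arguments of $q_S=q_S(\tilde{d}_P;\mathsf{UB})$. Differentiating in $\tilde{d}_J$, which enters only through the pool depth $\tilde{d}_P+\tilde{d}_J$, yields the stationarity condition
\begin{align*}
    \frac{\tilde{d}_{P}}{(\tilde{d}_{P}+\tilde{d}_{J})^{2}}\bigl((1+f)q_S-p\cdot\delta_R(q_S,\tilde{d}_{P}+\tilde{d}_{J})\bigr)-\frac{\tilde{d}_{J}}{\tilde{d}_{P}+\tilde{d}_{J}}\cdot p\cdot\frac{\d\delta_R}{\d q}(q_S,\tilde{d}_{P}+\tilde{d}_{J})=0,
\end{align*}
which is the precise analogue of the $\mathsf{US}$ first-order condition appearing in the proof of Lemma A.5.

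Next I would factor out $\tilde{d}_P+\tilde{d}_J$ using the homogeneity properties above, rewriting the condition purely in terms of the ratio $q_S/(\tilde{d}_P+\tilde{d}_J)$:
\begin{align*}
    \frac{\tilde{d}_{P}}{\tilde{d}_{P}+\tilde{d}_{J}}\left[(1+f)\frac{q_S}{\tilde{d}_{P}+\tilde{d}_{J}}-p\cdot\delta_R\!\left(\frac{q_S}{\tilde{d}_{P}+\tilde{d}_{J}},1\right)\right]-\frac{\tilde{d}_{J}}{\tilde{d}_{P}+\tilde{d}_{J}}\cdot p\cdot\frac{\d\delta_R}{\d q}\!\left(\frac{q_S}{\tilde{d}_{P}+\tilde{d}_{J}},1\right)=0.
\end{align*}
Every quantity appearing here — the weights $\tilde{d}_P/(\tilde{d}_P+\tilde{d}_J)$ and $\tilde{d}_J/(\tilde{d}_P+\tilde{d}_J)$, and the ratio $q_S/(\tilde{d}_P+\tilde{d}_J)$ — is invariant under the simultaneous replacement $(\tilde{d}_P,q_S,\tilde{d}_J)\mapsto(\tilde{d}_P/c,q_S/c,\tilde{d}_J/c)$. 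Hence if $\tilde{d}_J^\star$ solves the condition for the inputs $(\tilde{d}_P,(q_R,q_S))$, then $\tilde{d}_J^\star/c$ solves the correspondingly scaled condition for $(\tilde{d}_P/c,(q_R,q_S)/c)$, which is exactly the claim.

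I do not expect a genuine obstacle: the argument is purely homogeneity bookkeeping, identical in form to Lemma A.5. The one point requiring a moment's care is confirming that \emph{both} arguments of $\delta_R$, namely the swap size $q_S$ and the pool depth $\tilde{d}_P+\tilde{d}_J$, scale by the common factor $1/c$ — it is this joint scaling, rather than scaling of a single variable, that makes the $1$-homogeneity of $\delta_R$ and the $0$-homogeneity of its derivative applicable. Since $p\cdot\delta_R$ is likewise $1$-homogeneous (the constant $p$ does not affect homogeneity), the factorization goes through verbatim.
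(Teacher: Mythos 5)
Your proposal is correct and matches the paper's intent exactly: the paper proves this lemma by declaring it ``symmetric to Lemma A.5,'' and what you have written is precisely that symmetric argument carried out explicitly --- the $\mathsf{UB}$ first-order condition, reduced via the joint $1$-homogeneity of $\delta_R$ and $0$-homogeneity of its depth-derivative to a condition in the scale-invariant ratios $\tilde{d}_P/(\tilde{d}_P+\tilde{d}_J)$, $\tilde{d}_J/(\tilde{d}_P+\tilde{d}_J)$, and $q_S/(\tilde{d}_P+\tilde{d}_J)$. The only cosmetic difference is that the paper suggests the reparameterization $\tilde{v}_P=p\tilde{d}_P$, $\tilde{v}_J=p\tilde{d}_J$ to make the buy case look literally identical to the sell case, whereas you work with the $\delta_R$ form directly; both routes are equivalent.
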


\begin{lemma}
If the JIT LP's utility function has a unique maximum for $\tilde{d}_{J}\in\R_+$ and $q_S^\star$ solves the uninformed trader's problem given $\tilde{d}_{P}$ when $\omega=\mathsf{UB}$, then $q_S^\star/c$ solves the trader's problem given $\tilde{d}_{P}/c$ when $\omega=\mathsf{UB}$ for any $c>0$.
\end{lemma}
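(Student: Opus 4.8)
The plan is to mirror the argument of Lemma A.6, with the sell-side quantities replaced by their buy-side analogs. First I would record the uninformed trader's problem in the $\mathsf{UB}$ scenario. Conditional on the high valuation $P(\mathsf{UB})=\zeta_U p$ and on swapping stable coins for risky coins, and accounting for the JIT LP's best response $\tilde{d}_J(\tilde{d}_P,(0,q_S);\mathsf{UB})$, the objective is
\begin{align*}
    \max_{q_S\in\R_+}\ \zeta_U p\left[(1-\pi)\cdot\delta_R(q_S,\tilde{d}_P)+\pi\cdot\delta_R(q_S,\tilde{d}_P+\tilde{d}_J(\tilde{d}_P,(0,q_S);\mathsf{UB}))\right]-(1+f)q_S,
\end{align*}
and the corresponding problem at initial depth $\tilde{d}_P/c$ is obtained by replacing each $\tilde{d}_P$ by $\tilde{d}_P/c$. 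I would then introduce the substitution $\tilde{q}_S=cq_S$ and show that the depth-$\tilde{d}_P/c$ problem reduces to $c^{-1}$ times the depth-$\tilde{d}_P$ objective in the variable $\tilde{q}_S$.

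The crux is to scale the endogenous JIT response $\tilde{d}_J(\cdot\,;\mathsf{UB})$ correctly. For this I would invoke Lemma A.9, the $\mathsf{UB}$ counterpart of Lemma A.5, applied with swap argument $(0,\tilde{q}_S)$: it states that whenever $\tilde{d}_J^\star$ is a stationary point of the JIT LP's utility at depth $\tilde{d}_P$ with trade $(0,\tilde{q}_S)$, then $\tilde{d}_J^\star/c$ is a stationary point at depth $\tilde{d}_P/c$ with trade $(0,\tilde{q}_S)/c=(0,q_S)$. The unique-maximum hypothesis in the lemma statement then promotes this stationary-point scaling to a best-response scaling, yielding
\begin{align*}
    \tilde{d}_J\!\left(\tilde{d}_P/c,(0,q_S);\mathsf{UB}\right)=c^{-1}\,\tilde{d}_J\!\left(\tilde{d}_P,(0,\tilde{q}_S);\mathsf{UB}\right).
\end{align*}

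With this identity the rest is bookkeeping driven by the $1$-homogeneity of $\delta_R$, which is immediate from $\delta_R(s,d)=\tilde{d}s/(p\tilde{d}+s)$ being jointly $1$-homogeneous in $(s,d)$. Each argument of the form $\tilde{d}_P/c+\tilde{d}_J(\tilde{d}_P/c,(0,q_S);\mathsf{UB})$ equals $c^{-1}$ times its depth-$\tilde{d}_P$ analog, so $\delta_R(q_S,\tilde{d}_P/c+\cdots)=c^{-1}\delta_R(\tilde{q}_S,\tilde{d}_P+\cdots)$; likewise $\delta_R(q_S,\tilde{d}_P/c)=c^{-1}\delta_R(\tilde{q}_S,\tilde{d}_P)$ and $(1+f)q_S=c^{-1}(1+f)\tilde{q}_S$, while the constant $\zeta_U p$ passes through untouched. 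Hence the depth-$\tilde{d}_P/c$ objective equals $c^{-1}$ times the depth-$\tilde{d}_P$ objective evaluated at $\tilde{q}_S$, so the two share the same maximizer in $\tilde{q}_S$; translating back through $\tilde{q}_S=cq_S$ shows the depth-$\tilde{d}_P/c$ problem is solved by $q_S^\star/c$.

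I do not anticipate a genuine obstacle, since the result is exactly symmetric to Lemma A.6 under exchanging buy for sell ($q_R\leftrightarrow q_S$, $\delta_S\leftrightarrow\delta_R$, $\zeta_U^{-1}\leftrightarrow\zeta_U$). The one point demanding care is the direction in which Lemma A.9 is applied to the composed map $q_S\mapsto\tilde{d}_J(\tilde{d}_P,(0,q_S);\mathsf{UB})$: one must feed it the rescaled trade $(0,\tilde{q}_S)$, not $(0,q_S)$, so that the stationary-point scaling is compatible with the substitution $\tilde{q}_S=cq_S$.
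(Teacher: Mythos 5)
Your proposal is correct and follows exactly the route the paper intends: the paper's own ``proof'' of this lemma is simply the remark that results A.9--A.12 are symmetric to A.5--A.8, and your argument is the faithful buy-side transcription of the proof of Lemma A.6 (substitution $\tilde{q}_S=cq_S$, the stationary-point scaling of Lemma A.9 promoted to a best-response scaling via the unique-maximum hypothesis, and $1$-homogeneity of $\delta_R$). Your closing caution about feeding Lemma A.9 the rescaled trade $(0,\tilde{q}_S)$ rather than $(0,q_S)$ is precisely the point that makes the substitution consistent, and it matches how the paper uses Lemma A.5 in the sell-side proof.
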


\begin{lemma}
Then JIT LP's utility function given $\tilde{d}_{P}\in[0,\tilde{e}_{P}]$ and a trader strategy $(q_R,q_S)$ when $\omega=\mathsf{UB}$ has a unique maximum for $\tilde{d}_{J}\in\R_+$ if and only if $q_S(\tilde{d}_{P};\mathsf{UB})>fp\tilde{d}_{P}$.
\end{lemma}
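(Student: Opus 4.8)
The plan is to mirror the argument of Lemma A.7, adapting it to the buy direction. First I would substitute the explicit form of $\delta_R$ into the $\mathsf{UB}$ branch of the JIT LP's utility, writing
\[
u_J(\tilde{d}_J;\sigma_{-J},\mathsf{UB}) = \frac{\tilde{d}_J}{\tilde{d}_P+\tilde{d}_J}\left((1+f)q_S-\frac{p(\tilde{d}_P+\tilde{d}_J)q_S}{p(\tilde{d}_P+\tilde{d}_J)+q_S}\right),
\]
where $q_S$ abbreviates $q_S(\tilde{d}_P;\mathsf{UB})$. Differentiating in $\tilde{d}_J$ and collecting terms over a common denominator, I expect the sign of $\partial u_J/\partial\tilde{d}_J$ to be governed, after factoring out the positive quantity $q_S$, by
\[
M_J(\tilde{d}_J) \equiv \frac{(1+f)\tilde{d}_P}{(\tilde{d}_P+\tilde{d}_J)^2}-\frac{p(p\tilde{d}_P+q_S)}{(p(\tilde{d}_P+\tilde{d}_J)+q_S)^2},
\]
the exact analogue of the factor $M_J$ from Lemma A.7, now carrying an extra price factor because buying uses $\delta_R$ rather than $\delta_S$.

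Next I would analyze the zeros of $M_J$. Writing $D=\tilde{d}_P+\tilde{d}_J$ and clearing denominators, the equation $M_J=0$ becomes a quadratic in $D$ with leading coefficient $p(fp\tilde{d}_P-q_S)$, linear coefficient $2(1+f)\tilde{d}_P pq_S>0$, and constant term $(1+f)\tilde{d}_Pq_S^2>0$. I would first record that $M_J(0)>0$ holds unconditionally, since $(1+f)/\tilde{d}_P>p/(p\tilde{d}_P+q_S)$; this shows $u_J$ is increasing at $\tilde{d}_J=0$, so any interior maximum must occur at a sign change of $M_J$. The threshold $q_S>fp\tilde{d}_P$ is then precisely the condition that the leading coefficient be negative.

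I would then split into cases as in Lemma A.7. When $q_S\leq fp\tilde{d}_P$, the leading coefficient is nonnegative while the linear and constant coefficients are strictly positive, so the quadratic has no nonnegative root; hence $M_J>0$ for all $\tilde{d}_J\in\R_+$ and $u_J$ is strictly increasing with no maximum. When $q_S>fp\tilde{d}_P$, the leading coefficient is negative, so the product of the two roots is negative and there is exactly one positive root $D^\star$; since $M_J(0)>0$ forces the quadratic to be positive at $D=\tilde{d}_P$, this root satisfies $D^\star>\tilde{d}_P$. As $M_J$ is continuous and changes sign once, from positive to negative, at $\tilde{d}_J^\star=D^\star-\tilde{d}_P$, the utility $u_J$ increases and then decreases, yielding the unique maximum.

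The main obstacle I anticipate is purely bookkeeping: correctly deriving $M_J$ and verifying that the threshold is $q_S>fp\tilde{d}_P$ rather than $q_S>f\tilde{d}_P$ as in the sell case. This asymmetry arises because the price $p$ enters $\delta_R$ differently than $\delta_S$, so I would double-check the leading-coefficient computation, which reduces to $(1+f)\tilde{d}_Pp^2-p(p\tilde{d}_P+q_S)=p(fp\tilde{d}_P-q_S)$; an error here would shift the threshold. Once the quadratic is in hand, the remaining sign arguments are routine, and the explicit formula for $\tilde{d}_J^\star$ obtained by solving the quadratic would be recorded in the corresponding proposition rather than in this lemma.
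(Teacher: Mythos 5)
Your proposal is correct and is essentially the paper's own argument: the paper proves this lemma by declaring it symmetric to Lemma A.7 (suggesting the change of variables $\tilde{v}_P = p\tilde{d}_P$, $\tilde{v}_J = p\tilde{d}_J$, under which your $M_J$ reduces exactly to the sell-side factor), and your explicit computation carries out precisely that symmetric argument, with the correct sign factor $M_J$, the correct leading coefficient $p(fp\tilde{d}_P - q_S)$, and the same two-case root analysis. The bookkeeping point you flag — that the threshold becomes $fp\tilde{d}_P$ because $p$ enters $\delta_R$ asymmetrically — is exactly what the paper's suggested reparameterization absorbs.
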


\begin{proposition}
Let $\tilde{d}_{P}\in[0,\tilde{e}_{P}]$. If
\begin{align*}
    \zeta_U > \underline{\zeta}(f,\pi)=\frac{2(1+f)^3}{2+\pi f(3+f)},
\end{align*}
then the unique non-trivial Nash equilibrium of the trader--JIT LP subgame when $\omega=\mathsf{UB}$ is given by
\begin{gather*}
    (q_R,q_S)^\star(\tilde{d}_{P};\mathsf{US}) = (0,\tilde{\mu}(\pi)\cdot p\tilde{d}_P) \\[0.25\baselineskip]
    \tilde{d}_{J}^\star(\tilde{d}_{P},(q_R,q_S);\mathsf{US}) = \begin{dcases}
        0 & \tilde{d}_{P}=0, \\
        \frac{fp\tilde{d}_{P}(p\tilde{d}_{P}+q_S)+q_S\sqrt{(1+f)p\tilde{d}_{P}(p\tilde{d}_{P}+q_S)}}{q_S-fp\tilde{d}_{P}} & \tilde{d}_{P} > 0.
    \end{dcases}
\end{gather*}
Otherwise, if $\zeta_U \leq 2(1+f)^2(2+f)^{-1} $, there does not exist a non-trivial Nash equilibrium.
\end{proposition}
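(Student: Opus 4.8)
The plan is to follow the proof of the $\omega=\mathsf{US}$ case (Proposition A.8) essentially verbatim, exploiting the buy/sell symmetry under which the price-adjusted quantity $p\tilde{d}_P$ plays the role of $\tilde{d}_P$ and $q_S$ plays the role of $q_R$. First I would dispose of the boundary case: when $\tilde{d}_P=0$, Assumption 1 selects the smallest swap and deposit, giving $(q_R,q_S)^\star=(0,0)$ and $\tilde{d}_J^\star=0$ as claimed. For $\tilde{d}_P>0$ I would invoke Lemma A.11 (the $\mathsf{UB}$-analogue establishing a unique interior maximizer) to pin down the JIT LP's best response: whenever the trader's order satisfies $q_S>fp\tilde{d}_P$, the JIT LP's utility admits the unique interior maximizer $\tilde{d}_J^\star=[fp\tilde{d}_P(p\tilde{d}_P+q_S)+q_S\sqrt{(1+f)p\tilde{d}_P(p\tilde{d}_P+q_S)}]/(q_S-fp\tilde{d}_P)$ recorded in the statement.

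Next I would substitute this best response into the uninformed trader's $\mathsf{UB}$ utility $u_T=\zeta_U p[(1-\pi)\delta_R(q_S,\tilde{d}_P)+\pi\delta_R(q_S,\tilde{d}_P+\tilde{d}_J^\star)]-(1+f)q_S$, using $\delta_R(s,\tilde{d})=\tilde{d}s/(p\tilde{d}+s)$, and differentiate in $q_S$. By the scaling Lemmas A.9 and A.10 (the $\mathsf{UB}$-analogues of Lemmas A.5 and A.6), the trader's problem is homogeneous with respect to the scale $p\tilde{d}_P$, so I may normalize $p\tilde{d}_P=1$ and write $q_S=\mu\cdot p\tilde{d}_P$. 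The decisive computation is to verify that, after this substitution, the first-order condition collapses onto exactly the same equation as in the $\mathsf{US}$ case, namely $M_T(\mu)=(1+f)/\zeta_U$ with $M_T(\mu)=(1-\pi)(1+\mu)^{-2}+\pi(2+\mu)\sqrt{(1+f)(1+\mu)}/(2(1+\mu)^2)$.

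Given this reduction, the remaining analysis is inherited wholesale from Proposition A.8. Since $M_T(0)>1$, $M_T$ is strictly decreasing, and $M_T(\mu)\to 0$ as $\mu\to\infty$, the equation $M_T(\mu)=(1+f)/\zeta_U$ has a unique root $\mu=\tilde{\mu}(\pi)$ whenever $\zeta_U>1+f$, yielding $q_S^\star=\tilde{\mu}(\pi)\cdot p\tilde{d}_P$. Evaluating at $\mu=f$ gives $M_T(f)\le(1+f)/\zeta_U\iff\zeta_U\le\underline{\zeta}(f,\pi)$, so that $\zeta_U>\underline{\zeta}(f,\pi)\iff\tilde{\mu}(\pi)>f\iff q_S^\star>fp\tilde{d}_P$, which is precisely the hypothesis needed for the interior maximizer of Lemma A.11 to exist. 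Conversely, if $q_S\le fp\tilde{d}_P$ (equivalently $\zeta_U\le\underline{\zeta}(f,\pi)$, and in particular $\zeta_U\le 2(1+f)^2(2+f)^{-1}=\underline{\zeta}(f,1)$), Lemma A.11 shows the JIT LP's optimal deposit is unbounded, so no non-trivial equilibrium exists.

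I expect the only genuine obstacle to be the bookkeeping of the price factor $p$. Unlike the $\mathsf{US}$ case, the $\mathsf{UB}$ swap trades stable coins for risky coins, so powers of $p$ enter both through the valuation $\zeta_U p$ and through the trading function $\delta_R$ in different positions; the content of the proof is confirming that, under the change of variable $q_S=\mu\,p\tilde{d}_P$, the $p$ from the valuation cancels the $1/p$ produced by differentiating $\delta_R$, so that the marginal condition is literally $M_T(\mu)=(1+f)/\zeta_U$. Once this cancellation is checked, existence, uniqueness, the explicit form of $\tilde{d}_J^\star$, and the threshold $\underline{\zeta}(f,\pi)$ all transfer directly from the $\mathsf{US}$ argument.
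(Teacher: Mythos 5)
Your proposal is correct and follows essentially the same route as the paper, which disposes of results A.9--A.12 by declaring them symmetric to A.5--A.8 under the parameterization $\tilde{v}_P = p\tilde{d}_P$, $\tilde{v}_J = p\tilde{d}_J$ --- precisely the change of variables whose $p$-factor bookkeeping you verify (the $\zeta_U p$ valuation cancelling the $p^{-1}$ from $\delta_R$, so the FOC collapses to $M_T(\mu) = (1+f)/\zeta_U$). If anything, your write-up makes explicit the cancellation the paper leaves unstated, so no gap remains.
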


\noindent The proofs of results A.9--A.12 are symmetric to those of results A.5--A.8. It is helpful to use the parameterizations $\tilde{v}_{P}=p\tilde{d}_{P}$ and $\tilde{v}_{J}=p\tilde{d}_{J}$.

It remains to find the passive LPs' best-responses when the remaining agents play the unique non-trivial Nash equilibrium of the trader--JIT LP subgame given $\tilde{d}_{P}$, the amount of passive liquidity provided, denoted $\sigma_{-P}^\star(\tilde{d}_{P})$. Substituting the explicit forms of $\delta_S$, $\delta_R$, and $\sigma_{-P}^\star(\tilde{d}_{P})$ into the expressions for $u_{P}(\tilde{d}_{P};\sigma_{-P}^\star(\tilde{d}_{P}))$ yields
\begin{align*}
    u_{P}(\tilde{d}_{P};\sigma_{-P}^\star) = \mathcal{U}(\pi) = (\alpha\mathcal{C}+(1-\alpha)\cdot\mathcal{R}(\pi))p\tilde{d}_{P}
\end{align*}
where $\mathcal{C}$ and $\mathcal{R}(\pi)$ are given by
\begin{gather*}
    \mathcal{C} = -\left[\psi\left(1-\frac{1+f}{\zeta}\right)^2+(1-\psi\left(\sqrt{\zeta}-\sqrt{1+f}\right)^2\right], \\[0.25\baselineskip]
    \mathcal{R}(\pi) = \left[(1-\pi)\left(\tilde{\mu}(\pi)+\frac{\tilde{\mu}(\pi)}{1+\tilde{\mu}(\pi)}\right)+\pi\cdot\frac{1}{1+\tilde{\nu}(\pi)}\left(\tilde{\mu}(\pi)+\frac{(1+\tilde{\nu}(\pi))\cdot\tilde{\mu}(\pi)}{1+\tilde{\nu}(\pi)+\tilde{\mu}(\pi)}\right)\right]f.
\end{gather*}
Let $\mathbf{\tilde{d}}_{P}\in[0,\tilde{e}_{P}]^N$ be a vector of price-adjusted risky coin deposit amounts for the passive LPs.
\begin{itemize}
    \item Suppose that $\mathcal{U}(\pi)<0$. If $\mathbf{\tilde{d}}_{P}\neq\mathbf{0}$, then there exists $i\in[N]$ such that $\mathbf{\tilde{d}}_{P}^{(i)}>0$, so passive LP $i$ has a profitable deviation to $0$. Thus $\mathbf{\tilde{d}}_{P}$ cannot be a strategy in the SPNE. However, if $\mathbf{\tilde{d}}_{P}=\mathbf{0}$, then note that no passive LP can profitably deviate under $\mathbf{0}$. Thus $\mathbf{0}$ is the unique SPNE strategy for the passive LPs here.
    \item Suppose that $\mathcal{U}(\pi)\geq0$. If $\mathbf{\tilde{d}}_{P}\neq \tilde{e}_{P}/N\cdot\mathbf{1}$, then there exists $i\in[N]$ such that $\mathbf{\tilde{d}}_{P}^{(i)}<\tilde{e}_{P}/N$, so passive LP $i$ has a profitable deviation to $\tilde{e}_{P}/N$. Thus $\mathbf{\tilde{d}}_{P}$ cannot be a strategy in the SPNE. However, if $\mathbf{\tilde{d}}_{P}=\tilde{e}_P/N\cdot\mathbf{1}$, then note that no passive LP can profitably deviate under $\tilde{e}_{P}/N\cdot\mathbf{1}$, it follows that $\tilde{e}_{P}/N\cdot\mathbf{1}$ is the unique SPNE strategy for the passive LPs in this case.
\end{itemize}

\subsection{Proof of Theorem \ref{thm:threshold}}

The passive LPs' per-unit fee revenue scaled by $f^{-1}$ in the complete absence of a JIT LP ($\pi=0$) is the total trading volume attracted by the DEX when an uninformed trader arrives when $\pi=0$, which is given by
\begin{gather*}
    V_0 \equiv \tilde{\mu}(0)+\frac{\tilde{\mu}(0)}{1+\tilde{\mu}(0)} = \left(\frac{\zeta_U}{1+f}\right)^{1/2}-\left(\frac{1+f}{\zeta_U}\right)^{1/2} = \frac{\mathcal{R}(0)}{f}.
\end{gather*}
The passive LPs' per-unit fee revenue scaled by $f^{-1}$ given a JIT LP arrival probability of $\pi$ is the total trading volume attracted by the AMM when an uninformed trader arrives given a JIT LP arrival probability of $\pi$, which can be expressed as a function of the equilibrium uninformed trade size $\mu\equiv\tilde{\mu}(\pi)$:
\begin{align*}
    V(\mu) &\equiv (1-\pi)\left(\mu+\frac{\mu}{1+\mu}\right)+\pi\cdot\frac{1}{1+\nu(\mu)}\left(\mu+\frac{(1+\nu(\mu))\mu}{1+\nu(\mu)+\mu}\right) \\[0.25\baselineskip]
    &= (1-\pi)\left(\mu+\frac{\mu}{1+\mu}\right)+\pi\left[\left(\frac{1+\mu}{1+f}\right)^{1/2}-\left(\frac{1+f}{1+\mu}\right)^{1/2}\right] = \frac{\mathcal{R}(\pi)}{f}.
\end{align*}
Here $\nu(\mu)$ is the JIT LP's normalized deposit size written as a function of the uninformed trader's normalized swap size:
\begin{align*}
    \nu(\mu) = \frac{f(1+\mu)+\mu\sqrt{(1+f)(1+\mu)}}{\mu-f}.
\end{align*}
Recall the following:
\begin{itemize}
    \item The uninformed trader's first-order condition is 
    \begin{align*}
        M_T(\mu;\pi)=(1-\pi)\cdot\frac{1}{(1+\mu)^2}+\pi\cdot\frac{(2+\mu)\sqrt{(1+f)(1+\mu)}}{2(1+\mu)^2} = \frac{1+f}{\zeta_U}
    \end{align*}
    \item By the product and chain rules, if $g,h:\R_+\to\R_+$ such that $g(x)$ is decreasing and $h(x)$ is increasing, then $(g\cdot h)(x)$ is increasing if and only if $\d\log h(x)/\d x \geq -\d\log g(x)/\d x$.
    \item By the definition of covariance, $X$ and $Y$ are positive random variables, then $\E[X/Y]\leq\E[X]/\E[Y]$ if and only if $\Cov[X/Y,Y]\geq0$.
\end{itemize}
We require the following lemma.
\begin{lemma}
Fix $f\in\R_+$ and $\pi\in[0,1]$. Then $M_T(\mu)\cdot(2+V(\mu)^2+V(\mu)\sqrt{4+V(\mu)^2})$ is increasing in $\mu$ on $\mu\in(f,\infty)$.
\end{lemma}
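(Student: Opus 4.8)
The plan is to apply the recalled product-rule criterion (the one phrased via logarithmic derivatives) with the decreasing factor $g(\mu)=M_T(\mu)$ and the increasing factor $h(\mu)=2+V(\mu)^2+V(\mu)\sqrt{4+V(\mu)^2}$. First I would confirm the monotonicity hypotheses: differentiating the explicit form of $M_T$ termwise gives
\[
M_T'(\mu)=-2(1-\pi)(1+\mu)^{-3}-\pi\,\tfrac{\sqrt{1+f}}{4}(4+\mu)(1+\mu)^{-5/2}<0,
\]
so $g$ is strictly decreasing, while $V$ is manifestly increasing (both the $\mu+\mu/(1+\mu)$ and the $\sqrt{(1+\mu)/(1+f)}-\sqrt{(1+f)/(1+\mu)}$ pieces increase), so $h$ is increasing. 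The target then reduces to the single inequality $\tfrac{d}{d\mu}\log h(\mu)\ge-\tfrac{d}{d\mu}\log M_T(\mu)$.

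The simplification that makes this tractable is that $h$ is a perfect square: $h=2\big(\tfrac12(V+\sqrt{4+V^2})\big)^2$, as one verifies by expanding. Writing $\Phi=\tfrac12(V+\sqrt{4+V^2})$ and using $\tfrac{d}{dV}\log(V+\sqrt{4+V^2})=(4+V^2)^{-1/2}$, I obtain the clean form $\tfrac{d}{d\mu}\log h=2V'(\mu)/\sqrt{4+V(\mu)^2}$. Substituting the explicit $V'(\mu)$ (differentiating $V$ term by term) and the $M_T'$ above, the goal becomes the inequality
\[
(\star)\qquad 2V'(\mu)M_T(\mu)+M_T'(\mu)\sqrt{4+V(\mu)^2}\ge0.
\]

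To prove $(\star)$ I would note that its two summands have definite sign, $2V'M_T>0$ and $M_T'\sqrt{4+V^2}<0$, so $(\star)$ is equivalent, after isolating the root and squaring, to the rational inequality $4V'(\mu)^2M_T(\mu)^2\ge M_T'(\mu)^2\,(4+V(\mu)^2)$. I would then clear the half-integer powers with the substitution $s=\sqrt{1+\mu}$, $c=\sqrt{1+f}$, so that $\mu=s^2-1$, the constraint $\mu>f$ reads $s>c\ge1$, and $\tfrac{d}{d\mu}=\tfrac{1}{2s}\tfrac{d}{ds}$. Under this change of variables $V$, $M_T$, $V'$ and $M_T'$ all become rational in $s$ (for fixed $c$ and $\pi$), and clearing denominators reduces the squared inequality to a polynomial statement in $s$ on $s>c$ with parameters $c\ge1$, $\pi\in[0,1]$, which I would settle by collecting terms and exhibiting nonnegative coefficients or an explicit factorization into manifestly nonnegative pieces.

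This final verification is where I expect the genuine difficulty: the squared inequality is a real two-variable $(s,c)$ polynomial whose nonnegativity is not evident and must be organized with care. Two structural facts make me optimistic it collapses. At $\pi=0$ one has the exact identity $M_T\,h(V)\equiv2$, equivalently $\sqrt{M_T}\,\Phi\equiv1$, so the $\pi=0$ contributions cancel identically and only the $\pi$-weighted corrections need to be signed. Moreover the convex-combination structure $V=(1-\pi)V_{(0)}+\pi V_{(1)}$ and $M_T=(1-\pi)M_{(0)}+\pi M_{(1)}$ invites treating the two regimes as the atoms of a Bernoulli law and applying the recalled covariance inequality to compare the resulting cross terms, which should let me sign the corrections without brute-forcing the entire polynomial.
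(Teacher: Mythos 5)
Your reduction is sound and coincides with the paper's opening move: passing to logarithmic derivatives and noting that $2+V^2+V\sqrt{4+V^2}=2\bigl(\tfrac{1}{2}(V+\sqrt{4+V^2})\bigr)^2$ (your perfect-square observation, which is in fact cleaner than the paper's unexplained simplification) correctly reduces the lemma to the inequality $2V'(\mu)M_T(\mu)+M_T'(\mu)\sqrt{4+V(\mu)^2}\ge 0$ on $(f,\infty)$; your formula for $M_T'$ and the sign analysis are also right. The genuine gap is that this inequality --- which is the entire content of the lemma --- is never established. Squaring and substituting $s=\sqrt{1+\mu}$, $c=\sqrt{1+f}$ does yield a rational, hence polynomial, inequality in $(s,c,\pi)$ on $s>c\ge1$, $\pi\in[0,1]$, but the promised verification ``by collecting terms and exhibiting nonnegative coefficients or an explicit factorization'' is a hope, not an argument: no certificate is exhibited, and there is no a priori reason one with manifestly nonnegative pieces exists over this constrained domain, especially since (as you note) the polynomial vanishes identically at $\pi=0$, so the inequality is tight there and the positivity must come entirely from delicately signed cross terms in $\pi$. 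You flag this yourself as ``the genuine difficulty,'' which is accurate: the proposal stops exactly where the work begins.

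For contrast, the paper closes precisely this step with the probabilistic device your final paragraph gestures at but does not execute. It puts masses $1-\pi,\pi$ on two atoms, lets $\Phi$ and $\tilde\Phi$ be the atom-wise summands of $M_T$ and $-M_T'$, and lets $\Psi$, $\tilde\Psi$ be the atom-wise values of $\sqrt{4+V_{(k)}^2}$ and $2V_{(k)}'$, where $V_{(0)},V_{(1)}$ are the $\pi=0$ and $\pi=1$ volumes; then $-M_T'/M_T=\E[\tilde\Phi]/\E[\Phi]$, while $2V'/\sqrt{4+V^2}\ge\E[\tilde\Psi]/\E[\Psi]$ by Jensen applied to the convex map $x\mapsto\sqrt{4+x^2}$. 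The conclusion follows from the chain $\E[\tilde\Phi]/\E[\Phi]\le\E[\tilde\Phi/\Phi]\le\E[\tilde\Psi/\Psi]\le\E[\tilde\Psi]/\E[\Psi]$, whose outer links are the covariance (Chebyshev) inequality after checking monotonicity of the ratios and of $\Phi$, $\Psi$ across the two atoms, and whose middle link is a pointwise ratio comparison that at the $\pi$-atom amounts to $2(2+\mu)\ge 4+\mu$. None of these comparisons, nor the Jensen step, appears in your proposal; until they (or a completed polynomial certificate) are supplied, the lemma remains unproven.
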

\begin{proof}
It suffices to show that
\begin{align*}
    \frac{\d\log(2+V(\mu)^2+V(\mu)\sqrt{4+V(\mu)^2})}{\d\mu} &\geq -\frac{\d\log M_T(\mu)}{\d\mu}.
\end{align*}
By the chain rule, these simplify to 
\begin{align*}
    \frac{2\cdot V'(\mu)}{\sqrt{4+V(\mu)^2}} &\geq \frac{-M_T'(\mu)}{M_T(\mu)}.
\end{align*}
Let $(\Omega,\F,\P)$ where $\Omega=\{\omega_0,\omega_1\}$, $\F=2^\Omega$, $\P(\omega_0)=1-\pi$, and $\P(\omega_1)=\pi$ be a probability space. Define random variables $\Phi,\tilde{\Phi}:\Omega\to\R$ such that
\begin{gather*}
    \Phi(\omega) = \begin{dcases}
        \frac{1}{(1+\mu)^2} & \omega=\omega_0 \\[0.25\baselineskip]
        \frac{(2+\mu)\sqrt{(1+f)(1+\mu)}}{2(1+\mu)^2} & \omega=\omega_1
    \end{dcases} \\[0.25\baselineskip]
    \tilde{\Phi}(\omega) = \begin{dcases}
        \frac{2}{(1+\mu)^3} & \omega=\omega_0 \\[0.25\baselineskip]
        \frac{(4+\mu)\sqrt{(1+f)(1+\mu)}}{4(1+\mu)^2} & \omega=\omega_1.
    \end{dcases}
\end{gather*}
Then we have
\begin{align*}
    \frac{\tilde{\Phi}(\omega)}{\Phi(\omega)} = \begin{dcases}
        \frac{2}{1+\mu} & \omega=\omega_0 \\[0.25\baselineskip]
        \frac{2}{(1+\mu)(2+\mu)} & \omega=\omega_1.
    \end{dcases}
\end{align*}
Note that $\frac{\tilde{\Phi}(\omega_0)}{\Phi(\omega_0)} \geq \frac{\tilde{\Phi}(\omega_1)}{\Phi(\omega_1)}$ while $\Phi(\omega_0) \leq \Phi(\omega_1)$ on $\mu\in(f,\infty)$, so it follows that 
\begin{align*}
    \Cov\left[\frac{\tilde{\Phi}}{\Phi},\Phi\right]&\leq0 \\[0.25\baselineskip]
    \frac{\E[\tilde{\Phi}]}{\E[\Phi]} &\leq \E\left[\frac{\tilde{\Phi}}{\Phi}\right].
\end{align*}
Define random variables $\Psi,\tilde{\Psi}:\Omega\to\R$ (on the same probability space) such that
\begin{gather*}
    \Psi(\omega) = \begin{dcases}
        \sqrt{4+\left(\mu+\frac{\mu}{1+\mu}\right)^2} = 1+\mu+\frac{1}{1+\mu} & \omega=\omega_0 \\[0.24\baselineskip]
        \sqrt{4+\left(\sqrt{\frac{1+\mu}{1+f}}-\sqrt{\frac{1+f}{1+\mu}}\right)^2} = \sqrt{\frac{1+\mu}{1+f}}+\sqrt{\frac{1+f}{1+\mu}} & \omega=\omega_1 
    \end{dcases} \\[0.24\baselineskip]
    \tilde{\Psi}(\omega) = \begin{dcases}
        2\left(1+\frac{1}{(1+\mu)^2}\right) & \omega=\omega_0 \\[0.24\baselineskip]
        \frac{1}{1+\mu}\left(\sqrt{\frac{1+\mu}{1+f}}+\sqrt{\frac{1+f}{1+\mu}}\right) & \omega=\omega_1
    \end{dcases}.
\end{gather*}
Then we have
\begin{align*}
    \frac{\tilde{\Psi}(\omega)}{\Psi(\omega)} = \begin{dcases}
        \frac{2}{1+\mu} & \omega=\omega_0 \\[0.24\baselineskip]
        \frac{1}{1+\mu} & \omega=\omega_1.
    \end{dcases}
\end{align*}
Note that $\frac{\tilde{\Psi}(\omega_0)}{\Psi(\omega_0)}\geq\frac{\Tilde{\Psi}(\omega_1)}{\tilde{\Psi}(\omega_1)}$ and $\Psi(\omega_0)\geq\Psi(\omega_1)$ on $\mu\in(f,\infty)$, so it follows that
\begin{align*}
    \Cov\left[\frac{\Tilde{\Psi}}{\Psi},\Psi\right] &\geq 0 \\[0.24\baselineskip]
    \frac{\E[\tilde{\Psi}]}{\E[\Psi]} &\geq \E\left[\frac{\tilde{\Psi}}{\Psi}\right].
\end{align*}
Observe that $\E[\tilde{\Phi}/\Phi] \leq \E[\tilde{\Psi}/\Psi]$. Chaining everything together yields
\begin{align*}
    \frac{2\cdot V'(\mu)}{\sqrt{4+V(\mu)^2}} \geq \frac{\E[\tilde{\Psi}]}{\E[\Psi]} &\geq \E\left[\frac{\tilde{\Psi}}{\Psi}\right] \geq \E\left[\frac{\tilde{\Phi}}{\Phi}\right] \geq \frac{\E[\tilde{\Phi}]}{\E[\Phi]} = \frac{-M_T'(\mu)}{M_T(\mu)}
\end{align*}
as desired, noting that $\E[\Psi]\geq\sqrt{4+V(\mu)^2}$ due to Jensen's inequality.
\end{proof}

Fix $\mu\in(f,\infty)$. Let $\zeta_U(\mu)$ be the private value shock size such that the uninformed trader's equilibrium swap size is $\mu$. The first-order condition implies that
\begin{align*}
    \zeta_U(\mu) = \frac{1+f}{M_T(\mu)},
\end{align*}
so $\zeta_U(\mu)$ is well-defined. Since $V_0$ is increasing in $\zeta_U$ and $\lim_{\zeta_U\to\infty}V_0=\infty$, there exists a unique value of $\zeta_U\in(1+f,\infty)$ such that $V(\mu)=V_0$ under $\zeta_U$; let us denote it $\bar{\zeta}_U(\mu)$. Then $V(\mu)\geq V_0$ if and only if $\zeta_U(\mu)\leq\bar{\zeta}_U(\mu)$. The expression for $V_0$ yields
\begin{gather*}
    \bar{\zeta}_U(\mu) = (1+f)\left(1+\frac{V(\mu)^2+V(\mu)\cdot\sqrt{4+V(\mu)^2}}{2}\right).
\end{gather*}
Note that
\begin{align*}
    2 &\leq M_T(\mu)\cdot\left(2+V^2(\mu)+V(\mu)\cdot\sqrt{4+V^2(\mu)}\right) \\[0.25\baselineskip]
    \frac{1}{M_T(\mu)} &\leq \frac{2+V^2(\mu)+V(\mu)\cdot\sqrt{4+V^2(\mu)}}{2} \\[0.25\baselineskip]
    \zeta_U(\mu) &\leq \bar{\zeta}_U(\mu).
\end{align*}
We now have two cases:
\begin{itemize}
    \item If $M_T(f)\cdot\left(2+V^2(f)+V(f)\cdot\sqrt{4+V^2(f)}\right)\geq 2$, then by Lemma A.13, we have $\zeta_U(\mu)\leq\bar{\zeta}_U(\mu)$, corresponding to the first case of the theorem.
    \item Note that $M_T(\mu)=O(\mu^{-1/2})$ and $V(\mu)=\Omega(\mu^{1/2})$, so it follows that
    \begin{align*}
        \lim_{\mu\to\infty}M_T(f)\cdot\left(2+V^2(f)+V(f)\cdot\sqrt{4+V^2(f)}\right)=\infty.
    \end{align*}
    If $M_T(f)\cdot\left(2+V^2(f)+V(f)\cdot\sqrt{4+V^2(f)}\right)< 2$, then by Lemma A.13 and the above asymptotic analysis, there exists a unique $\mu^\star\in(f,\infty)$ such that $\zeta_U(\mu^\star)=\bar{\zeta}_U(\mu^\star)$. 
    \begin{itemize}
        \item If $\mu<\mu^\star$, then $\zeta_U(\mu)$ $>\bar{\zeta}_U(\mu)$, so $V(\mu)<V_0$ under $\zeta_U=\zeta_U(\mu)$: for small trade sizes and thus small shock sizes, we have crowding out.
        \item If $\mu\geq\mu^\star$, then $\zeta_U(\mu)$ $\leq\bar{\zeta}_U(\mu)$, so $V(\mu)\geq V_0$ under $\zeta_U=\zeta_U(\mu)$: for large trade sizes and thus large shock sizes, we have complementing.
    \end{itemize}  
    This corresponds to the second case of the theorem. 
\end{itemize}
We now focus on the case when $\pi=1$ and derive an explicit threshold.

\begin{proposition}
Let $\pi=1$. Suppose that the trader and JIT LP's best-response functions have a unique maximum on $\R_+$. Let $(q_R^\star,q_S^\star)$ be the trader's strategy and $\tilde{d}_{J}^{BR}$ be the JIT LP's best-response function. Then the JIT LP crowds out the passive LPs at arrival probability 1 if and only if for $\omega\in\{\textsf{US},\textsf{UB}\}$, we have
\begin{align*}
    \frac{\d \tilde{d}_{J}^{BR}}{\d(q_R,q_S)}(\tilde{d}_{P},(q_R^\star,q_S^\star)(\tilde{d}_{P};\omega);\omega)<0.
\end{align*}
\begin{proof}
We prove the $\omega=\textsf{US}$ case; the proof for the $\omega=\textsf{UB}$ case is symmetric. The trader's problem is
\begin{align*}
    \max_{q_R\in\R_+} \ (1-\pi)(\delta_S(q_R,\tilde{d}_{P})-\zeta_U^{-1}p(1+f)q_R)+\pi(\delta_S(q_R,\tilde{d}_{P}+\tilde{d}_{J}^{BR}(\tilde{d}_{P},q_R;\textsf{US}))-\zeta_U^{-1}p(1+f)q_R).
\end{align*}
Let $q_R^\star(0)$ and $q_R^\star(1)$ be the optimal solutions for the trader's problem when $\pi=0$ and $\pi=1$, respectively, noting that these exist by assumption and under the class of pricing functions considered. Then $q_R^\star(0)$ and $q_R^\star(1)$ satisfy the first-order conditions (suppressing the arguments of $\tilde{d}_{J}^{BR}$ for brevity):
\begin{gather*}
    \frac{\d\delta_S}{\d r}(q_R^\star(0),\tilde{d}_{P}) = \frac{p(1+f)}{\zeta_U}, \\[0.25\baselineskip]
    \frac{\d\delta_S}{\d r}(q_R^\star(1),\tilde{d}_{P}+\tilde{d}_{J}^{BR})+\frac{\d \tilde{d}_{J}^{BR}}{\d q_R}(\tilde{d}_{P},q_R^\star(1);\textsf{US})\cdot\frac{\d\delta_S}{\d q}(q_R^\star(1),\tilde{d}_{P}+\tilde{d}_{J}^{BR}) = \frac{p(1+f)}{\zeta_U}.
\end{gather*}
If $\frac{\d \tilde{d}_{J}^{BR}}{\d q_R}(\tilde{d}_{P},q_R^\star(1);\textsf{US})<0$, then
\begin{align*}
    \frac{\d\delta_S}{\d r}(q_R^\star(1),\tilde{d}_{P}+\tilde{d}_{J}^{BR})>\frac{p(1+f)}{\zeta_U}
\end{align*}
since $\d\delta_S/\d q>0$. By the 1-homogeneity of $\delta_S$,
\begin{align*}
    \frac{\d\delta_S}{\d r}\left(\frac{q_R^\star(1)}{\tilde{d}_{P}+\tilde{d}_{J}^{BR}},1\right)>\frac{\d\delta_S}{\d r}\left(\frac{q_R^\star(0)}{\tilde{d}_{P}},1\right).
\end{align*}
Since $\delta_S$ is concave in $r$,
\begin{align*}
    \frac{q_R^\star(1)}{\tilde{d}_{P}+\tilde{d}_{J}^{BR}} < \frac{q_R^\star(0)}{\tilde{d}_{P}} &\implies q_R^\star(0) > \frac{\tilde{d}_{P}}{\tilde{d}_{P}+\tilde{d}_{J}^{BR}}\cdot q_R^\star(1) \\[0.25\baselineskip]
    &\implies \delta_S(q_R^\star(0),\tilde{d}_{P}) > \frac{\tilde{d}_{P}}{\tilde{d}_{P}+\tilde{d}_{J}^{BR}}\cdot\delta_S(q_R^\star(1),\tilde{d}_{P}+\tilde{d}_{J}^{BR}).
\end{align*}
Note that
\begin{gather*}
    \mathcal{R}(0)\cdot p\tilde{d}_{P} = (p\cdot q_R^\star(0)+\delta_S(q_R^\star(0),\tilde{d}_{P}))f \\[0.25\baselineskip]
    \mathcal{R}(1) \cdot p\tilde{d}_{P} = \frac{\tilde{d}_{P}}{\tilde{d}_{P}+\tilde{d}_{J}^{BR}}\left(p\cdot q_R^\star(1)+\delta_S(q_R^\star(1),\tilde{d}_{P}+\tilde{d}_{J}^{BR})\right)f.
\end{gather*}
The result follows.
\end{proof}
\end{proposition}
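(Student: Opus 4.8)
The plan is to reduce the crowding-out comparison $\mathcal{R}(0)>\mathcal{R}(1)$ to a one-dimensional statement about a \emph{normalized trade intensity}, and then connect that intensity to the sign of the JIT LP's best-response slope through the trader's first-order conditions. I would work out the $\omega=\mathsf{US}$ case in full; the $\omega=\mathsf{UB}$ case is symmetric under the parameterization $\tilde v_P=p\tilde d_P$, $\tilde v_J=p\tilde d_J$ already used elsewhere in the appendix.

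First I would write out the passive LPs' per-unit fee revenue in both regimes. Since the passive LPs hold a $100\%$ pool share when $\pi=0$ and a $\tilde d_P/(\tilde d_P+\tilde d_J^{BR})$ share when $\pi=1$, and collect fees on both the uninformed swap and the arbitrageur's reverse trade, this gives
\begin{gather*}
    \mathcal{R}(0)\,p\tilde d_P = \bigl(p\,q_R^\star(0)+\delta_S(q_R^\star(0),\tilde d_P)\bigr)f, \\
    \mathcal{R}(1)\,p\tilde d_P = \frac{\tilde d_P}{\tilde d_P+\tilde d_J^{BR}}\bigl(p\,q_R^\star(1)+\delta_S(q_R^\star(1),\tilde d_P+\tilde d_J^{BR})\bigr)f.
\end{gather*}
The crucial simplification is that the $1$-homogeneity of $\delta_S$ lets the share factor cancel the pool-depth scaling: writing $x_0=q_R^\star(0)/\tilde d_P$ and $x_1=q_R^\star(1)/(\tilde d_P+\tilde d_J^{BR})$, both right-hand sides collapse to $\tilde d_P\,g(x_\pi)\,f$ with $g(x)=px+\delta_S(x,1)$. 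Because $g$ is strictly increasing, the crowding-out condition $\mathcal{R}(0)>\mathcal{R}(1)$ is \emph{equivalent} to $x_0>x_1$, i.e.\ a comparison of normalized trade intensities alone.

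Next I would exploit the trader's first-order conditions. At $\pi=0$ the trader optimizes against a fixed passive depth, so $\frac{\d\delta_S}{\d r}(q_R^\star(0),\tilde d_P)=p(1+f)/\zeta_U$. At $\pi=1$ the trader internalizes the JIT LP's reaction, so by the chain rule the FOC carries the extra envelope term $\frac{\d\tilde d_J^{BR}}{\d q_R}\cdot\frac{\d\delta_S}{\d q}$; here the assumed uniqueness of the JIT LP's maximizer is exactly what makes $\tilde d_J^{BR}$ differentiable and licenses this step. Since $\frac{\d\delta_S}{\d q}>0$, the sign of $\frac{\d\tilde d_J^{BR}}{\d q_R}$ dictates whether the marginal proceeds $\frac{\d\delta_S}{\d r}$ at the \emph{full} pool exceed $p(1+f)/\zeta_U$. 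Then $0$-homogeneity of $\frac{\d\delta_S}{\d r}$ rewrites the two marginal-price conditions as comparisons of $\frac{\d\delta_S}{\d r}(\cdot,1)$ at $x_1$ and $x_0$, and strict concavity of $\delta_S$ in $r$ (so $\frac{\d\delta_S}{\d r}(\cdot,1)$ is strictly decreasing) converts this into $x_1<x_0$. Chaining the equivalences yields $\frac{\d\tilde d_J^{BR}}{\d q_R}<0 \iff x_1<x_0 \iff \mathcal{R}(0)>\mathcal{R}(1)$, which is the claim.

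The main obstacle I anticipate is the bookkeeping of the homogeneity reductions, so that the share weight $\tilde d_P/(\tilde d_P+\tilde d_J^{BR})$ cancels cleanly and the comparison genuinely collapses to one dimension; and, for the ``only if'' direction, verifying that every link is a true equivalence (strict monotonicity of both $g$ and $\frac{\d\delta_S}{\d r}(\cdot,1)$ together with $\frac{\d\delta_S}{\d q}>0$) rather than a one-way implication. The economic content --- that crowding out is precisely the regime in which a larger uninformed order \emph{shrinks} the JIT LP's best-response deposit, so that the induced extra volume cannot keep pace with the lost pool share --- then falls out automatically once these monotonicity facts are in hand.
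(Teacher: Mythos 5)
Your proposal is correct and follows essentially the same route as the paper's proof: the trader's first-order conditions with the envelope term $\frac{\d \tilde{d}_{J}^{BR}}{\d q_R}\cdot\frac{\d\delta_S}{\d q}$, then $0$-homogeneity of $\frac{\d\delta_S}{\d r}$ plus concavity of $\delta_S$ to compare normalized trade intensities, then the two revenue formulas. Your only refinement is packaging the final comparison as a single strictly increasing function $g(x)=px+\delta_S(x,1)$ of the normalized intensity, which makes the two-way equivalence explicit where the paper states the two inequalities separately and leaves the converse direction implicit.
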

We complete the proof for the $\omega=\textsf{US}$ case; the proof for the $\omega=\textsf{UB}$ case is symmetric. By Lemmas A.5 and A.6, it follows that $\d \tilde{d}_{J}^{BR}/\d q_R<0$ if and only if $\d\nu/\d\mu<0$ where $\nu(\mu)$ is given by
\begin{align*}
    \nu(\mu) = \frac{f(1+\mu)+\mu\sqrt{(1+f)(1+\mu)}}{\mu-f}
\end{align*}
and $\mu$ uniquely satisfies $M_T(\mu)=(1+f)/\zeta_U$. Note that
\begin{align*}
    \frac{\d\nu}{\d\mu} = \frac{(1+f)(\mu^2-2\mu f-2f(1+\sqrt{(1+f)(1+\mu)}))}{2(\mu-f)^2\sqrt{(1+f)(1+\mu)}}
\end{align*}
so the sign of $\d\nu/\d\mu$ depends on $\mu^2-2\mu f-2f(1+\sqrt{(1+f)(1+\mu)})$. Solving the inequality over $\mu>0$ yields a threshold $\bar{\mu}$. The value of $\zeta_U$ that yields a an equilibrium trade size of $\bar{\mu}$ when $\pi=1$ is $\sqrt{\zeta_U}=\sqrt{f}+\sqrt{1+f}$. The result follows by Proposition A.14.

\subsection{Two-Tiered Fee Structure: Explicit Forms of Utilities}

The JIT LP's utility is now
\begin{align*}
    u_J(\tilde{d}_J;\sigma_{-J},\omega) = \begin{dcases}
        \frac{\tilde{d}_J}{\tilde{d}_P+\tilde{d}_J}\left(\frac{p(1+\lambda f)}{\zeta}\cdot q_R(\tilde{d}_P;\omega)-\delta_S(q_R(\tilde{d};\omega),\tilde{d}_P+\tilde{d}_J)\right) & \omega=\mathsf{IS} \\[0.25\baselineskip]
        \frac{\tilde{d}_J}{\tilde{d}_P+\tilde{d}_J}((1+\lambda f)\cdot q_S(\tilde{d}_P;\omega)-\zeta p\cdot\delta_R(q_S(\tilde{d}_P;\omega),\tilde{d}_P+\tilde{d}_J)) & \omega=\mathsf{IB} \\[0.25\baselineskip]
        \frac{\tilde{d}_J}{\tilde{d}_P+\tilde{d}_J}(p(1+\lambda f)\cdot q_R(\tilde{d}_P;\omega)-\delta_S(q_R(\tilde{d};\omega),\tilde{d}_P+\tilde{d}_J)) & \omega=\mathsf{US} \\[0.25\baselineskip]
        \frac{\tilde{d}_J}{\tilde{d}_P+\tilde{d}_J}((1+\lambda f)\cdot q_S(\tilde{d}_P;\omega)-p\cdot\delta_R(q_S(\tilde{d}_P;\omega),\tilde{d}_P+\tilde{d}_J)) & \omega=\mathsf{UB}
    \end{dcases}
\end{align*}
The passive LPs' total utility is now (suppressing arguments for the other agents' strategies)
\begin{gather*}
    u_P(\tilde{d}_P;\sigma_{-P},\omega) \propto \begin{dcases}
        \frac{\pi (\tilde{d}_P+(1-\lambda)\tilde{d}_J)}{\tilde{d}_P+\tilde{d}_J}\left(\frac{p(1+f)}{\zeta}q_R-\delta_S(q_R,\tilde{d}_P+\tilde{d}_J)\right) \\
        \hspace{5cm}+(1-\pi )\left(\frac{p(1+f)}{\zeta}q_R-\delta_S(q_R,\tilde{d}_P)\right) & \omega=\mathsf{IS} \\[0.25\baselineskip]
        \frac{\pi (\tilde{d}_P+(1-\lambda)\tilde{d}_J)}{\tilde{d}_P+\tilde{d}_J}((1+f)q_S-\zeta p\cdot\delta_R(q_S,\tilde{d}_P+\tilde{d}_J)) \\
        \hspace{5cm}+(1-\pi )((1+f)q_S-\zeta p\cdot\delta_R(q_S,\tilde{d}_P)) & \omega=\mathsf{IB} \\[0.25\baselineskip]
        \left[\frac{\pi (\tilde{d}_P+(1-\lambda)\tilde{d}_J)}{\tilde{d}_P+\tilde{d}_J}(pq_R+\delta_S(q_R,\tilde{d}_P+\tilde{d}_J))+(1-\pi )(pq_R+\delta_S(q_R,\tilde{d}_P))\right]f & \omega=\mathsf{US} \\[0.25\baselineskip]
        \left[\frac{\pi (\tilde{d}_P+(1-\lambda)\tilde{d}_J)}{\tilde{d}_P+\tilde{d}_J}(q_S+p\cdot\delta_R(q_S,\tilde{d}_P+\tilde{d}_J))+(1-\pi )(q_S+p\cdot \delta_R(q_S,\tilde{d}_P))\right]f & \omega=\mathsf{UB}
    \end{dcases}
\end{gather*}
The informed and uninformed traders' utilities have the same form as before, but we keep in mind that they are now anticipating the JIT LP's best-response when the JIT LP only keeps $\lambda$ of its pro-rata share of fees, so they anticipate a dampened response.

\subsection{Proof of Proposition \ref{thm:twotier-eq}}

It suffices to show the following: let $\tilde{d}_{P}\in[0,\tilde{e}_{P}]$ and  $\underline{\zeta}(\pi )$ be given by
\begin{align*}
    \underline{\zeta}(\pi ) = \frac{2(1+f)^3}{2+\pi[(2+f)\sqrt{(1+f)(1+\lambda f)}-2]}.
\end{align*}
Suppose that $\zeta_U>\underline{\zeta}(\pi )$. Then there exists a non-trivial Nash equilibrium in the subgame between the traders and JIT LP. Define the following:
\begin{gather*}
    \tilde{\mu}_I = \zeta^{1/2}(1+f)^{-1/2}-1, \\[0.25\baselineskip]
    \tilde{\mu}(\lambda,\pi ) = \arg\min_{\mu\in\R_+}\,\left|(1-\pi)\cdot\frac{1}{(1+\mu)^2}+\pi\cdot\frac{(2+\mu)\sqrt{(1+\lambda f)(1+\mu)}}{2(1+\mu)^2}-\frac{1+f}{\zeta_U}\right|\!, \\[0.25\baselineskip]
    \tilde{\nu}(\lambda,\pi ) = \frac{\lambda f(1+\tilde{\mu}(\pi))+\tilde{\mu}(\pi)\sqrt{(1+\lambda f)(1+\tilde{\mu}(\pi))}}{\tilde{\mu}(\pi)-\lambda f}.
\end{gather*}
The equilibrium outcome is
\begin{gather*}
    (q_R,q_S)^\star(\tilde{d}_P;\omega) = \begin{dcases}
        (\tilde{\mu}_I\tilde{d}_P,0) & \omega=\mathsf{IS} \\
        (0,\tilde{\mu}_Ip\tilde{d}_P) & \omega=\mathsf{IB} \\
        (\tilde{\mu}(\lambda,\pi)\cdot \tilde{d}_P,0) & \omega=\mathsf{US} \\
        (0,\tilde{\mu}(\lambda,\pi)\cdot p\tilde{d}_P) & \omega=\mathsf{UB}
    \end{dcases}, \\[0.25\baselineskip]
    \tilde{d}_J^\star(\tilde{d}_P,(q_R,q_S)^\star(\tilde{d}_P;\omega);\omega) = \begin{dcases}
        0 & \omega\in\{\mathsf{IS},\mathsf{IB}\} \\
        \tilde{\nu}(\lambda,\pi)\cdot\tilde{d}_P & \omega\in\{\mathsf{US},\mathsf{UB}\}
    \end{dcases}.
\end{gather*}
The multiples $\mu_I$, $\mu(\lambda,\pi)$, and $\nu(\lambda,\pi)$ are given by scaling $\tilde{\mu}$, $\tilde{\mu}(\lambda,\pi)$, and $\tilde{\nu}(\lambda,\pi)$ by $p^{1/2}$, respectively. It remains to find the passive LPs' best-responses when the remaining agents play the unique non-trivial Nash equilibrium of the trader--JIT LP subgame given $\tilde{d}_{P}$ provided by the passive LPs, denoted $\sigma_{-P}^\star(\tilde{d}_{P})$. Substituting the explicit forms of $\delta_S$, $\delta_R$, and $\sigma_{-P}^\star(\tilde{d}_{P})$ into the expressions for $u_{P}(\tilde{d}_{P};\sigma_{-P}^\star(\tilde{d}_{P}))$ yields
\begin{align*}
    u_{P}(\tilde{d}_{P};\sigma_{-P}^\star) = \mathcal{U}(\lambda,\pi) = (\alpha\mathcal{C}+(1-\alpha)\cdot\mathcal{R}(\lambda,\pi))p\tilde{d}_{P}
\end{align*}
where $\mathcal{C}$ and $\mathcal{R}(\lambda,\pi)$ are given by
\begin{gather*}
    \mathcal{C} = -\left[\psi\left(1-\frac{1+f}{\zeta}\right)^2+(1-\psi\left(\sqrt{\zeta}-\sqrt{1+f}\right)^2\right], \\[0.25\baselineskip]
    \mathcal{R}(\lambda,\pi) = \left[(1-\pi)\left(\tilde{\mu}(\lambda,\pi)+\frac{\tilde{\mu}(\lambda,\pi)}{1+\tilde{\mu}(\lambda,\pi)}\right)+\pi\cdot\frac{1+(1-\lambda)\cdot\tilde{\nu}(\lambda,\pi)}{1+\tilde{\nu}(\lambda,\pi)}\left(\tilde{\mu}(\lambda,\pi)+\frac{(1+\tilde{\nu}(\lambda,\pi))\cdot\tilde{\mu}(\lambda,\pi)}{1+\tilde{\nu}(\lambda,\pi)+\tilde{\mu}(\lambda,\pi)}\right)\right]f.
\end{gather*}
If $\mathcal{U}(\lambda,\pi)<0$, then each passive LP's best-response is to deposit nothing. Otherwise, if $\mathcal{U}(\lambda,\pi)>0$, then each passive LP's best-response is to deposit their endowment.

We show both of these results with an argument similar to the proofs of Propositions \ref{thm:subgame-eq} and \ref{thm:passive-br}, but using the utilities delineated in the previous subsection.

\subsection{Proof of Proposition \ref{thm:dampen}}

We show the proof for the $\omega=\mathsf{US}$ case; the proof for the $\omega=\mathsf{UB}$ case is symmetric. We will first require the following lemma: 

\begin{lemma}
Fix $\pi\in(0,1]$. Let $\mu (\lambda)\equiv\tilde{\mu}(\lambda,\pi)$ be the uninformed trader's propensity to swap and $\nu(\lambda)\equiv\tilde{\nu}(\lambda,\pi)$ be the JIT LP's propensity to deposit under a transfer rate of $\lambda$. If $\d\mu /\d\lambda>0$ and 
\begin{align*}
    p\left(\frac{\mu (\lambda_0)}{1+\nu(\lambda_0)}\right)^2f > \frac{\d\mu}{\d\lambda}(\lambda_0)\cdot\frac{\d\delta_S}{\d q}\left(\frac{\mu (\lambda_0)}{1+\nu(\lambda_0)},1\right)
\end{align*}
for all $\lambda_0\in[0,1]$, then $\d\nu/\d\lambda>0$ for all $\lambda\in[0,1]$.
\end{lemma}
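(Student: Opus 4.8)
The plan is to treat the JIT LP's propensity to deposit as its best response to the uninformed trader's propensity to swap, writing $\nu=\nu(\mu,\lambda)$ for the explicit function given in Proposition~\ref{thm:twotier-eq} (equivalently, the normalized interior first-order condition $\frac{1+\lambda f}{(1+\nu)^2}=\frac{1+\mu}{(1+\nu+\mu)^2}$ that characterizes the JIT LP's optimum, the two-tiered analogue of Lemma A.7). Since $\mu=\mu(\lambda)$ in equilibrium, the chain rule gives
\begin{align*}
    \frac{d\nu}{d\lambda}=\frac{\partial\nu}{\partial\lambda}+\frac{\partial\nu}{\partial\mu}\,\frac{d\mu}{d\lambda},
\end{align*}
and the task is to sign this. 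The first term is the \emph{direct} effect (retaining a larger fee share $\lambda$ makes the JIT LP deposit more), while the second is the \emph{indirect} effect transmitted through the trader's revised demand.

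To make all three partials tractable I would pass to the variables $s=\sqrt{1+\lambda f}$ and $t=\sqrt{1+\mu}$. Taking square roots in the first-order condition collapses it to the linear relation $(t-s)(1+\nu)=s\mu$, so $\nu=\frac{t(st-1)}{t-s}$. Differentiating this closed form yields $\frac{\partial\nu}{\partial s}=\frac{t\mu}{(t-s)^2}>0$ and $\frac{\partial\nu}{\partial t}=\frac{s(t^2-2st+1)}{(t-s)^2}$; combining with $\frac{ds}{d\lambda}=\frac{f}{2s}$ and $\frac{dt}{d\lambda}=\frac{\mu'(\lambda)}{2t}$ gives
\begin{align*}
    \frac{d\nu}{d\lambda}=\frac{1}{2(t-s)^2}\left[\frac{t\mu f}{s}+\frac{s(t^2-2st+1)}{t}\,\mu'(\lambda)\right].
\end{align*}
Everything reduces to signing the bracket. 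The direct piece $t\mu f/s$ is strictly positive, and the coefficient of the indirect piece obeys the identity $t^2-2st+1=(t-s)^2-\lambda f$, so the indirect effect is non-negative precisely when $(t-s)^2\ge\lambda f$. In that regime both pieces are non-negative (using the hypothesis $\mu'(\lambda)>0$) and the direct piece is strictly positive, so $d\nu/d\lambda>0$ with no further input.

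The one real obstacle is the complementary regime $(t-s)^2<\lambda f$, where the indirect effect is negative and I must show the direct effect dominates; this is exactly where the second hypothesis is needed. Using $\frac{\mu}{1+\nu}=\frac{t-s}{s}$ and $\frac{\partial\delta_S}{\partial q}\!\left(\frac{t-s}{s},1\right)=\frac{ps^2}{t^2}$, the assumed inequality becomes the clean bound $\mu'(\lambda)<\frac{(t-s)^2ft^2}{s^4}$. Substituting this into the bracket and clearing positive factors reduces $d\nu/d\lambda>0$ to the algebraic inequality $s^2(t^2-1)\ge(2st-1-t^2)(t-s)^2$. I would finish by setting $w=(t-s)^2$ and invoking $2st-1-t^2=(s^2-1)-w$: the right-hand side is then $[(s^2-1)-w]\,w$, a concave quadratic in $w$ bounded above by $(s^2-1)^2/4$, while the left-hand side is bounded below by $s^2(s^2-1)$ (since $t>s$ forces $t^2-1>s^2-1>0$ in this regime), and $s^2(s^2-1)\ge(s^2-1)^2/4$ holds because $4s^2\ge s^2-1$. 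This settles the hard case and completes the proof.
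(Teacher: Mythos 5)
Your overall strategy is genuinely different from the paper's and, structurally, it is sound. The paper never differentiates a closed form: it argues by revealed preference, showing that when the transfer rate rises from $\lambda_0$ to $\lambda_1$ the JIT LP's marginal utility is still positive at the ``ratio-preserving'' deposit $\bar{\nu}=\frac{\mu(\lambda_1)}{\mu(\lambda_0)}(1+\nu(\lambda_0))-1>\nu(\lambda_0)$, and then invokes quasiconcavity of $u_J$ to conclude $\nu(\lambda_1)>\bar{\nu}$. You instead linearize the JIT LP's first-order condition through $s=\sqrt{1+\lambda f}$, $t=\sqrt{1+\mu}$, obtain $\nu=t(st-1)/(t-s)$, and sign $d\nu/d\lambda$ by a chain-rule decomposition; the identity $t^2-2st+1=(t-s)^2-\lambda f$ cleanly isolates the only problematic regime. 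I checked the closed form, both partials, and the hard-case $w$-quadratic estimate; that algebra is correct.

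The genuine problem is your identification of $\frac{\partial\delta_S}{\partial q}$ in the hypothesis. Throughout the appendix this symbol denotes the derivative of $\delta_S$ with respect to its \emph{second} (pool-depth) argument, not the swap quantity: it is the term multiplying $\tilde{d}_J/(\tilde{d}_P+\tilde{d}_J)$ in the JIT LP's first-order condition in Lemma A.5 (it comes from differentiating $\delta_S(q_R,\tilde{d}_P+\tilde{d}_J)$ in $\tilde{d}_J$), it is used alongside the distinct symbol $\partial\delta_S/\partial r$ in Proposition A.14's chain rule, and the proof of Proposition \ref{thm:dampen} states explicitly that $\frac{\partial\delta_S}{\partial q}\left(\frac{\mu}{1+\nu},1\right)=p\cdot\frac{\mu^2}{(1+\mu+\nu)^2}$, which in your variables is $p(t-s)^2/t^2$, not $ps^2/t^2$. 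Consequently the lemma's hypothesis actually reads $\mu'(\lambda)<\frac{t^2}{s^2}f$, whereas your reading gives $\mu'(\lambda)<\frac{(t-s)^2t^2}{s^4}f$. In the hard regime $(t-s)^2<\lambda f$ one has $t-s<s$, so your version of the premise is strictly stronger there: you have proved the implication from a more demanding assumption, which does not establish the lemma as stated (a parameter configuration can satisfy the paper's bound on $\mu'$ while violating yours, and your argument is then silent). Fortunately the repair fits entirely inside your framework and in fact shortens it: with the correct bound $\mu'<\frac{t^2}{s^2}f$, the inequality you need in the hard case, $\frac{t\mu f}{s}\geq\frac{s(2st-1-t^2)}{t}\,\mu'$, reduces to $t^2-1\geq 2st-1-t^2$, i.e.\! $t\geq s$, which always holds; the concave-quadratic bound becomes unnecessary.
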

\begin{proof}
By assumption,
\begin{align*}
    p\left(\frac{\mu (\lambda_0)}{1+\nu(\lambda_0)}\right)^2f > \left(\lim_{\lambda_1\to\lambda_0^+}\frac{\mu (\lambda_1)-\mu (\lambda_0)}{\lambda_1-\lambda_0}\right)\cdot\frac{\d\delta_S}{\d q}\left(\frac{\mu (\lambda_0)}{1+\nu(\lambda_0)},1\right).
\end{align*}
By properties of limits and rearranging,
\begin{align*}
    \lim_{\lambda_1\to\lambda_0^+}\frac{1}{1+\nu(\lambda_0)}\cdot p(\lambda_1-\lambda_0)f\cdot\frac{\mu (\lambda_0)}{1+\nu(\lambda_0)} &> \lim_{\lambda_1\to\lambda_0^+}\frac{\mu (\lambda_1)-\mu (\lambda_0)}{\mu (\lambda_0)}\cdot\frac{\d\delta_S}{\d q}\left(\frac{\mu (\lambda_0)}{1+\nu(\lambda_0)},1\right) \\[0.25\baselineskip]
    \lim_{\lambda_1\to\lambda_0^+}\frac{\mu (\lambda_0)}{\mu (\lambda_1)}\cdot\frac{1}{1+\nu(\lambda_0)}\cdot p(\lambda_1-\lambda_0)f\cdot\frac{\mu (\lambda_0)}{1+\nu(\lambda_0)} &> \lim_{\lambda_1\to\lambda_0^+}\left(1-\frac{\mu (\lambda_0)}{\mu (\lambda_1)}\right)\cdot\frac{\d\delta_S}{\d q}\left(\frac{\mu (\lambda_0)}{1+\nu(\lambda_0)},1\right).
\end{align*}
Multiplying by $\mu(\lambda_0)/\mu(\lambda_1)$ and substituting the expression for $\d\delta_S/\d q$ yields that the above inequality is equivalent to
\begin{align*}
    &\lim_{\lambda_1\to\lambda_0^+}\left(\frac{\mu (\lambda_0)}{\mu (\lambda_1)}\cdot\left(1-\frac{1}{1+\nu(\lambda_0)}\right)\cdot\frac{\d\delta_S}{\d q}\left(\frac{\mu (\lambda_0)}{1+\nu(\lambda_0)},1\right)+\frac{\mu (\lambda_0)}{\mu (\lambda_1)}\cdot\frac{1}{1+\nu(\lambda_0)}\cdot p(\lambda_1-\lambda_0)f\cdot\frac{\mu (\lambda_0)}{1+\nu(\lambda_0)}\right) \\[0.25\baselineskip]
    &\hspace{1cm}>\lim_{\lambda_1\to\lambda_0^+}\left(1-\frac{\mu (\lambda_0)}{\mu (\lambda_1)}\cdot\frac{1}{1+\nu(\lambda_0)}\right)\cdot\frac{\d\delta_S}{\d q}\left(\frac{\mu (\lambda_0)}{1+\nu(\lambda_0)},1\right). \tag{$\star$}
\end{align*}
By the JIT LP's first-order condition, the LHS of $(\star)$ satisfies
\begin{align*}
    &\lim_{\lambda_1\to\lambda_0^+}\left(\frac{\mu (\lambda_0)}{\mu (\lambda_1)}\cdot\left(1-\frac{1}{1+\nu(\lambda_0)}\right)\cdot\frac{\d\delta_S}{\d q}\left(\frac{\mu (\lambda_0)}{1+\nu(\lambda_0)},1\right)+\frac{\mu (\lambda_0)}{\mu (\lambda_1)}\cdot\frac{1}{1+\nu(\lambda_0)}\cdot p(\lambda_1-\lambda_0)f\cdot\frac{\mu (\lambda_0)}{1+\nu(\lambda_0)}\right) \\[0.25\baselineskip]
    &\hspace{1cm}=\lim_{\lambda_1\to\lambda_0^+}\frac{\mu (\lambda_0)}{\mu (\lambda_1)}\cdot\frac{1}{1+\nu(\lambda_0)}\left(p(1+\lambda_1f)\cdot\frac{\mu (\lambda_0)}{1+\nu(\lambda_0)}-\delta_S\left(\frac{\mu (\lambda_0)}{1+\nu(\lambda_0)}\right)\right).
\end{align*}
Note that this is the LHS of the JIT LP's first-order condition if the transfer rate increased from $\lambda_0$ to $\lambda_1$ and the trader chooses $\mu (\lambda_1)$ while the JIT LP chooses a propensity to deposit $\Bar{\nu}$ to keep the trade-to-liquidity ratio constant, i.e.\!
\newpage \noindent \begin{align*}
    \Bar{\nu} = \frac{\mu (\lambda_1)}{\mu (\lambda_0)}\cdot(1+\nu(\lambda_0))-1>\nu(\lambda_0).
\end{align*}
Note that the RHS of $(\star)$ is the RHS of the JIT LP's first-order condition if the transfer rate increased from $\lambda_0$ to $\lambda_1$ and the trader chooses $\mu (\lambda_1)$ while the JIT LP chooses $\Bar{\nu}$, so the LHS exceeds the RHS in the JIT LP's first-order condition at $\Bar{\nu}$. Since $u_{J}(\nu;\mu (\lambda_1),\lambda_1)$ is quasiconcave in $\nu$ and is increasing at $\nu=0$, it follows that $u_{J}$ must be increasing at $\nu=\Bar{\nu}$. Thus $\nu(\lambda_1)>\Bar{\nu}$, so the statement follows.
\end{proof}

The JIT LP's best-response function is
\begin{align*}
    \nu^{BR}(\mu(\lambda),\lambda) = \frac{\lambda f(1+\mu(\lambda) )+\mu(\lambda)\cdot\sqrt{(1+\lambda f)(1+\mu(\lambda))}}{\mu(\lambda)-\lambda f},
\end{align*}
and the uninformed trader's first-order condition is
\begin{align*}
    \frac{(1-\pi )}{(1+\mu)^2} + \frac{\pi (2+\mu)\sqrt{(1+\lambda f)(1+\mu)}}{2(1+\mu)^2} = \frac{1+f}{\zeta}.
\end{align*}
Implicitly differentiating $\mu$ w.r.t.\! $\lambda$ subject to the trader's first-order condition yields
\begin{align*}
    \frac{\d\mu}{\d\lambda} = \frac{\pi (1+\mu )^2(2+\mu )}{\pi (1+\lambda f)(1+\mu )(4+\mu)+8(1-\pi )\sqrt{(1+\lambda f)(1+\mu)}} \leq \frac{(1+\mu)(2+\mu )}{(1+\lambda f)(4+\mu)}\cdot f.
\end{align*}
since the above expression for $\d\mu/\d\lambda$ is increasing in $\pi$. Note that
\begin{gather*}
    \left(1+\frac{\mu (\lambda)}{1+\nu(\lambda)}\right)^2 =\frac{1+\mu (\lambda)}{1+\lambda f}, \\[0.25\baselineskip]
    \frac{\d\delta_S}{\d q}\left(\frac{\mu (\lambda)}{1+\nu(\lambda)},1\right) = p\cdot\frac{\mu (\lambda)^2}{(1+\mu (\lambda)+\nu(\lambda))^2},
\end{gather*}
so it follows that
\begin{align*}
    \left(\frac{\d\delta_S}{\d q}\left(\frac{\mu (\lambda_0)}{1+\nu(\lambda_0)},1\right)\right)^{-1}\left(p\left(\frac{\mu (\lambda_0)}{1+\nu(\lambda_0)}\right)^2f\right) &= \frac{1+\mu (\lambda)}{1+\lambda f}\cdot f \\[0.25\baselineskip]
    &> \frac{(1+\mu (\lambda))(2+\mu (\lambda))}{(1+\lambda f)(4+\mu (\lambda))}\cdot f = \frac{\d\mu }{\d\lambda}.
\end{align*}
Thus the assumptions for Lemma A.15 are satisfied, and the result follows.

\subsection{Proof of Theorem \ref{thm:preventfreeze}}

We require the following lemma:

\begin{lemma}
For $\mu,f\geq0$ and $\pi\in[0,1]$, if
\begin{align*}
    (1-\pi)\left(1+\mu+\frac{1}{1+\mu}\right)f+\pi(1+\mu+\sqrt{1+\mu})f +\pi \leq \pi(1+\mu)
\end{align*}
then the passive LPs' per-unit utility is decreasing in $\lambda$ for all $\lambda\in[0,1]$.
\end{lemma}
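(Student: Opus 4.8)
The plan is to reduce the claim to a one-variable monotonicity statement and then bound a chain-rule expansion term by term. Since $\mathcal{U}(\lambda,\pi)=\alpha\mathcal{C}+(1-\alpha)\mathcal{R}(\lambda,\pi)$ and $\mathcal{C}$ does not depend on $\lambda$, the passive LPs' per-unit utility decreases in $\lambda$ if and only if $\mathcal{R}(\lambda,\pi)$ does (the case $\alpha=1$ being trivial). Writing $\mu\equiv\tilde{\mu}(\lambda,\pi)$ and $\nu\equiv\tilde{\nu}(\lambda,\pi)$ and invoking the JIT LP's first-order identity $\left(1+\tfrac{\mu}{1+\nu}\right)^2=\tfrac{1+\mu}{1+\lambda f}$ established in the proof of Proposition \ref{thm:dampen}, I would eliminate $\nu$ and rewrite $\mathcal{R}=fH(\mu,\lambda)$ with $H=(1-\pi)H_0+\pi H_1$, where $H_0=\mu+\tfrac{\mu}{1+\mu}$ is the no-JIT trading volume and $H_1=(1-\lambda)\mu\left(1+\tfrac{1}{y}\right)+\lambda\left(y-\tfrac{1}{y}\right)$ with $y=\sqrt{(1+\mu)/(1+\lambda f)}$. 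The target is $\tfrac{dH}{d\lambda}\le 0$, and by the chain rule $\tfrac{dH}{d\lambda}=\partial_\lambda H+\partial_\mu H\cdot\tfrac{d\mu}{d\lambda}$, which separates the negative direct (share-reduction) effect from the positive indirect (volume) effect.

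Next I would control the indirect effect. One verifies $\partial_\mu H_0=1+\tfrac{1}{(1+\mu)^2}>0$ and $\partial_\mu H_1\ge 0$ uniformly in $\lambda\in[0,1]$, and Proposition \ref{thm:dampen} supplies $0<\tfrac{d\mu}{d\lambda}\le B:=\tfrac{f(1+\mu)(2+\mu)}{(1+\lambda f)(4+\mu)}$, since the expression for $\tfrac{d\mu}{d\lambda}$ there is increasing in $\pi$ and hence bounded by its value at $\pi=1$. Because $\partial_\mu H\ge 0$, replacing $\tfrac{d\mu}{d\lambda}$ by $B$ gives $\tfrac{dH}{d\lambda}\le\partial_\lambda H+\partial_\mu H\cdot B$, so it suffices to bound this right-hand side by zero, splitting it into three pieces and bounding each by a $\lambda$-free quantity using $1+\lambda f\ge 1$, $\tfrac{2+\mu}{4+\mu}<1$, $1-\lambda\le 1$, and $y\le\sqrt{1+\mu}$ (the value at $\lambda=0$).

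The three bounds are engineered to reproduce the three groups of terms in the hypothesis. First, the part of $\partial_\lambda H_1$ free of a factor $f$ equals $-\mu-\tfrac{\mu}{y}+y-\tfrac{1}{y}$, which is increasing in $y$ and, at $y=\sqrt{1+\mu}$, collapses algebraically to exactly $-\mu$; hence this piece is $\le-\pi\mu$. Second, using the identity $\left(1+\tfrac{1}{(1+\mu)^2}\right)(1+\mu)=1+\mu+\tfrac{1}{1+\mu}$ together with $\tfrac{2+\mu}{(1+\lambda f)(4+\mu)}\le 1$ gives $(1-\pi)\,\partial_\mu H_0\cdot B\le(1-\pi)\left(1+\mu+\tfrac{1}{1+\mu}\right)f$. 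Third, the remaining $f$-carrying terms, namely the $f$-part of $\pi\,\partial_\lambda H_1$ together with $\pi\,\partial_\mu H_1\cdot B$, are bounded by $\pi\left(1+\mu+\sqrt{1+\mu}\right)f$. Summing and rearranging then yields $\tfrac{dH}{d\lambda}\le 0$ precisely when $(1-\pi)\left(1+\mu+\tfrac{1}{1+\mu}\right)f+\pi\left(1+\mu+\sqrt{1+\mu}\right)f+\pi\le\pi(1+\mu)$, which is exactly the stated hypothesis.

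The main obstacle is the third bound. Unlike the first two, where the $\lambda$-dependence collapses cleanly (to exactly $-\mu$ and via an elementary identity), combining the $f$-part of $\partial_\lambda H_1$ with $\pi\,\partial_\mu H_1\cdot B$ requires carefully tracking the $y$- and $(1+\lambda f)$-dependence and selecting the worst-case bounds so that the $\sqrt{1+\mu}$ term emerges with the correct coefficient and the resulting condition is genuinely free of $\lambda$. I would also confirm along the way that $\partial_\mu H_1\ge 0$ throughout $\lambda\in[0,1]$, which legitimizes replacing $\tfrac{d\mu}{d\lambda}$ by its upper bound $B$ and is the one sign condition on which the entire bounding argument rests.
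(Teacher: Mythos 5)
Your proposal is correct and is essentially the paper's own proof run in reverse: the paper starts from the hypothesis and chains inequalities forward using exactly the same ingredients --- the implicit-differentiation bound $\frac{d\mu}{d\lambda}\le\frac{f(1+\mu)(2+\mu)}{(1+\lambda f)(4+\mu)}$ from Proposition \ref{thm:dampen}, worst-case-in-$\lambda$ substitutions ($1+\lambda f\ge 1$, $1-\lambda\le 1$, $\sqrt{(1+\mu)/(1+\lambda f)}\le\sqrt{1+\mu}$), and the same grouping into a $-\pi\mu$ piece, a $(1-\pi)$ volume piece, and a $\pi$ piece --- to conclude that $\frac{d}{d\lambda}\bigl[\mathcal{R}(\lambda,\pi)/f\bigr]<0$. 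The third bound you flag as the main obstacle does close: adding the $f$-carrying part of $\pi\partial_\lambda H_1$ to $\pi\partial_\mu H_1\cdot B$ makes the $\tfrac{2+\mu}{4+\mu}$ factors from $B$ partially cancel the $\tfrac{1}{2}$ coefficients, and after discarding the negative $\lambda$-terms and bounding $1+\lambda f$ below by $1$ and $1-\lambda$ above by $1$, the requirement reduces to $\frac{\mu}{\sqrt{1+\mu}}\le 2(1+\mu)+2\sqrt{1+\mu}$, which always holds.
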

\begin{proof}
Note that
\begin{align*}
    (1-\pi)\left(1+\mu+\frac{1}{1+\mu}\right)f+\pi(1+\mu+\sqrt{1+\mu})f +\pi &\leq \pi(1+\mu) \\[0.25\baselineskip]
    (1-\pi)\left(1+\mu+\frac{1}{1+\mu}\right)f+\pi\left[\sqrt{1+\mu}+\left(1+\mu+\sqrt{1+\mu}\right)+1\right] &\leq \pi(1+\mu+\sqrt{1+\mu}),
\end{align*}
so for all $\lambda\in[0,1]$,
\begin{align*}
    &(1-\pi)\left(\frac{1+\mu}{1+\lambda f}+\frac{1}{(1+\mu)(1+\lambda f)}\right)f+\pi\left[\frac{\sqrt{1+\mu}}{\sqrt{1+\lambda f}}+(1-\lambda)\left(\frac{1+\mu}{1+\lambda f}+\frac{\sqrt{1+\mu}}{\sqrt{1+\lambda f}}\right)f+1\right] \\[0.25\baselineskip]
    &\hspace{10.55cm} \leq \pi\left[1+\mu+\sqrt{(1+\lambda f)(1+\mu)}\right] \\[0.25\baselineskip]
    &(1-\pi)\left(\frac{1+\mu}{1+\lambda f}+\frac{1}{(1+\mu)(1+\lambda f)}\right)f+\pi\left[\frac{\sqrt{1+\mu}}{\sqrt{1+\lambda f}}+(1-\lambda)\left(\frac{1+\mu}{1+\lambda f}+\frac{\sqrt{1+\mu}}{\sqrt{1+\lambda f}}\right)f+1\right. \\
    &\hspace{7cm} \left.+\ \frac{\lambda}{2}\cdot\frac{1}{1+\lambda f}\cdot\frac{\sqrt{1+\mu}}{\sqrt{1+\lambda f}}\cdot f+\frac{1}{2}\cdot\frac{1}{\sqrt{(1+\lambda f)(1+\mu)}}\cdot f\right] \\[0.25\baselineskip]
    &\hspace{1cm} \leq \pi\left[1+\mu+\frac{\lambda}{2}\cdot\frac{1}{1+\lambda f}\cdot\frac{\sqrt{1+\mu}}{\sqrt{1+\lambda f}}\cdot f+\sqrt{(1+\lambda f)(1+\mu)}+\frac{1}{2}\cdot\frac{1}{\sqrt{(1+\lambda f)(1+\mu)}}\cdot f\right].
\end{align*}
since the LHS is decreasing in $\lambda$ and the RHS is increasing in $\lambda$. Recall from the proof of Proposition $\ref{thm:dampen}$ that $\mu'(\lambda)$ satisfies
\begin{align*}
    \mu'(\lambda) = \frac{(1+\mu)(2+\mu)}{(1+\lambda f)(4+\mu)}\cdot f < \frac{1+\mu}{1+\lambda f}\cdot f,
\end{align*}
so it follows that
\newpage \noindent \begin{align*}
    &(1-\pi)\left(\mu'(\lambda)+\frac{\mu'(\lambda)}{(1+\mu)^2}\right)+\pi\left[\frac{\sqrt{1+\mu}}{\sqrt{1+\lambda f}}+(1-\lambda)\cdot\mu'(\lambda)+\frac{\lambda}{2}\cdot\frac{1}{1+\mu}\cdot\frac{\sqrt{1+\mu}}{\sqrt{1+\lambda f}}\cdot\mu'(\lambda)\right. \\
    &\hspace{3cm} \left.+\ 1+\frac{1-\lambda}{2}\cdot\frac{\sqrt{1+\mu}}{\sqrt{1+\lambda f}}+\frac{1-\lambda}{2}\cdot\frac{\sqrt{1+\lambda f}}{\sqrt{1+\mu}}\cdot\mu'(\lambda)+\frac{1}{2}\cdot\frac{\sqrt{1+\lambda f}}{\sqrt{1+\mu}}\cdot\frac{1}{1+\mu}\cdot\mu'(\lambda)\right] \\[0.25\baselineskip]
    &\hspace{1cm} < \pi\left[1+\mu+\frac{\lambda}{2}\cdot\frac{1}{1+\lambda f}\cdot\frac{\sqrt{1+\mu}}{\sqrt{1+\lambda f}}\cdot f+\sqrt{(1+\lambda f)(1+\mu)}+\frac{1}{2}\cdot\frac{1}{\sqrt{(1+\lambda f)(1+\mu)}}\cdot f\right].
\end{align*}
Moving everything to the left-hand side and substituting $\mu\equiv\tilde{\mu}(\lambda,\pi)$ yields that
\begin{align*}
    \frac{\d}{\d\lambda}\left[(1-\pi)\left(\tilde{\mu}(\lambda,\pi)+\frac{\tilde{\mu}(\lambda,\pi)}{1+\tilde{\mu}(\lambda,\pi)}\right)+\pi\cdot\frac{1+(1-\lambda)\cdot\tilde{\nu}(\lambda,\pi)}{1+\tilde{\nu}(\lambda,\pi)}\left(\tilde{\mu}(\lambda,\pi)+\frac{(1+\tilde{\nu}(\lambda,\pi))\cdot\tilde{\mu}(\lambda,\pi)}{1+\tilde{\nu}(\lambda,\pi)+\tilde{\mu}(\lambda,\pi)}\right)\right]
\end{align*}
is negative. The statement follows.
\end{proof}
Solving the inequality
\begin{align*}
    (1-\pi)\left(1+\mu+\frac{1}{1+\mu}\right)f+\pi(1+\mu+\sqrt{1+\mu})f +\pi \leq \pi(1+\mu)
\end{align*}
for an upper bound on $f$ yields
\begin{align*}
    f \leq \frac{\pi\mu(1+\mu)}{2+\mu(2+\mu)+\pi\left[(1+\mu)^{3/2}-1\right]} = \bar{f}(\mu,\pi).
\end{align*}
Note that $\Bar{f}(\mu,\pi)$ is increasing in $\mu$ and $\lim_{\mu\to\infty}\bar{f}(\mu,\pi)=\pi$, so we have a bijection between $\mu\in(f,\infty)$ and $\bar{f}(\mu,\pi)\in(\bar{f}(f,\pi),\pi)$. Since $\tilde{\mu}(\lambda,\pi)$ is increasing in $\pi$, if $f\leq\Bar{f}(\tilde{\mu}(1,\pi),\pi)$, then $f\leq\bar{f}(\tilde{\mu}(\lambda,\pi),\pi)$ for any $\lambda\in[0,1]$, so it suffices to upper bound $\tilde{\mu}(1,\pi)$ to imply that $\d\mathcal{R}(\lambda,\pi)/\d\lambda<0$. Let $\zeta_U(\mu)$ be the shock size such that the uninformed trader's equilibrium trade size is $\mu$ under $\lambda=1$. Since \ref{thm:threshold} that $\zeta_U(\mu)=(1+f)/M_T(\mu)$. Since $M_T(\mu)$ is decreasing in $\mu$, it follows that $\zeta_U(\mu)$ is increasing in $\mu$, so we have a bijection between $\mu\in(f,\infty)$ and $\zeta_U(\mu)\in(\underline{\zeta}(f,\pi),\infty)$. Composing increasing bijections implies that we have a increasing bijection between $f\in(\bar{f}(f,\pi),\pi)$ and $\zeta_U\in(\underline{\zeta}(f,\pi),\infty)$. The result follows.

\subsection{Proof of Theorem \ref{thm:maxwelfare}}

Observe that welfare is zero in the case of informed buys and sells. In the case of an uninformed sell, the utilities of each agent are:
\newpage \noindent \begin{gather*}
    u_T = \pi \cdot\delta_S(q_R,\tilde{d}_{P}+\tilde{d}_{J})+(1-\pi )\cdot\delta_S(q_R,\tilde{d}_{P})-\zeta_U^{-1}p(1+f)q_R \\[0.25\baselineskip]
    u_{J} = \frac{\pi \tilde{d}_{J}}{\tilde{d}_{P}+\tilde{d}_{J}}(p(1+\lambda f)q_R-\delta_S(q_R,\tilde{d}_{P}+\tilde{d}_{J})) \\[0.25\baselineskip]
    u_A = \frac{\pi \tilde{d}_{P}}{\tilde{d}_{P}+\tilde{d}_{J}}(pq_R-(1+f)\cdot \delta_S(q_R,\tilde{d}_{P}+\tilde{d}_{J}))+(1-\pi )(pq_R-(1+f)\cdot\delta_S(q_R,\tilde{d}_{P})) \\[0.25\baselineskip]
    u_{P} = \pi \left(\frac{\tilde{d}_{P}+(1-\lambda)\tilde{d}_{J}}{\tilde{d}_{P}+\tilde{d}_{J}}pq_R+\frac{\tilde{d}_{P}}{\tilde{d}_{P}+\tilde{d}_{J}}\cdot \delta_S(q_R,\tilde{d}_{P}+\tilde{d}_{J})\right)f+(1-\pi )(pq_R+\delta_S(q_R,\tilde{d}_{P}))f,
\end{gather*}
where we omit the dependence of $q_R$ and $\tilde{d}_{J}$ on $\lambda$ for brevity, and the subscripts $T$, $J$, $A$, and $P$ represent the (uninformed) trader, JIT LP, arbitrageur, and passive LPs, respectively. In the case of an uninformed buy, the utilities of each agent are (suppressing arguments):
\begin{gather*}
    u_T = \zeta_U p \left( \pi \cdot\delta_R(q_S,\tilde{d}_{P}+\tilde{d}_{J})+(1-\pi )\cdot\delta_R(q_S,\tilde{d}_{P}) \right)-(1+f)q_S \\[0.25\baselineskip]
    u_{J} = \frac{\pi \tilde{d}_{J}}{\tilde{d}_{P}+\tilde{d}_{J}}((1+\lambda f)q_S-p\cdot\delta_R(q_S,\tilde{d}_{P}+\tilde{d}_{J})) \\[0.25\baselineskip]
    u_A = \frac{\pi \tilde{d}_{P}}{\tilde{d}_{P}+\tilde{d}_{J}}(q_S-p(1+f)\cdot\delta_R(q_S,\tilde{d}_{P}+\tilde{d}_{J}))+(1-\pi )(q_S-p(1+f)\cdot\delta_R(q_S,\tilde{d}_{P})) \\[0.25\baselineskip]
    u_{P} = \pi \left(\frac{\tilde{d}_{P}+(1-\lambda)\tilde{d}_{J}}{\tilde{d}_{P}+\tilde{d}_{J}}q_S+\frac{\tilde{d}_{P}}{\tilde{d}_{P}+\tilde{d}_{J}}\cdot p\cdot\delta_R(q_S,\tilde{d}_{P}+\tilde{d}_{J})\right)f+(1-\pi )(q_S+p\cdot\delta_R(q_S,\tilde{d}_{P}))f,
\end{gather*}
omitting the dependence of $q_S$ on $\lambda$ for brevity. In each case, welfare is
\begin{gather*}
    W(\mathsf{US}) = (1-\zeta_U^{-1})p(1+f)q_R \\
    W(\mathsf{UB}) = (\zeta_U-1)p\cdot\delta_R(q_S,\tilde{d}_{P}+\tilde{d}_{J}) 
\end{gather*}
Weighting each case by its probability yields an expected welfare of
\begin{align*}
    W =\psi_U(1-\zeta_U^{-1})p(1+f)\cdot q_R+(1-\psi_U)(\zeta_U-1)p\left[\pi \cdot\delta_R(q_S,\tilde{d}_{P}+\tilde{d}_{J})+(1-\pi )\cdot\delta_R(q_S,\tilde{d}_{P})\right].
\end{align*}
Since $q_R^\star(\lambda$, $q_S^\star(\lambda)$ and $\tilde{d}_{J}^\star(\lambda)$ are increasing in $\lambda$, and $\delta_R$ is 1-homogeneous and increasing in both arguments, it follows that $\delta_R(q_S^\star,\tilde{d}_{P}+\tilde{d}_{J}^\star)$ is also increasing in $\lambda$. 

\subsection{Cournot-Competing JIT LPs: Explicit Forms of Utilities}

The JIT LP's utility is now
\begin{gather*}
    u_J^{(j)}(\tilde{d}_J^{(j)};\sigma_{-J},(\omega,\mathsf{NA}_{-j})) = \begin{dcases}
        \frac{\tilde{d}_J^{(j)}}{\tilde{d}_P+\tilde{d}_J^{(j)}}\left(\frac{p(1+f)}{\zeta}\cdot q_R(\tilde{d}_P;\omega)-\delta_S(q_R(\tilde{d};\omega),\tilde{d}_P+\tilde{d}_J^{(j)})\right) & \omega=\mathsf{IS} \\[0.25\baselineskip]
        \frac{\tilde{d}_J^{(j)}}{\tilde{d}_P+\tilde{d}_J^{(j)}}((1+f)\cdot q_S(\tilde{d}_P;\omega)-\zeta p\cdot\delta_R(q_S(\tilde{d}_P;\omega),\tilde{d}_P+\tilde{d}_J^{(j)})) & \omega=\mathsf{IB} \\[0.25\baselineskip]
        \frac{\tilde{d}_J^{(j)}}{\tilde{d}_P+\tilde{d}_J^{(j)}}(p(1+f)\cdot q_R(\tilde{d}_P;\omega)-\delta_S(q_R(\tilde{d};\omega),\tilde{d}_P+\tilde{d}_J^{(j)})) & \omega=\mathsf{US} \\[0.25\baselineskip]
        \frac{\tilde{d}_J^{(j)}}{\tilde{d}_P+\tilde{d}_J^{(j)}}((1+f)\cdot q_S(\tilde{d}_P;\omega)-p\cdot\delta_R(q_S(\tilde{d}_P;\omega),\tilde{d}_P+\tilde{d}_J^{(j)})) & \omega=\mathsf{UB}
    \end{dcases} \\[0.25\baselineskip]
    u_J^{(j)}(\tilde{d}_J^{(j)};\sigma_{-J},(\omega,\mathsf{A}_{-j})) = \begin{dcases}
        \frac{\tilde{d}_J^{(j)}}{\tilde{d}_P+\tilde{d}_J^{(j)}+\tilde{d}_J^{(-j)}}\left(\frac{p(1+f)}{\zeta}\cdot q_R(\tilde{d}_P;\omega)-\delta_S(q_R(\tilde{d};\omega),\tilde{d}_P+\tilde{d}_J^{(j)}+\tilde{d}_J^{(-j)})\right) & \omega=\mathsf{IS} \\[0.25\baselineskip]
        \frac{\tilde{d}_J^{(j)}}{\tilde{d}_P+\tilde{d}_J^{(j)}+\tilde{d}_J^{(-j)}}((1+f)\cdot q_S(\tilde{d}_P;\omega)-\zeta p\cdot\delta_R(q_S(\tilde{d}_P;\omega),\tilde{d}_P+\tilde{d}_J^{(j)}+\tilde{d}_J^{(-j)})) & \omega=\mathsf{IB} \\[0.25\baselineskip]
        \frac{\tilde{d}_J^{(j)}}{\tilde{d}_P+\tilde{d}_J^{(j)}+\tilde{d}_J^{(-j)}}(p(1+f)\cdot q_R(\tilde{d}_P;\omega)-\delta_S(q_R(\tilde{d};\omega),\tilde{d}_P+\tilde{d}_J^{(j)}+\tilde{d}_J^{(-j)})) & \omega=\mathsf{US} \\[0.25\baselineskip]
        \frac{\tilde{d}_J^{(j)}}{\tilde{d}_P+\tilde{d}_J^{(j)}+\tilde{d}_J^{(-j)}}((1+f)\cdot q_S(\tilde{d}_P;\omega)-p\cdot\delta_R(q_S(\tilde{d}_P;\omega),\tilde{d}_P+\tilde{d}_J^{(j)}+\tilde{d}_J^{(-j)})) & \omega=\mathsf{UB}.
    \end{dcases}
\end{gather*}
For $\omega\in\{\mathsf{IS},\mathsf{IB}\}$, the informed trader's utility is (suppressing all arguments for the JIT LP's strategy expect for $\Omega_j$)
\begin{align*}
    u_T((q_R,q_S);\sigma_{-T},\omega) = \begin{dcases}
        (1-\pi)^2\cdot\delta_S(q_R,\tilde{d}_P)+\pi(1-\pi)\cdot\delta_S(q_R,\tilde{d}_P+\tilde{d}_J^{(1)}(\mathsf{NA}_{2})) \\
        \hspace{1cm}+\,\pi(1-\pi)\cdot\delta_S(q_R,\tilde{d}_P+\tilde{d}_J^{(2)}(\mathsf{NA}_{1})) \\
        \hspace{1cm}+\,\pi^2\cdot\delta_S(q_R,\tilde{d}_P+\tilde{d}_J^{(1)}(\mathsf{A}_2)+\tilde{d}_J^{(2)}(\mathsf{A}_1))-p'(\omega)\cdot(1+f)q_R & q_R>0 \\
        p'(\omega)[(1-\pi)^2\cdot\delta_R(q_S,\tilde{d}_P)+\pi(1-\pi)\cdot\delta_S(q_S,\tilde{d}_P+\tilde{d}_J^{(1)}(\mathsf{NA}_2)) \\
        \hspace{1cm}+\,\pi(1-\pi)\cdot\delta_R(q_S,\tilde{d}_P+\tilde{d}_J^{(2)}(\mathsf{NA}_1)) \\
        \hspace{1cm}+\,\pi^2\cdot\delta_R(q_S,\tilde{d}_P+\tilde{d}_J^{(1)}(\mathsf{A}_2)+\tilde{d}_J^{(2)}(\mathsf{A}_1))]-(1+f)q_S & q_S>0 \\
        0 & \text{o.w.}
    \end{dcases}.
\end{align*}
For $\omega\in\{\mathsf{US},\mathsf{UB}\}$, the uninformed trader's utility is (suppressing all arguments for the JIT LP's strategy except for $\Omega_j$)
\begin{align*}
    u_T((q_R,q_S);\sigma_{-T},\omega) = \begin{dcases}
        (1-\pi)^2\cdot\delta_S(q_R,\tilde{d}_P)+\pi(1-\pi)\cdot\delta_S(q_R,\tilde{d}_P+\tilde{d}_J^{(1)}(\mathsf{NA}_{2})) \\
        \hspace{1cm}+\,\pi(1-\pi)\cdot\delta_S(q_R,\tilde{d}_P+\tilde{d}_J^{(2)}(\mathsf{NA}_{1})) \\
        \hspace{1cm}+\,\pi^2\cdot\delta_S(q_R,\tilde{d}_P+\tilde{d}_J^{(1)}(\mathsf{A}_2)+\tilde{d}_J^{(2)}(\mathsf{A}_1))-P(\omega)\cdot(1+f)q_R & q_R>0 \\
        P(\omega)[(1-\pi)^2\cdot\delta_R(q_S,\tilde{d}_P)+\pi(1-\pi)\cdot\delta_S(q_S,\tilde{d}_P+\tilde{d}_J^{(1)}(\mathsf{NA}_2)) \\
        \hspace{1cm}+\,\pi(1-\pi)\cdot\delta_R(q_S,\tilde{d}_P+\tilde{d}_J^{(2)}(\mathsf{NA}_1)) \\
        \hspace{1cm}+\,\pi^2\cdot\delta_R(q_S,\tilde{d}_P+\tilde{d}_J^{(1)}(\mathsf{A}_2)+\tilde{d}_J^{(2)}(\mathsf{A}_1))]-(1+f)q_S & q_S>0 \\
        0 & \text{o.w.}
    \end{dcases}.
\end{align*}
The passive LPs' conditional total utility (suppressing arguments for the other agents' strategies) is then
\begin{gather*}
    u_P(\tilde{d}_P;\sigma_{-P},\omega) = \begin{dcases}
        \frac{\pi^2 \tilde{d}_P}{\tilde{d}_P+\tilde{d}_J^{(1)}(\mathsf{A}_2)+\tilde{d}_J^{(2)}(\mathsf{A}_1)}\left(\frac{p(1+f)}{\zeta}q_R-\delta_S(q_R,\tilde{d}_P+\tilde{d}_J^{(1)}(\mathsf{A}_2)+\tilde{d}_J^{(2)}(\mathsf{A}_1))\right) & \\
        \hspace{1cm}+\,\pi(1-\pi)\cdot\frac{\tilde{d}_P}{\tilde{d}_P+\tilde{d}_J^{(1)}(\mathsf{NA}_2)}\left(\frac{p(1+f)}{\zeta}q_R-\delta_S(q_R,\tilde{d}_P+\tilde{d}_J^{(1)}(\mathsf{NA}_2))\right) \\
        \hspace{1cm}+\,\pi(1-\pi)\cdot\frac{\tilde{d}_P}{\tilde{d}_P+\tilde{d}_J^{(2)}(\mathsf{NA}_1)}\left(\frac{p(1+f)}{\zeta}q_R-\delta_S(q_R,\tilde{d}_P+\tilde{d}_J^{(2)}(\mathsf{NA}_1))\right) \\
        \hspace{1cm}+\,(1-\pi)^2\left(\frac{p(1+f)}{\zeta}q_R-\delta_S(q_R,\tilde{d}_P)\right) & \omega=\mathsf{IS} \\[0.25\baselineskip]
        \frac{\pi^2 \tilde{d}_P}{\tilde{d}_P+\tilde{d}_J^{(1)}(\mathsf{A}_2)+\tilde{d}_J^{(2)}(\mathsf{A}_1)}((1+f)q_S-\zeta p\cdot\delta_R(q_S,\tilde{d}_P+\tilde{d}_J^{(1)}(\mathsf{A}_2)+\tilde{d}_J^{(2)}(\mathsf{A}_1))) & \\
        \hspace{1cm}+\,\pi(1-\pi)\cdot\frac{\tilde{d}_P}{\tilde{d}_P+\tilde{d}_J^{(1)}(\mathsf{NA}_2)}((1+f)q_S-\zeta p\cdot\delta_R(q_S,\tilde{d}_P+\tilde{d}_J^{(2)}(\mathsf{NA}_2)) & \\
        \hspace{1cm}+\,\pi(1-\pi)\cdot\frac{\tilde{d}_P}{\tilde{d}_P+\tilde{d}_J^{(2)}(\mathsf{NA}_1)}((1+f)q_S-\zeta p\cdot\delta_R(q_S,\tilde{d}_P+\tilde{d}_J^{(1)}(\mathsf{NA}_1)) \\
        \hspace{1cm}+\,(1-\pi)^2((1+f)q_S-\zeta p\cdot\delta_R(q_S,\tilde{d}_P)) & \omega=\mathsf{IB} \\[0.25\baselineskip]
        \left[\frac{\pi^2 \tilde{d}_P}{\tilde{d}_P+\tilde{d}_J}(pq_R+\delta_S(q_R,\tilde{d}_P+\tilde{d}_J^{(1)}(\mathsf{A}_2)+\tilde{d}_J^{(2)}(\mathsf{A}_1)))\right. \\
        \hspace{1cm}+\,\pi(1-\pi)\cdot\frac{\tilde{d}_P}{\tilde{d}_P+\tilde{d}_J^{(1)}(\mathsf{NA}_2)}(pq_R+\delta_S(q_R,\tilde{d}_P+\tilde{d}_J^{(1)}(\mathsf{NA}_2))) \\
        \hspace{1cm}+\,\pi(1-\pi)\cdot\frac{\tilde{d}_P}{\tilde{d}_P+\tilde{d}_J^{(2)}(\mathsf{NA}_1)}(pq_R+\delta_S(q_R,\tilde{d}_P+\tilde{d}_J^{(2)}(\mathsf{NA}_1))) \\
        \hspace{1cm}\left.+\,(1-\pi)(pq_R+\delta_S(q_R,\tilde{d}_P))\right]f & \omega=\mathsf{US} \\[0.25\baselineskip]
        \left[\frac{\pi \tilde{d}_P}{\tilde{d}_P+\tilde{d}_J}(q_S+p\cdot\delta_R(q_S,\tilde{d}_P+\tilde{d}_J))\right. \\
        \hspace{1cm}+\,\pi(1-\pi)\cdot\frac{\tilde{d}_P}{\tilde{d}_P+\tilde{d}_J^{(2)}(\mathsf{NA}_1)}(q_S+p\cdot\delta_R(q_S,\tilde{d}_P+\tilde{d}_J^{(2)}(\mathsf{NA}_1))) \\
        \hspace{1cm}+\,\pi(1-\pi)\cdot\frac{\tilde{d}_P}{\tilde{d}_P+\tilde{d}_J^{(2)}(\mathsf{NA}_1)}(q_S+p\cdot\delta_R(q_S,\tilde{d}_P+\tilde{d}_J^{(2)}(\mathsf{NA}_1))) \\
        \hspace{1cm}\left.+\,(1-\pi )(q_S+p\cdot \delta_R(q_S,\tilde{d}_P))\right]f & \omega=\mathsf{UB}.
    \end{dcases}
\end{gather*}

\subsection{Proof of Propositions \ref{thm:subgame-eq-comp} and \ref{thm:passive-br-comp}}

It suffices to show the following: let $\tilde{d}_{P}\in[0,\tilde{e}_{P}]$ and $\zeta_U>[\underline{\zeta},\overline{\zeta}]$. Then there exists a non-trivial Nash equilibrium in the subgame between the traders and JIT LP. Define the following:
\begin{gather*}
    \tilde{\mu}_I = \zeta^{1/2}(1+f)^{-1/2}-1, \\[0.25\baselineskip]
    \hat{\nu}(\tilde{d}_P) = \begin{dcases}
        \frac{\tilde{e}_J}{\tilde{d}_P} & \tilde{d}_P > 0 \\
        0 & \tilde{d}_P = 0
    \end{dcases} \\[0.25\baselineskip]
    \tilde{\mu}_C(\pi ) = \arg\min_{\mu\in\R_+}\,\left|\frac{(1-\pi)^2}{(1+\mu)^2}+\frac{\pi(1-\pi)(2+\mu)\sqrt{(1+f)(1+\mu)}}{(1+\mu)^2}+\frac{\pi^2(1+2\cdot\hat{\nu}(\tilde{d}_P))^2}{(1+2\cdot\hat{\nu}(\tilde{d}_P)+\mu)^2}-\frac{1+f}{\zeta_U}\right|\!, \\[0.25\baselineskip]
    \tilde{\nu}_C(\pi ) = \frac{f(1+\tilde{\mu}_C(\pi))+\tilde{\mu}_C(\pi)\sqrt{(1+f)(1+\tilde{\mu}_C(\pi))}}{\tilde{\mu}_C(\pi)-f}.
\end{gather*}
The equilibrium outcome is
\begin{gather*}
    (q_R,q_S)^\star(\tilde{d}_P;\omega) = \begin{dcases}
        (\tilde{\mu}_I\tilde{d}_P,0) & \omega=\mathsf{IS} \\
        (0,\tilde{\mu}_Ip\tilde{d}_P) & \omega=\mathsf{IB} \\
        (\tilde{\mu}(\pi)\cdot \tilde{d}_P,0) & \omega=\mathsf{US} \\
        (0,\tilde{\mu}(\pi)\cdot p\tilde{d}_P) & \omega=\mathsf{UB}
    \end{dcases}, \\[0.25\baselineskip]
    \tilde{d}_J^{(j)\star}(\tilde{d}_P,(q_R,q_S)^\star(\tilde{d}_P;\omega);(\omega,\omega_j)) = \begin{dcases}
        0 & (\omega,\omega_j)\in\{\mathsf{IS},\mathsf{IB}\}\times\{\mathsf{NA}_{-j},\mathsf{A}_{-j}\} \\
        \tilde{\nu}(\pi)\cdot\tilde{d}_P & (\omega,\omega_j)\in\{\mathsf{US},\mathsf{UB}\}\times\{\mathsf{NA}_{-j}\} \\
        \hat{\nu}(\tilde{d}_P)\cdot\tilde{d}_P & (\omega,\omega_j)\in\{\mathsf{US},\mathsf{UB}\}\times\{\mathsf{A}_{-j}\}
    \end{dcases}.
\end{gather*}
The multiples $\mu_I$, $\mu_C(\pi)$, and $\nu_C(\pi)$ are given by scaling $\tilde{\mu}$, $\tilde{\mu}_C(\pi)$, and $\tilde{\nu}_C(\pi)$ by $p^{1/2}$, respectively. It remains to find the passive LPs' best-responses when the remaining agents play the unique non-trivial Nash equilibrium of the trader--JIT LP subgame given $\tilde{d}_{P}$, the amount of passive liquidity provided, denoted $\sigma_{-P}^\star(\tilde{d}_{P})$. Substituting the explicit forms of $\delta_S$, $\delta_R$, and $\sigma_{-P}^\star(\tilde{d}_{P})$ into the expressions for $u_{P}(\tilde{d}_{P};\sigma_{-P}^\star(\tilde{d}_{P}))$ yields
\begin{align*}
    u_{P}(\tilde{d}_{P};\sigma_{-P}^\star) = \mathcal{U}(\pi,\tilde{d}_P) = (\alpha\mathcal{C}+(1-\alpha)\cdot\mathcal{R}(\pi,\tilde{d}_P))p\tilde{d}_{P}
\end{align*}
where $\mathcal{C}$ and $\mathcal{R}(\pi)$ are given by
\begin{gather*}
    \mathcal{C} = -\left[\psi\left(1-\frac{1+f}{\zeta}\right)^2+(1-\psi\left(\sqrt{\zeta}-\sqrt{1+f}\right)^2\right], \\[0.25\baselineskip]
    \mathcal{R}(\pi,\tilde{d}_P) = \left[(1-\pi)^2\left(\tilde{\mu}_C(\pi)+\frac{\tilde{\mu}_C(\pi)}{1+\tilde{\mu}(\pi)}\right)+2\pi(1-\pi)\cdot\frac{1}{1+\tilde{\nu}_C(\pi)}\left(\tilde{\mu}_C(\pi)+\frac{(1+\tilde{\nu}_C(\pi))\cdot\tilde{\mu}_C(\pi)}{1+\tilde{\nu}_C(\pi)+\tilde{\mu}_C(\pi)}\right)\right. \\
    \left.+\pi^2\cdot\frac{1}{1+2\cdot\hat{\nu}(\tilde{d}_P)}\left(\tilde{\mu}_C(\pi)+\frac{(1+2\cdot\hat{\nu}(\tilde{d}_P))\cdot\tilde{\mu}_C(\pi)}{1+2\cdot\hat{\nu}(\tilde{d}_P)+\tilde{\mu}_C(\pi)}\right)\right]f.
\end{gather*}

We show both of these results with an argument similar to the proofs of Propositions \ref{thm:subgame-eq} and \ref{thm:passive-br} but using the utilities delineated in the previous subsection. Note that if $\bar{\nu}=e_J/e_P=\tilde{e}_J/\tilde{e}_P$ is sufficiently large such that the JIT LP liquidity constraint does not bind when only one JIT LP arrives for all $\zeta_U\in[\underline{\zeta},\overline{\zeta}]$, then the liquidity constraint does not bind for all $\hat{\nu}(\tilde{d}_P)$ where $\tilde{d}_P\in[0,\tilde{e}_P]$ since $\hat{\nu}(\tilde{d}_P)\geq\hat{\nu}(\tilde{e}_P)=\bar{\nu}$. We also require the following lemmas.

\begin{lemma}
Let $\tilde{d}_P\in[0,\tilde{e}_P]$. The unique non-trivial Nash equilibrium of the trader--JIT LP subgame when $(\omega,\omega_j)=(\mathsf{US},\mathsf{A}_{-j})$ is given by
\begin{gather*}
    (q_R,q_S)^\star(\tilde{d}_P,\mathsf{US}) = (\tilde{\mu}_C(\pi)\cdot\tilde{d}_P,0) \\[0.25\baselineskip]
    \tilde{d}_J^{(j)^\star}(\tilde{d}_P,(q_R,q_S);(\mathsf{US},\mathsf{A}_{-j})) = \begin{dcases}
        0 & \tilde{d}_P = 0 \\
        \tilde{e}_J & \tilde{d}_P > 0
    \end{dcases}.
\end{gather*}
\begin{proof}
When $\tilde{d}_P=0$, the result follows immediately from Assumption 1. Now fix $\tilde{d}_P>0$, the uninformed trader's swap size $q_R>0$, and the competitor's deposit size $\tilde{d}_J^{(-j)}$. Let $\tilde{D}=\tilde{d}_P+\tilde{d}_J^{(-j)}$. From previous results, we know that 
\begin{align*}
    \tilde{d}_J^{(j)\star} = \frac{f\tilde{D}(\tilde{D}+q_R)+q_R\sqrt{(1+f)\tilde{D}(\tilde{D}+q_R)}}{q_R-f\tilde{D}}.
\end{align*}
If $q_R<f\tilde{D}$, then by previous arguments, the best-response tends to infinity. Note that
\begin{align*}
    (1-2f+f^2)\tilde{D}^2 &\leq (1+f)\tilde{D}^2+(1+f)\tilde{D}q_R \\
    (1-f)\tilde{D} &\leq \sqrt{(1+f)\tilde{D}(\tilde{D}+q_R)} \\
    (1-f)\tilde{D}q_R &\leq q_R\sqrt{(1+f)\tilde{D}(\tilde{D}+q_R)}+2f\tilde{D}^2 \\
    \tilde{D}q_R-f\tilde{D}^2 &\leq f\tilde{D}^2+f\tilde{D}q_R+q_R\sqrt{(1+f)\tilde{D}(\tilde{D}+q_R)} \\
    \tilde{D} &\leq \frac{f\tilde{D}(\tilde{D}+q_R)+q_R\sqrt{(1+f)\tilde{D}(\tilde{D}+q_R)}}{q_R-f\tilde{D}},
\end{align*}
so $\tilde{d}_J^{(j)\star}\geq \tilde{D}$. Then for an interior solution to a Cournot--Nash equilibrium, we must have
\begin{align*}
    \tilde{d}_J^{(1)\star} &\geq \tilde{d}_P+\tilde{d}_J^{(2)\star} \\
    \tilde{d}_J^{(2)\star} &\geq \tilde{d}_P+\tilde{d}_J^{(1)\star}.
\end{align*}
Adding the inequalities implies that $0\geq\tilde{d}_P$, contradicting the assumption that $\tilde{d}_P>0$. 

We finish by checking for corner cases. If the competitor does not deposit, then JIT LP $j$'s optimal deposit amount is the monopolist's optimum, which is positive, so there is no Nash equilibrium where one or both JIT LPs does not deposit. 

Suppose that the competitor's liquidity constraint binds i.e.\!\, $\tilde{d}_J^{(-j)}=\tilde{e}_J$. Then we must have $\tilde{d}_J^{(j)\star}\geq\tilde{D}=\tilde{q}_P+\tilde{e}_J>\tilde{e}_J$. Since a JIT LP's utility function is quasiconcave in the deposit size, it follows that JIT LP $j$'s utility increasing in $\tilde{d}_J^{(j)}$ until $\tilde{d}_J^{(j)\star}$. It follows that the optimal deposit size given the liquidity constraint for JIT LP $j$ is $\tilde{e}_J$. Reversing the roles of JIT LPs $j$ and $-j$ yields a unique symmetric Cournot--Nash equilibrium where both JIT LPs depsoit their entire endowment.

We now focus on the trader's problem. We claim that if the JIT LP's utility function has a unique maximum in $\tilde{d}_J$ on $\R_+$ and $q_R^\star$ solves the uninformed trader's problem given $\tilde{d}_P$ and $\hat{\nu}(d_P)$ when $\omega=\mathsf{US}$, then $q_R^\star/c$ solves the trader's problem given $\tilde{d}_P/c$ and $c\cdot\hat{\nu}(\tilde{d}_P/c)$ when $\omega=\mathsf{US}$ for any $c>0$.

The trader's problem given $\tilde{d}_{P}$ is
\begin{align*}
    \max_{q_R\in\R_+} \ & (1-\pi)^2\cdot\delta_S(q_R,\tilde{d}_P) +2\pi(1-\pi)\cdot\delta_S(q_R,\tilde{d}_{P}+\tilde{d}_{J}^{(j)}(\tilde{d}_{P},(q_R,0);(\mathsf{US},\mathsf{NA}_{-j}))) \\
    & \hspace{1cm}+\pi^2\cdot\delta_S(q_R,(1+2\cdot\hat{\nu}(\tilde{d}_P))\cdot\tilde{d}_P)-\zeta_U^{-1}p(1+f)q_R
\end{align*}
and the trader's problem given $\tilde{d}_{P}/c$ and $c\cdot\hat{\nu}(\tilde{d}_P)$ is
\begin{align*}
    \max_{q_R\in\R_+} \ & (1-\pi)^2\cdot\delta_S(q_R,\tilde{d}_P/c) +2\pi(1-\pi)\cdot\delta_S(q_R,\tilde{d}_{P}/c+\tilde{d}_{J}^{(j)}(\tilde{d}_{P}/c,(q_R,0);(\mathsf{US},\mathsf{NA}_{-j}))) \\
    & \hspace{1cm}+\pi^2c\cdot\delta_S(q_R,(1+2c\cdot\hat{\nu}(\tilde{d}_P/c))\cdot\tilde{d}_P/c)-\zeta_U^{-1}p(1+f)q_R
\end{align*}
Let $\tilde{q}_R=cq_R$. By Lemma A.5, the unique maximum assumption, and the 1-homogeneity of $\delta_S$, this problem is equivalent to
\begin{align*}
    \max_{\tilde{q}_R\in\R_+} \ & c^{-1}(1-\pi)^2\cdot\delta_S(\tilde{q}_R,\tilde{d}_P)+ c^{-1}2\pi(1-\pi)\cdot\delta_S(\tilde{q}_R,\tilde{d}_{P}+\tilde{d}_{J}(\tilde{d}_{P},(\tilde{q}_R,0);(\mathsf{US},\mathsf{NA}_{-j}))) \\
    \hspace{1cm}&+c^{-1}\pi^2\cdot\delta_S(q_R,(1+2\cdot\hat{\nu}(\tilde{d}_P))\cdot\tilde{d}_P)-c^{-1}\zeta_U^{-1}p(1+f)\tilde{q}_R
\end{align*}
The objective is $c^{-1}$ times the original objective, so the claim follows.

The trader's utility function when $\omega=\mathsf{US}$ is
\begin{align*}
    u_T((q_R,0);\sigma_{-T},\mathsf{US}) &= (1-\pi )^2\frac{p\tilde{d}_{P}q_R}{\tilde{d}_{P}+q_R}+2\pi(1-\pi) \cdot\frac{p(\tilde{d}_{P}+\tilde{d}_{J}^{(j)\star}(\tilde{d}_{P},(q_R,0);(\mathsf{US},\mathsf{NA}_{-j}))q_R}{\tilde{d}_{P}+\tilde{d}_{J}^{(j)\star}(\tilde{d}_{P},(q_R,0);(\mathsf{US},\mathsf{NA}_{-j}))+q_R} \\
    &\hspace{1cm}+\pi^2\cdot\frac{p(\tilde{d}_{P}+2\cdot\tilde{d}_{J}^{(j)\star}(\tilde{d}_{P},(q_R,0);(\mathsf{US},\mathsf{A}_{-j}))q_R}{\tilde{d}_{P}+2\cdot\tilde{d}_{J}^{(j)\star}(\tilde{d}_{P},(q_R,0);(\mathsf{US},\mathsf{A}_{-j}))+q_R}-\zeta_U^{-1}p(1+f)q_R.
\end{align*}
This has partial derivative
\begin{align*}
    \frac{\d u_T}{\d q_R} &= (1-\pi )^2p\cdot\frac{\tilde{d}_{P}^2}{(\tilde{d}_{P}+q_R)^2}+2\pi(1-\pi)  p\cdot\frac{(2\tilde{d}_{P}+q_R)\sqrt{(1+f)\tilde{d}_{P}(\tilde{d}_{P}+q_R)}}{2(\tilde{d}_{P}+q_R)^2} \\
    &\hspace{1cm}+\pi^2\cdot\frac{(\tilde{d}_P+2\tilde{e}_J)^2}{(\tilde{d}_P+2\tilde{e}_J+q_R)^2} -\frac{p(1+f)}{\zeta_U} \\[0.25\baselineskip]
    &= \left[(1-\pi )^2\cdot\frac{\tilde{d}_{P}^2}{(\tilde{d}_{P}+q_R)^2}+2\pi(1-\pi)\cdot\frac{(2\tilde{d}_{P}+q_R)\sqrt{(1+f)\tilde{d}_{P}(\tilde{d}_{P}+q_R)}}{2(\tilde{d}_{P}+q_R)^2}\right. \\
    &\hspace{1cm}\left.+\pi^2\cdot\frac{(\tilde{d}_P+2\tilde{e}_J)^2}{(\tilde{d}_P+2\tilde{e}_J+q_R)^2}-\frac{1+f}{\zeta_U}\right]p.
\end{align*}
By Lemma A.17, we can normalize $\tilde{d}_P$ to 1, so the trader's first order condition is
\begin{align*}
    M_{TC}(\mu) \equiv \frac{(1-\pi)^2}{(1+\mu)^2}+\frac{\pi(1-\pi)(2+\mu)\sqrt{(1+f)(1+\mu)}}{(1+\mu)^2}+\frac{\pi^2(1+2\hat{\nu})^2}{(1+2\hat{\nu}+\mu)^2} = \frac{1+f}{\zeta_U}
\end{align*}
where $\hat{\nu}\equiv\hat{\nu}(\tilde{d}_P)$.
\end{proof}
\end{lemma}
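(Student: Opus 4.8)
The plan is to reduce JIT LP $j$'s problem in the both-arrive event to the single-JIT-LP analysis by treating the competitor's deposit as part of the pool depth that JIT LP $j$ faces. Writing $\tilde{D} = \tilde{d}_P + \tilde{d}_J^{(-j)}$, the utility of JIT LP $j$ in scenario $(\mathsf{US},\mathsf{A}_{-j})$ is exactly the monopolist's objective of Lemma A.7 with $\tilde{d}_P$ replaced by $\tilde{D}$, so its interior best response (when $q_R > f\tilde{D}$) is the positive root
\[
\tilde{d}_J^{(j)\star} = \frac{f\tilde{D}(\tilde{D}+q_R)+q_R\sqrt{(1+f)\tilde{D}(\tilde{D}+q_R)}}{q_R-f\tilde{D}}.
\]
The degenerate case $\tilde{d}_P = 0$ I would dispose of first: by the homogeneity established in Lemmas A.5 and A.6 all equilibrium quantities scale with $\tilde{d}_P$, so the only candidate outcome is the zero-liquidity one, which Assumption 1 selects through the JIT LP's preference for the smaller deposit.

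The crux is the \emph{more than half the pool} inequality $\tilde{d}_J^{(j)\star} \geq \tilde{D}$. I would derive it from the elementary bound $(1-f)^2\tilde{D}^2 \leq (1+f)\tilde{D}(\tilde{D}+q_R)$, whose difference equals $(3f-f^2)\tilde{D}^2 + (1+f)\tilde{D}q_R \geq 0$; taking square roots (valid for $f<1$), multiplying by $q_R$, and rearranging isolates $\tilde{d}_J^{(j)\star} \geq \tilde{D}$ after dividing by $q_R - f\tilde{D} > 0$. With this in hand, no interior symmetric Cournot--Nash equilibrium can exist, since the two best-response inequalities $\tilde{d}_J^{(1)\star} \geq \tilde{d}_P + \tilde{d}_J^{(2)\star}$ and $\tilde{d}_J^{(2)\star} \geq \tilde{d}_P + \tilde{d}_J^{(1)\star}$ sum to $0 \geq 2\tilde{d}_P$, contradicting $\tilde{d}_P > 0$. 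I would then exclude the corner where some JIT LP abstains (the other would face only $\tilde{d}_P$, whose monopoly best response from Proposition A.8 is strictly positive), leaving the endowment-binding profile. Here the quasiconcavity of $u_J$ in the deposit (the unique-maximizer half of Lemma A.7) is essential: if the rival deposits $\tilde{e}_J$ then $\tilde{D} = \tilde{d}_P + \tilde{e}_J > \tilde{e}_J$, and since the unconstrained optimum satisfies $\tilde{d}_J^{(j)\star} \geq \tilde{D} > \tilde{e}_J$ (or is infinite when $q_R \leq f\tilde{D}$), $u_J$ is increasing throughout $[0,\tilde{e}_J]$, so the constrained best response is $\tilde{e}_J$. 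By symmetry both JIT LPs deposit their full endowment, and this profile is the unique equilibrium of the deposit stage.

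Finally I would solve the trader's problem taking these deposits as given. The trader, who cannot observe how many JIT LPs arrive, faces pool depth $\tilde{d}_P$ with probability $(1-\pi)^2$, $\tilde{d}_P + \tilde{\nu}_C\tilde{d}_P$ (the single-arrival deposit from the $\mathsf{NA}_{-j}$ subgame) with probability $2\pi(1-\pi)$, and $\tilde{d}_P + 2\tilde{e}_J$ with probability $\pi^2$. Substituting these into $\delta_S$ and applying the scaling argument (the Cournot analogue of Lemma A.6) to normalize $\tilde{d}_P$ to $1$ reduces the trader's first-order condition to $M_{TC}(\mu) = (1+f)/\zeta_U$; since $M_{TC}$ is decreasing with $M_{TC}(0)>1$ and vanishing limit, this has a unique root, which defines $\tilde{\mu}_C(\pi)$ and yields $(q_R,q_S)^\star = (\tilde{\mu}_C(\pi)\tilde{d}_P, 0)$.

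The main obstacle is not the core inequality, which is routine, but the equilibrium bookkeeping: verifying that the endowment-binding profile is a genuine best response on the compact set $[0,\tilde{e}_J]$ rather than merely an unconstrained stationary point (this needs quasiconcavity, not just first-order conditions) and cleanly excluding every asymmetric and corner profile so that uniqueness holds. I would also need to confirm that along the equilibrium path the relevant regime of Lemma A.7 applies: when $q_R > f\tilde{D}$ the interior formula is used, and when $q_R \leq f\tilde{D}$ the best response diverges so the endowment binds a fortiori --- in either case the conclusion is the same, which should be checked against the range $\zeta_U \in [\underline{\zeta},\overline{\zeta}]$ imposed in Assumption 2.
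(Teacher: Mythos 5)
Your proposal is correct and follows essentially the same route as the paper's proof: reduction of each JIT LP's problem to the monopolist best response with effective depth $\tilde{D}=\tilde{d}_P+\tilde{d}_J^{(-j)}$, the majority-share inequality $\tilde{d}_J^{(j)\star}\geq\tilde{D}$ ruling out interior Cournot profiles by summation, exclusion of abstention corners, the quasiconcavity argument pinning the constrained best response at $\tilde{e}_J$, and the homogeneity-based normalization yielding the trader's first-order condition $M_{TC}(\mu)=(1+f)/\zeta_U$. The minor points you flag (square roots valid since $(1-f)\tilde{D}\leq\sqrt{(1+f)\tilde{D}(\tilde{D}+q_R)}$ holds trivially when $f\geq1$, and the divergent best response when $q_R\leq f\tilde{D}$ forcing the endowment to bind a fortiori) are consistent with the paper's treatment.
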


\begin{lemma}
Let $\tilde{d}_P\in[0,\tilde{e}_P]$. The unique non-trivial Nash equilibrium of the trader--JIT LP subgame when $(\omega,\omega_j)=(\mathsf{UB},\mathsf{A}_{-j})$ is given by
\begin{gather*}
    (q_R,q_S)^\star(\tilde{d}_P,\mathsf{UB}) = (0,p\cdot\tilde{\mu}_C(\pi)\cdot\tilde{d}_P) \\[0.25\baselineskip]
    \tilde{d}_J^{(j)^\star}(\tilde{d}_P,(q_R,q_S);(\mathsf{US},\mathsf{A}_{-j})) = \begin{dcases}
        0 & \tilde{d}_P = 0 \\
        \tilde{e}_J & \tilde{d}_P > 0
    \end{dcases}.
\end{gather*}
\end{lemma}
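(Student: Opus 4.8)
The plan is to prove Lemma A.18 by the buy--sell symmetry used throughout the appendix (as with results A.9--A.12 relative to A.5--A.8), reducing the $\mathsf{UB}$ case to the $\mathsf{US}$ case already settled in Lemma A.17. Following the note after results A.9--A.12, I would pass to the price-adjusted ``stable-coin'' variables $\tilde{v}_P = p\tilde{d}_P$ and $\tilde{v}_J^{(j)} = p\tilde{d}_J^{(j)}$. The engine of the symmetry is the algebraic identity obtained by writing $\tilde{d} = \tilde{v}/p$ in the buy-side trading function,
\begin{align*}
    p\cdot\delta_R(q_S,\tilde{d}) = \frac{p\tilde{d}\,q_S}{p\tilde{d}+q_S} = \frac{\tilde{v}\,q_S}{\tilde{v}+q_S},
\end{align*}
which is exactly the sell-side form $\delta_S(q_R,\tilde{d})$ under the identification $q_S\leftrightarrow q_R$ and $\tilde{v}\leftrightarrow\tilde{d}$. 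Substituting this into the explicit $\mathsf{UB}$ utilities from the competitive subsection, I would verify that each arriving JIT LP's utility $u_J^{(j)}(\,\cdot\,;(\mathsf{UB},\mathsf{A}_{-j}))$ and the uninformed trader's utility, once rewritten in $(\tilde{v}_P,\tilde{v}_J^{(j)},q_S)$, coincide term-for-term with their $\mathsf{US}$ counterparts in $(\tilde{d}_P,\tilde{d}_J^{(j)},q_R)$.

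Given that coincidence, the analysis of Lemma A.17 transfers verbatim. The buy-side homogeneity/scaling reductions (Lemmas A.9 and A.10) let me normalize $\tilde{v}_P$ to $1$; the single-JIT-LP best response retains its closed form with $\tilde{D} = \tilde{v}_P + \tilde{v}_J^{(-j)}$ in the role of the aggregate liquidity; and the inequality chain establishing $\tilde{d}_J^{(j)\star}\geq\tilde{D}$ reappears unchanged, so an interior Cournot--Nash profile would again force $0\geq\tilde{v}_P$, contradicting $\tilde{d}_P>0$. Quasiconcavity of each JIT LP's utility in its own deposit then pins the unique symmetric equilibrium to both JIT LPs depositing their endowment, i.e.\! $\tilde{d}_J^{(j)\star}=\tilde{e}_J$ when $\tilde{d}_P>0$ (and $0$ when $\tilde{d}_P=0$, by Assumption 1). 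Anticipating the extra $2\hat{\nu}(\tilde{d}_P)\tilde{v}_P$ of JIT liquidity, the trader's first-order condition reduces to $M_{TC}(\mu)=(1+f)/\zeta_U$ in the normalized swap-to-liquidity ratio $\mu$, whose unique solution is $\tilde{\mu}_C(\pi)$; unwinding the normalization gives $q_S^\star = \tilde{\mu}_C(\pi)\,\tilde{v}_P = p\,\tilde{\mu}_C(\pi)\,\tilde{d}_P$, which is the source of the extra factor of $p$ in the stated outcome.

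The one place where the ``by symmetry'' slogan could conceal an error, and hence the step I would carry out most carefully, is the bookkeeping of the $p$ factors: the stable-coin swap variable $q_S$ must be scaled against $\tilde{v}_P = p\tilde{d}_P$ rather than against $\tilde{d}_P$ itself, and both the JIT LP's deposit optimum and the trader's objective must carry this rescaling consistently. A single dropped or spurious $p$ would yield the wrong proportionality constant between $q_S^\star$ and $\tilde{d}_P$. Once this accounting is confirmed, no analysis beyond Lemma A.17 is needed, and existence, uniqueness, and the displayed equilibrium strategies follow immediately.
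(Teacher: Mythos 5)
Your proposal is correct and takes essentially the same route as the paper: the paper disposes of this lemma with a one-line appeal to buy--sell symmetry with Lemma A.17, using precisely the parameterization $\tilde{v}_P=p\tilde{d}_P$, $\tilde{v}_J^{(j)}=p\tilde{d}_J^{(j)}$ that you carry out explicitly. The only cosmetic caveat is that under your identification the buy-side utilities agree with their sell-side counterparts up to positive multiplicative constants (a factor of $p$ for the JIT LPs, and $\zeta_U$ versus $\zeta_U^{-1}p$ for the trader) rather than literally term-for-term, which of course leaves all best responses, the corner solution at $\tilde{e}_J$, and the fixed point $\tilde{\mu}_C(\pi)$ of $M_{TC}(\mu)=(1+f)/\zeta_U$ unchanged, and correctly produces the extra factor of $p$ in $q_S^\star$.
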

\begin{proof}
The proof of Lemma A.18 is symmetric to the proof of Lemma A.17.
\end{proof}

We now characterize the best-response of the passive LPs. Suppose that $k<N$, $\mathcal{U}(\pi,d_P(k))\geq0$, and $\mathcal{U}(\pi,d_P(k+1))<0$. Let $\mathbf{d}_P$ be a vector of passive LP deposit amounts such that $1^\top\mathbf{d}_P=d_P(k)=ke_P/N$. Then it must be that $k$ passive LPs chose to provide liquidity while the remaining did not. If a passive LP that chose to provide liquidity wishes to change their strategy, then they would go from nonnegative utility to zero utility (as a result of not participating; recall Assumption 1), so they have no profitable deviation. If a passive LP that chose to not provide liquidity wishes to change their strategy, then they would go from zero utility to negative utility, so they have no profitable deviation. Thus $\mathbf{d}_P$ corresponds to an equilibrium strategy for the passive LPs that yields a total of $d_P(k)$ units of passive liquidity provided.

Now suppose that $\mathbf{d}_P$ is an equilibrium strategy for the passive LPs that yields a total of $d_P(k)$ units of passive liquidity provided. Then it must be that $k$ passive LPs chose to provide liquidity while the remaining did not. Each passive LP who chose to provide liquidity has no profitable deviation if their current utility is nonnegative (since not depositing yields zero utility; Assumption 1). Each passive LPs who chose to not provide liquidity has no profitable deviation if their utility from depositing (given the contribution of $k$ other passive LPs) is negative. We must then have $\mathcal{U}(\pi,d_P(k))\geq0$ and $\mathcal{U}(\pi,d_P(k+1))<0$.

The case when $k=N$ is proved similarly.

\subsection{Proof of Theorem \ref{thm:threshold-comp}}

The passive LPs' per-unit fee revenue scaled by $f^{-1}$ in the complete absence of a JIT LP ($\pi=0$) is the total trading volume attracted by the DEX when an uninformed trader arrives when $\pi=0$, which is given by
\begin{gather*}
    V_0 \equiv \tilde{\mu}(0)+\frac{\tilde{\mu}(0)}{1+\tilde{\mu}(0)} = \left(\frac{\zeta_U}{1+f}\right)^{1/2}-\left(\frac{1+f}{\zeta_U}\right)^{1/2} = \frac{\mathcal{R}(0)}{f}.
\end{gather*}
Fix $\tilde{d}_/P$ and let $\hat{\nu}\equiv\hat{\nu}(\tilde{d}_P)=\tilde{e}_J/\tilde{d}_P$. Then passive LPs' per-unit fee revenue scaled by $f^{-1}$ given a JIT LP arrival probability of $\pi$ is the total trading volume attracted by the AMM when an uninformed trader arrives given a JIT LP arrival probability of $\pi$, which can be expressed as a function of the equilibrium uninformed trade size $\mu\equiv\tilde{\mu}(\pi)$:
\begin{align*}
    V_C(\mu) &\equiv (1-\pi)^2\left(\mu+\frac{\mu}{1+\mu}\right)+2\pi(1-\pi)\cdot\frac{1}{1+\nu(\mu)}\left(\mu+\frac{(1+\nu(\mu))\mu}{1+\nu(\mu)+\mu}\right) \\
    &\hspace{1cm}+\pi^2\cdot\frac{1}{1+2\hat{\nu}}\left(\mu+\frac{(1+2\hat{\nu})\mu}{1+\hat{\nu}+\mu}\right) \\[0.25\baselineskip]
    &= (1-\pi)^2\left(\mu+\frac{\mu}{1+\mu}\right)+2\pi(1-\pi)\left[\left(\frac{1+\mu}{1+f}\right)^{1/2}-\left(\frac{1+f}{1+\mu}\right)^{1/2}\right] \\
    &\hspace{1cm}+\pi^2\left(\frac{\mu}{1+2\hat{\nu}}+\frac{\mu}{1+2\hat{\nu}+\mu}\right) = \frac{\mathcal{R}(\pi)}{f}.
\end{align*}
Here $\nu(\mu)$ is the JIT LP's normalized deposit size written as a function of the uninformed trader's normalized swap size:
\begin{align*}
    \nu(\mu) = \frac{f(1+\mu)+\mu\sqrt{(1+f)(1+\mu)}}{\mu-f}.
\end{align*}
We require the following lemma.
\begin{lemma}
Fix $f\in\R_+$, $\pi\in[0,1]$, and $\hat{\nu}\in(1,\infty)$. Then $M_{TC}(\mu)\cdot(2+V_C(\mu)^2+V_C(\mu)\sqrt{4+V_C(\mu)^2})$ is increasing in $\mu$ on some interval $\mu\in I$ where $I\to(f,\infty)$ as $\bar{\nu}\to\infty$.
\end{lemma}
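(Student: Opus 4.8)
The plan is to mirror the proof of Lemma A.13 step for step, upgrading its two-point probability space to the three-point space that the competitive model requires. As there, the chain-rule identity $\frac{\d}{\d\mu}\log\!\big(2+V_C^2+V_C\sqrt{4+V_C^2}\big)=2V_C'/\sqrt{4+V_C^2}$ (valid for any differentiable $V_C$) reduces the claimed monotonicity to the single inequality
\begin{align*}
    \frac{2V_C'(\mu)}{\sqrt{4+V_C(\mu)^2}} \;\geq\; \frac{-M_{TC}'(\mu)}{M_{TC}(\mu)},
\end{align*}
which I would establish on the interval $I$ by the same covariance-chaining argument.

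First I would introduce the space $(\Omega,\F,\P)$ with $\Omega=\{\omega_0,\omega_1,\omega_2\}$ and $\P(\omega_0)=(1-\pi)^2$, $\P(\omega_1)=2\pi(1-\pi)$, $\P(\omega_2)=\pi^2$, the atoms recording that zero, one, or two JIT LPs arrive. On $\omega_0$ and $\omega_1$ I would reuse the random variables $\Phi,\tilde\Phi,\Psi,\tilde\Psi$ from Lemma A.13 unchanged, so that their $\omega_0,\omega_1$ contributions reproduce the first two summands of $M_{TC}$ and $V_C$. On the new atom I would set $\Phi(\omega_2)=(1+2\hat\nu)^2/(1+2\hat\nu+\mu)^2$ with per-outcome volumes $v(\omega_0),v(\omega_1),v(\omega_2)$, where $v(\omega_2)=\mu/(1+2\hat\nu)+\mu/(1+2\hat\nu+\mu)$, so that $M_{TC}=\E[\Phi]$ and $V_C=\E[v]$. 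The key algebraic observation is that, writing $a\equiv1+\mu/(1+2\hat\nu)$, one has $v(\omega_2)=a-a^{-1}$ and hence $\Psi(\omega_2):=\sqrt{4+v(\omega_2)^2}=a+a^{-1}$; this is exactly the form of the $\omega_0$ atom (the case $\hat\nu=0$). Choosing $\tilde\Phi=-\Phi'$ and $\tilde\Psi=2v'$ atom-by-atom, this yields the clean ratios $\tilde\Phi(\omega_2)/\Phi(\omega_2)=\tilde\Psi(\omega_2)/\Psi(\omega_2)=2/(1+2\hat\nu+\mu)$, together with $\E[\tilde\Phi]=-M_{TC}'$ and $\E[\tilde\Psi]=2V_C'$.

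With these in hand the argument is the chain
\begin{align*}
    \frac{2V_C'}{\sqrt{4+V_C^2}} \;\geq\; \frac{\E[\tilde\Psi]}{\E[\Psi]} \;\geq\; \E\!\left[\frac{\tilde\Psi}{\Psi}\right] \;\geq\; \E\!\left[\frac{\tilde\Phi}{\Phi}\right] \;\geq\; \frac{\E[\tilde\Phi]}{\E[\Phi]} \;=\; \frac{-M_{TC}'}{M_{TC}},
\end{align*}
where the first step uses $\E[\tilde\Psi]=2V_C'$ with Jensen's inequality $\E[\Psi]\geq\sqrt{4+V_C^2}$ (and $V_C'\geq0$); the second uses $\Cov[\tilde\Psi/\Psi,\Psi]\geq0$; the third is the pointwise bound $\tilde\Phi/\Phi\leq\tilde\Psi/\Psi$ (equality at $\omega_0,\omega_2$, strict at $\omega_1$ for $\mu>0$); and the fourth uses $\Cov[\tilde\Phi/\Phi,\Phi]\leq0$.

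The main obstacle is the two covariance inequalities. On a two-point space these are automatic once the direction of a single pairwise comparison is fixed, which is why Lemma A.13 needed only to order $\omega_0$ against $\omega_1$; on the three-point space I must instead verify a consistent (anti-)comonotone ordering across all three atoms, and the pairs involving $\omega_2$ hold only when $\hat\nu$ is large relative to $\mu$. Concretely, comonotonicity of $(\tilde\Psi/\Psi,\Psi)$ on the pair $\{\omega_1,\omega_2\}$ forces $2/(1+2\hat\nu+\mu)<1/(1+\mu)$, i.e. $\mu<2\hat\nu-1$, and $\Psi(\omega_2)<\Psi(\omega_1)$, i.e. $1+\mu/(1+2\hat\nu)<\sqrt{(1+\mu)/(1+f)}$ (the $\{\omega_0,\omega_2\}$ pair, by contrast, holds for all $\mu>0$). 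These inequalities, together with their anti-comonotone counterparts for $\Phi$, cut out the interval $I$ of admissible $\mu$. Since $\hat\nu=\tilde e_J/\tilde d_P\geq\bar\nu$, letting $\bar\nu\to\infty$ drives $\hat\nu\to\infty$, which relaxes every $\omega_2$-comparison and pushes the endpoints of $I$ out to $f$ and $\infty$, giving $I\to(f,\infty)$ as claimed.
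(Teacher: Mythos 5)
Your proposal is correct and follows essentially the same route as the paper's proof: the same three-point probability space, the same random variables $\Phi,\tilde\Phi,\Psi,\tilde\Psi$ (the paper lists them explicitly; your derivation via $\tilde\Phi=-\Phi'$, $\tilde\Psi=2v'$ reproduces them, modulo an apparent typo in the paper's $\tilde\Phi(\omega_1)$), the same comonotone/anti-comonotone covariance bounds, and the same chain of inequalities ending with Jensen. Your explicit pairwise conditions (e.g.\! $\mu<2\hat\nu-1$ and $1+\mu/(1+2\hat\nu)<\sqrt{(1+\mu)/(1+f)}$) cutting out the interval $I$ are a welcome sharpening of the paper's unquantified claim that the orderings hold on an interval converging to $(f,\infty)$ as $\hat\nu\to\infty$.
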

\begin{proof}
Recall that $\hat{\nu}\equiv\hat{\nu}(\tilde{d}_P)\geq\bar{\nu}$. It suffices to show that
\begin{align*}
    \frac{\d\log(2+V_C(\mu)^2+V_C(\mu)\sqrt{4+V_C(\mu)^2})}{\d\mu} &\geq -\frac{\d\log M_{TC}(\mu)}{\d\mu} \\[0.25\baselineskip]
    \frac{2\cdot V_C'(\mu)}{\sqrt{4+V_C(\mu)^2}} &\geq \frac{-M_{TC}'(\mu)}{M_{TC}(\mu)}.
\end{align*}
Let $(\Omega,\F,\P)$ where $\Omega=\{\omega_0,\omega_1\,\omega_2\}$, $\F=2^\Omega$, $\P(\omega_0)=(1-\pi)^2$, $\P(\omega_1)=2\pi(1-\pi)$, and $\P(\omega_2)=\pi^2$ be a probability space. Define random variables $\Phi,\tilde{\Phi}:\Omega\to\R$ such that
\begin{gather*}
    \Phi(\omega) = \begin{dcases}
        \frac{1}{(1+\mu)^2} & \omega=\omega_0 \\[0.25\baselineskip]
        \frac{(2+\mu)\sqrt{(1+f)(1+\mu)}}{2(1+\mu)^2} & \omega=\omega_1 \\[0.25\baselineskip]
        \frac{(1+2\hat{\nu})^2}{(1+2\hat{\nu}+\mu)^2} & \omega=\omega_2
    \end{dcases} \\[0.25\baselineskip]
    \tilde{\Phi}(\omega) = \begin{dcases}
        \frac{2}{(1+\mu)^3} & \omega=\omega_0 \\[0.25\baselineskip]
        \frac{(4+\mu)\sqrt{(1+f)(1+\mu)}}{4(1+\mu)^2} & \omega=\omega_1 \\[0.25\baselineskip]
        \frac{2(1+2\hat{\nu})^2}{(1+2\hat{\nu}+\mu)^3} & \omega=\omega_2 
    \end{dcases}.
\end{gather*}
Then we have
\begin{align*}
    \frac{\tilde{\Phi}(\omega)}{\Phi(\omega)} = \begin{dcases}
        \frac{2}{1+\mu} & \omega=\omega_0 \\[0.25\baselineskip]
        \frac{2}{(1+\mu)(2+\mu)} & \omega=\omega_1 \\[0.25\baselineskip]
        \frac{2}{1+2\hat{\nu}+\mu} & \omega=\omega_2
    \end{dcases}.
\end{align*}
Note that as $\hat{\nu}\to\infty$, $\frac{\tilde{\Phi}(\omega_0)}{\Phi(\omega_0)} \geq \frac{\tilde{\Phi}(\omega_1)}{\Phi(\omega_1)}\geq\frac{\tilde{\Phi}(\omega_2)}{\Phi(\omega_2)}$ while $\Phi(\omega_0) \leq \Phi(\omega_1)\leq\Phi(\omega_2)$ on an interval $I$ that converges to $(f,\infty)$ as $\hat{\nu}\to\infty$, so it follows that 
\begin{align*}
    \Cov\left[\frac{\tilde{\Phi}}{\Phi},\Phi\right]&\leq0 \\[0.25\baselineskip]
    \frac{\E[\tilde{\Phi}]}{\E[\Phi]} &\leq \E\left[\frac{\tilde{\Phi}}{\Phi}\right].
\end{align*}
Define random variables $\Psi,\tilde{\Psi}:\Omega\to\R$ (on the same probability space) such that
\begin{gather*}
    \Psi(\omega) = \begin{dcases}
        \sqrt{4+\left(\mu+\frac{\mu}{1+\mu}\right)^2} = 1+\mu+\frac{1}{1+\mu} & \omega=\omega_0 \\[0.25\baselineskip]
        \sqrt{4+\left(\sqrt{\frac{1+\mu}{1+f}}-\sqrt{\frac{1+f}{1+\mu}}\right)^2} = \sqrt{\frac{1+\mu}{1+f}}+\sqrt{\frac{1+f}{1+\mu}} & \omega=\omega_1 \\[0.25\baselineskip]
        \sqrt{4+\left(\frac{\mu}{1+2\hat{\nu}}+\frac{\mu}{1+2\hat{\nu}+\mu}\right)^2} = 2+\frac{\mu^2}{(1+2\hat{\nu})(1+2\hat{\nu}+\mu)} & \omega=\omega_2
    \end{dcases} \\[0.25\baselineskip]
    \tilde{\Psi}(\omega) = \begin{dcases}
        2\left(1+\frac{1}{(1+\mu)^2}\right) & \omega=\omega_0 \\[0.25\baselineskip]
        \frac{1}{1+\mu}\left(\sqrt{\frac{1+\mu}{1+f}}+\sqrt{\frac{1+f}{1+\mu}}\right) & \omega=\omega_1 \\[0.25\baselineskip]
        2\left(\frac{1}{1+2\hat{\nu}}+\frac{1+2\hat{\nu}}{(1+2\hat{\nu}+\mu)^2}\right) & \omega=\omega_2
    \end{dcases}.
\end{gather*}
Then we have
\begin{align*}
    \frac{\tilde{\Psi}(\omega)}{\Psi(\omega)} = \begin{dcases}
        \frac{2}{1+\mu} & \omega=\omega_0 \\[0.25\baselineskip]
        \frac{1}{1+\mu} & \omega=\omega_1 \\[0.25\baselineskip]
        \frac{2}{1+2\hat{\nu}+\mu} & \omega=\omega_2
    \end{dcases}.
\end{align*}
Note that as $\hat{\nu}\to\infty$, $\frac{\tilde{\Psi}(\omega_0)}{\Psi(\omega_0)}\geq\frac{\Tilde{\Psi}(\omega_1)}{\tilde{\Psi}(\omega_1)}\geq\frac{\Tilde{\Psi}(\omega_2)}{\tilde{\Psi}(\omega_2)}$ and $\Psi(\omega_0)\geq\Psi(\omega_1)\geq\Psi(\omega_2)$ on an interval $I$ that converges to $(f,\infty)$ as $\hat{\nu}\to\infty$, so it follows that
\begin{align*}
    \Cov\left[\frac{\Tilde{\Psi}}{\Psi},\Psi\right] &\geq 0 \\[0.25\baselineskip]
    \frac{\E[\tilde{\Psi}]}{\E[\Psi]} &\geq \E\left[\frac{\tilde{\Psi}}{\Psi}\right].
\end{align*}
Observe that $\E[\tilde{\Phi}/\Phi] \leq \E[\tilde{\Psi}/\Psi]$. Chaining everything together yields
\begin{align*}
    \frac{2\cdot V_C'(\mu)}{\sqrt{4+V_C(\mu)^2}} \geq \frac{\E[\tilde{\Psi}]}{\E[\Psi]} &\geq \E\left[\frac{\tilde{\Psi}}{\Psi}\right] \geq \E\left[\frac{\tilde{\Phi}}{\Phi}\right] \geq \frac{\E[\tilde{\Phi}]}{\E[\Phi]} = \frac{-F'_c(\mu)}{M_{TC}(\mu)}
\end{align*}
as desired, noting that $\E[\Psi]\geq\sqrt{4+V_C(\mu)^2}$ due to Jensen's inequality.
\end{proof}

Fix $\mu\in(f,\infty)$. Let $\zeta_{UC}(\mu)$ be the private value shock size such that the uninformed trader's equilibrium swap size is $\mu$. The first-order condition implies that
\begin{align*}
    \zeta_{UC}(\mu) = \frac{1+f}{M_{TC}(\mu)},
\end{align*}
so $\zeta_{UC}(\mu)$ is well-defined. Since $V_0$ is increasing in $\zeta_U$ and $\lim_{\zeta_U\to\infty}V_0=\infty$, there exists a unique value of $\zeta_U\in(1+f,\infty)$ such that $V_C(\mu)=V_0$ under $\zeta_U$; let us denote it $\bar{\zeta}_{UC}(\mu)$. Then $V_C(\mu)\geq V_0$ if and only if $\zeta_{UC}(\mu)\leq\bar{\zeta}_{UC}(\mu)$. The expression for $V_0$ yields
\begin{gather*}
    \bar{\zeta}_U(\mu) = (1+f)\left(1+\frac{V_C(\mu)^2+V_C(\mu)\cdot\sqrt{4+V_C(\mu)^2}}{2}\right).
\end{gather*}
Note that
\begin{align*}
    2 &\leq M_{TC}(\mu)\cdot\left(2+V_C^2(\mu)+V_C(\mu)\cdot\sqrt{4+V_C^2(\mu)}\right) \\[0.25\baselineskip]
    \frac{1}{M_{TC}(\mu)} &\leq \frac{2+V_C^2(\mu)+V_C(\mu)\cdot\sqrt{4+V_C^2(\mu)}}{2} \\[0.25\baselineskip]
    \zeta_{UC}(\mu) &\leq \bar{\zeta}_{UC}(\mu).
\end{align*}
We now have two cases:
\begin{itemize}
    \item Let $\underline{\mu}$ be the equilibrium trade size when $\zeta_U=\underline{\zeta}$. If $M_{TC}(\underline{\mu})\cdot\left(2+V_C^2(\underline{\mu})+V_C(\underline{\mu})\cdot\sqrt{4+V_C^2(\underline{\mu})}\right)\geq 2$, then by Lemma A.19, we have $\zeta_{UC}(\mu)\leq\bar{\zeta}_{UC}(\mu)$, corresponding to the first case of the theorem.
    \item Note that $M_{TC}(\mu)=O(\mu^{-1/2})$ and $V_C(\mu)=\Omega(\mu^{1/2})$, so it follows that
    \begin{align*}
        \lim_{\mu\to\infty}M_{TC}(f)\cdot\left(2+V_C^2(f)+V_C(f)\cdot\sqrt{4+V_C^2(f)}\right)=\infty.
    \end{align*}
    If $M_{TC}(f)\cdot\left(2+V_C^2(f)+V_C(f)\cdot\sqrt{4+V_C^2(f)}\right)< 2$, then by Lemma A.19 and the above asymptotic analysis, there exists a unique $\mu^\star\in(f,\infty)$ such that $\zeta_U(\mu^\star)=\bar{\zeta}_U(\mu^\star)$. 
    \begin{itemize}
        \item If $\mu<\mu^\star$, then $\zeta_{UC}(\mu)$ $>\bar{\zeta}_{UC}(\mu)$, so $V_C(\mu)<V_0$ under $\zeta_U=\zeta_{UC}(\mu)$: for small trade sizes and thus small shock sizes, we have crowding out.
        \item If $\mu\geq\mu^\star$, then $\zeta_{UC}(\mu)$ $\leq\bar{\zeta}_{UC}(\mu)$, so $V_C(\mu)\geq V_0$ under $\zeta_U=\zeta_{UC}(\mu)$: for large trade sizes and thus large shock sizes, we have complementing.
    \end{itemize}  
    This corresponds to the second and third cases of the theorem, depending on the values of $\underline{\zeta}$ and $\overline{\zeta}$.
\end{itemize}

It remains to compare the thresholds in the monopolist and competitive settings. Note that $V(\mu)\geq V_C(\mu)$ for all $\mu\in I$ where $I\to(f,\infty)$ as $\hat{\nu}\to\infty$, so it follows that $\Bar{\zeta}_U(\mu)\geq\bar{\zeta}_{UC}(\mu)$ for $\hat{\nu}$ sufficiently large. Since $M_T(\mu)\leq M_{TC}(\mu)$, we have $\zeta_U(\mu)\leq\zeta_{UC}(\mu)$. It follows that if $\zeta_U(\mu)\geq\bar{\zeta}_U(\mu)$, then $\zeta_{UC}\geq$ $\bar{\zeta}_{UC}(\mu)$. Thus if the monopolist JIT LP complements the passive LPs under $\zeta_U$, then the competing JIT LPs complement the passive LPs under $\zeta_U$ as well. Since there exists at most one threshold for $\zeta_U$ in both the monopolist and competitive cases where the regime switches from crowding out to complementing, it follows that $\zeta^\star_C(f,\pi,d_P)\leq\zeta^\star(f,\pi)$ for all $d_P\in[0,e_P]$.

\end{document}